\newcommand*{\lambdabox}{$\lambda^\Box$}
\newcommand*{\lambdabracket}{$\lambda^{[-]}$}
\newcommand*{\Circ}{\bigcirc}
\newcommand*{\lambdacircle}{$\lambda^\Circ$}
\newcommand*{\IK}{$\mathrm{IK}$}
\newcommand*{\N}{\mathbb{N}}
\newcommand*{\tuple}[1]{{\langle{#1}\rangle}}
\newcommand*{\den}[1]{{\llbracket{#1}\rrbracket}}
\newcommand*{\iso}{\mathrel{\cong}}
\newcommand*{\after}{\mathbin{\circ}}
\newcommand*{\tensor}{\mathbin{\otimes}}
\newcommand*{\quo}{\lq}
\newcommand*{\unq}{{,}}
\newcommand*{\rg}{\operatorname{rg}}
\newcommand*{\dom}{\operatorname{dom}}
\newcommand*{\ob}{\operatorname{ob}}
\newcommand*{\ev}{\operatorname{ev}}
\newcommand*{\A}{\mathcal{A}}
\newcommand*{\B}{\mathcal{B}}
\newcommand*{\E}{\mathcal{E}}
\newcommand*{\V}{\mathcal{V}}
\newcommand*{\W}{\mathcal{W}}
\newcommand*{\VV}{\mathfrak{V}}
\newcommand*{\CC}{\mathfrak{C}}
\newcommand*{\PSh}[1]{\widehat{#1}}
\newcommand*{\underlying}[1]{{#1}_0}
\newcommand*{\catname}[1]{{\normalfont\textbf{#1}}}
\newcommand*{\Cat}{\catname{Cat}}
\newcommand*{\twoCat}{\catname{2Cat}}
\newcommand*{\MonCat}{\catname{MonCat}}
\newcommand*{\Set}{\catname{Set}}
\newcommand*{\VCat}{\V\mbox{-}\catname{Cat}}
\newcommand*{\WCat}{\W\mbox{-}\catname{Cat}}
\newcommand*{\MontwoCat}{\catname{Mon2Cat}}
\newcommand*{\SMCat}{\catname{SMCat}}
\newcommand*{\SMCCat}{\catname{SMCCat}}
\newcommand*{\SMtwoCat}{\catname{SM2Cat}}
\newcommand*{\Mon}{\operatorname{Mon}}
\newcommand*{\SM}{\operatorname{SM}}
\newcommand*{\MonVCat}{\V\mbox{-}\catname{MonCat}}
\newcommand*{\SMVCat}{\V\mbox{-}\catname{SMCat}}
\newcommand*{\SMCVCat}{\V\mbox{-}\catname{SMCCat}}
\newcommand*{\NorSMCCat}{\catname{NorSMCCat}}
\newcommand*{\PInj}{\catname{PInj}}
\begin{document}

\begin{frontmatter}
    \title{Modality via Iterated Enrichment}
    \author{Yuichi Nishiwaki\thanksref{email1}}
    \address{Department of Computer Science, University of Tokyo}
    \author{Yoshihiko Kakutani\thanksref{email2}}
    \address{Department of Computer Science, University of Tokyo}
    \author{Yuito Murase\thanksref{email3}}
    \address{Department of Computer Science, University of Tokyo}
    \thanks[email1]{Email: \href{mailto:nyuichi@is.s.u-tokyo.ac.jp} {\texttt{\normalshape nyuichi@is.s.u-tokyo.ac.jp}}}
    \thanks[email2]{Email: \href{mailto:kakutani@is.s.u-tokyo.ac.jp} {\texttt{\normalshape kakutani@is.s.u-tokyo.ac.jp}}}
    \thanks[email3]{Email: \href{mailto:murase@lyon.is.s.u-tokyo.ac.jp} {\texttt{\normalshape murase@lyon.is.s.u-tokyo.ac.jp}}}
    \begin{abstract}
        This paper investigates modal type theories by using a new categorical semantics called change-of-base semantics.
        Change-of-base semantics is novel in that it is based on (possibly infinitely) iterated enrichment and interpretation of modality as hom objects.
        In our semantics, the relationship between meta and object levels in multi-staged computation exactly corresponds to the relationship between enriching and enriched categories.
        As a result, we obtain a categorical explanation of situations where meta and object logics may be completely different.
        Our categorical models include conventional models of modal type theory (e.g., cartesian closed categories with a monoidal endofunctor) as special cases and hence can be seen as a natural refinement of former results.
    
        On the type theoretical side, it is shown that Fitch-style modal type theory can be directly interpreted in iterated enrichment of categories.
        Interestingly, this interpretation suggests the fact that Fitch-style modal type theory is the right adjoint of dual-context calculus.
        In addition, we present how linear temporal, S4, and linear exponential modalities are described in terms of change-of-base semantics.
        Finally, we show that the change-of-base semantics can be naturally extended to multi-staged effectful computation and generalized contextual modality a la Nanevski et al.
        We emphasize that this paper answers the question raised in the survey paper by de Paiva and Ritter in 2011, what a categorical model for Fitch-style type theory is like.
    \end{abstract}
    \begin{keyword}
        Lambda Calculus, Curry-Howard Isomorphism, Modal Logic, Enriched Category Theory
    \end{keyword}
\end{frontmatter}

\section{Introduction}

Since the Curry-Howard isomorphism was proposed, intuitionistic logic has attracted many logicians and computer scientists.
As a result, the basic results not only on propositional implicational logic but also on dependent or higher-order extensions were established.
Nevertheless, how to deal with (necessity) modality in intuitionistic logic still remains to be an intricate problem.
Especially the type theoretic aspects (or equivalently, natural deduction systems) of intuitionistic modal logic are rather undeveloped.

For the logical aspects of intuitionistic modality, the currently accepted form of definition first appeared independently in \cite{Fischer84,DBLP:conf/tark/PlotkinS86}.
In their papers, Kripke-style semantics called birelational models is defined, and Hilbert-style axiomatizations for some intuitionistic modal logics including K, S4, and S5 are provided.
The history of intuitionistic modal logic and systematic comparisons of various systems ever proposed can be found in \cite{DBLP:phd/ethos/Simpson94a,Kojima12}.
For the type theoretic aspects, on the other hand, the approaches are diverse.
As of this writing, there are mainly three types of natural deduction systems that have gained popularity, called \emph{Gentzen-style} \cite{Bellin01,DBLP:conf/aplas/Kakutani07}, \emph{dual-context} \cite{DBLP:journals/entcs/PaivaR16,DBLP:conf/lics/Kavvos17}, and \emph{Fitch-style} \cite{Martini96,DBLP:journals/corr/abs-1710-08326} systems.
For the first two systems, their computational and categorical aspects are intensively investigated in a number of papers \cite{Paiva11,DBLP:journals/corr/Kavvos16c}, and applied in multi-staged computation \cite{DBLP:journals/jacm/Davies17,DBLP:journals/tocl/NanevskiPP08}.
However, despite that it was temporally the first system of modal type theory for intuitionistic K, Fitch-style type theory yet remains the least developed regarding its operational and categorical semantics.
In a survey paper \cite{Paiva11}, de Paiva and Ritter asked how one can describe Fitch-style type theory computationally and categorically, not in terms of syntactic translation via Gentzen-style calculus.
Clouston, in a recent paper \cite{DBLP:journals/corr/abs-1710-08326}, gave a partial answer to this question by proposing a sound but incomplete categorical model for Fitch-style modal calculus.
The main aim of this paper is to answer fully this question.
Our slogan is ``\textit{Fitch-style type theory corresponds to iterated enrichment of categories, and boxes are hom objects}''.

In this paper, we offer a categorical semantics to Fitch-style modal type theory from the viewpoint of enriched categories.
In our semantics, if a type $A$ is interpreted as an object in an enriched category $\A$, $\Box A$ is then interpreted as an object in its enriching category $\V$.
Similarly, judgments before and after necessitation are interpreted as morphisms in (the underlying category of) $\A$ and $\V$, respectively.
Following this style, it would be natural to distinguish judgments with and without boxed types also in syntax.
This kind of distinction is quite reasonable especially when we regard the calculus as multi-staged computation, as compile-time and run-time environments are usually different.
For that reason, we first generalize Fitch-style modal type theory by introducing \emph{levels} of judgments.
Just by ignoring levels, we obtain a usual Fitch-style calculus.
In addition to the syntactic ingredient, we also need to formalize \emph{infinitely iterated enrichment} of categories, since modality may be nested arbitrarily deeply.
In this paper, we exploit change-of-base construction of enriched categories to define such a structure.

Our research contributions are summarized as follows.

\begin{itemize}
    \item We discovered that modality in type theory captures the enrichment structure in the categorical semantics.
    In particular, we pointed out that (contextual) modality can be viewed as \emph{external hom objects}.
    \item We defined iterated enrichment based upon the change-of-base construction and introduced two constructions of them.
    \item We compared our semantics syntactically e.g. with dual-context calculi and semantically e.g. with linear non-linear models.
\end{itemize}

\emph{Organization.}
In Section~\ref{sec:typetheory}, we introduce multi-level Fitch-style modal type theory {\lambdabox}, the type theory in question in this paper.
We also compare it with other styles of modal type theories.
In Section~\ref{sec:semantics}, we show that infinitely iterated enrichment of categories gives sound and complete semantics of {\lambdabox}.
In Section~\ref{sec:constructions}, we present two constructions of infinitely iterated enriched categories.
In Section~\ref{sec:models}, we explain how one can model various kinds of computation in terms of our semantics.
Particularly, program-generating programs, $!$-modality in linear calculus, and multi-staged effectful computation are discussed.
In Section~\ref{sec:contextual}, we generalize the type theory and the semantics to contextual modality.

\section{Modal Type Theories}\label{sec:typetheory}

We introduce the multi-level Fitch-style modal type theory {\lambdabox}.
We also compare {\lambdabox} with other variants of modal type theories.

\subsection{Multi-level Fitch-style System}

\begin{figure}[t]
    \begin{multicols}{2}
        \begin{description}
            \item[Levels] $l \in \N$
            \item[Types] $A^{l}, B^{l} ::= \iota^{l} \mid A^{l} \to B^{l} \mid \Box A^{l-1} \ (l \ge 1)$
        \end{description}
        \columnbreak
        \begin{description}
            \item[Contexts] $\Gamma^{l} ::= x_1 : A^{l}_1,\cdots, x_n : A^{l}_n$
            \item[Context Stacks] $\Delta^{l} ::= \Gamma^{l+n-1};\cdots; \Gamma^{l}$
        \end{description}
    \end{multicols}
    \vspace{-10pt}
    \[
        \fbox{$\Delta^{l} \vdash^l M : A^l$}
        \quad
        \AXC{$(x : A^l) \in \Gamma^l$}
        \RL{Var}
        \UIC{$\Delta^{l+1}; \Gamma^l \vdash^l x : A^l$}
        \DisplayProof
        \quad
        \AXC{$\Delta^{l+1}; \Gamma^l, x : A^l \vdash^l M : B^l$}
        \RL{Abs}
        \UIC{$\Delta^{l+1}; \Gamma^l \vdash^l \lambda x.M : A^l \to B^l$}
        \DisplayProof
    \]
    \[
        \AXC{$\Delta^l \vdash^l M : A^l \to B^l$}
        \AXC{$\Delta^l \vdash^l N : A^l$}
        \RL{App}
        \BIC{$\Delta^l \vdash^l MN : B^l$}
        \DisplayProof
        \quad
        \AXC{$\Delta^{l+1}; \cdot \vdash^l M : A^l$}
        \RL{Quo}
        \UIC{$\Delta^{l+1} \vdash^{l+1} \quo M : \Box A^l$}
        \DisplayProof
        \quad
        \AXC{$\Delta^{l+1} \vdash^{l+1} M : \Box A^l$}
        \RL{Unq}
        \UIC{$\Delta^{l+1}; \Gamma^l \vdash^l \unq M : A^l$}
        \DisplayProof
    \]
    \caption{Inference rules for {\lambdabox}}
    \label{fig:lambdabox}
\end{figure}
Figure~\ref{fig:lambdabox} presents the inference rules for the {\lambdabox} calculus which we work on in this paper.
Our {\lambdabox} is a Fitch-style modal natural deduction system\cite{DBLP:journals/jolli/Borghuis98}.
A (hypothetical) judgment in a Fitch-style modal type theory differs from ones in ordinary type theories in that it has on the left side a stack of contexts instead of a single context.
As shown in Figure~\ref{fig:lambdabox}, contexts in a context stacks are delimited with semicolons in the literature.
Most of the time, only the rightmost compartment of a context stack is concerned with type derivation.
Only when introducing a box or unboxing a term, the context stack is (un)shifted and the second rightmost compartment is then focused.
Boxing is only allowed for judgments with the rightmost compartment empty.
This corresponds to the fact that a boxed proposition is an assertion of the validity of the proposition in modal logic.
For the more detailed description of Fitch-style calculi, a survey \cite{Paiva11} should be helpful.
Notations of boxing and unboxing vary with systems: they are denoted by $\mathtt{box}(M)$ and $\mathtt{unbox}(M)$ in \cite{DBLP:journals/sLogica/BiermanP00,DBLP:conf/popl/KimYC06}, $\mathtt{gen}M$ and $\mathtt{ungen}M$ in \cite{Martini96}, $\hat{k}{M}$ and $\check{k}{M}$ in \cite{DBLP:journals/jolli/Borghuis98}, \texttt{<$M$>} and \texttt{\textasciitilde$M$} in MetaML \cite{DBLP:journals/tcs/TahaS00}, $\mathtt{shut}M$ and $\mathtt{open}M$ in \cite{DBLP:journals/corr/abs-1710-08326}, and $\mathtt{next}M$ and $\mathtt{prev}M$ in {\lambdacircle} \cite{DBLP:journals/jacm/Davies17}.
We borrowed terms \emph{quotation} and \emph{unquotion} and notations $\quo M$ or $\unq M$ from Lisp \cite{DBLP:books/lib/Steele90}.

Our {\lambdabox}, however, differs from Borghuis' original formulation in some points.
One difference is that we omitted rules and constructs for higher-order and dependent functionalities since we are only interested in the first-order propositional fragment.
However, removal of the weakening rule from the calculus makes the deduction theorem fail.
To remedy this issue, we modified unquotation to allow the rightmost context to be weakened\footnote{In \cite{Paiva11}, Borghuis' calculus is similarly reduced to the propositional fragment, but this issue is left unfixed.}, which approach is also taken in \cite{DBLP:journals/synthese/HakliN12}.
Another difference is that, in {\lambdabox}, every type and judgment has an intrinsic level which is a natural number.
A level may be viewed as the \emph{stage} of the computation.
Whereas abstraction and application are closed under each level, quotation and unquotation change the level, i.e., the stage of computation by one.
Since all base types are leveled by default, our {\lambdabox} models computation in which there may be different universes of types and primitives for each stage.
In the rest of the paper, we always use the turnstyle for leveled systems by $\vdash^l$ and that for unleveled systems by $\vdash$.
However, levels of types and contexts in a judgment are often omitted because they are easily inferable from the level of the judgment.
We call the corresponding logic of the {\lambdabox} calculus {\IK}.

\begin{lemma}
    The weakening, exchange, contraction, and substitution rules are admissible in {\IK}.
\end{lemma}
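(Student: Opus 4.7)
I would proceed by a simultaneous induction on the derivation tree, but first I would generalize each property so that it may act on \emph{any} compartment of the context stack, not only the rightmost. For instance, substitution would be phrased: if
\[
\Gamma^{l+n-1};\cdots;\Gamma^{l+i},x:B^{l+i};\cdots;\Gamma^l \vdash^l M : A^l
\]
and $\Gamma^{l+n-1};\cdots;\Gamma^{l+i} \vdash^{l+i} N : B^{l+i}$, then
\[
\Gamma^{l+n-1};\cdots;\Gamma^{l+i};\cdots;\Gamma^l \vdash^l M[N/x] : A^l,
\]
and analogously for weakening, exchange, and contraction. This generalization is essential because the Quo and Unq rules shift the focus between compartments, so any induction restricted to the rightmost compartment immediately breaks in those cases.

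The base case Var and the cases Abs, App are routine: since those rules only inspect the rightmost compartment and do not touch the rest of the stack, the induction hypothesis (IH) applies componentwise. The two interesting cases are Quo and Unq. For Quo, the premise is $\Delta^{l+1};\cdot \vdash^l M : A^l$, so an operation on some $\Gamma^{l+i} \in \Delta^{l+1}$ simply propagates to $M$ by IH while the empty rightmost compartment is untouched. For Unq, the premise is $\Delta^{l+1}\vdash^{l+1} M : \Box A^l$ and the conclusion carries an arbitrary $\Gamma^l$ at the right; operations on that $\Gamma^l$ are no-ops (as $\unq M$ does not bind any variable from it and the rule already builds in weakening on the rightmost slot), while operations on $\Delta^{l+1}$ are discharged by IH at level $l+1$.

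For substitution specifically, the delicate point is the Quo case: passing $N$ under the box turns the premise context $\Delta^{l+i};\cdots;\Gamma^{l+i};\cdots;\Gamma^l$ into $\Delta^{l+i};\cdots;\Gamma^{l+i};\cdots;\Gamma^l;\cdot$, and one needs $N$ to remain well-typed across this reshaping. This is why the four properties must be proved together in a single induction: weakening is invoked to lift $N$ along the extra empty compartment introduced by Quo, and contraction/exchange are needed when substitution duplicates or reorders hypotheses inside a compartment in the Abs case. Once the lemmas are bundled, each inductive step merely rebuilds the rule with IH applied to its premises.

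The main obstacle I anticipate is purely bookkeeping: getting the generalized statements right so that (i) the level annotations of $N$ and the hosting compartment match, and (ii) the Quo case really does reduce to an IH call at a context stack with one additional empty compartment. No genuine mathematical difficulty arises once the statements are correctly framed, because {\lambdabox} has no rules that inspect the shape of non-rightmost compartments beyond what Quo and Unq already do uniformly.
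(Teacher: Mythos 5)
Your approach is correct and matches what the paper intends: the paper states this lemma without proof, but the generalized, per-compartment form it displays for substitution (with $N$ typed at level $l+\#\Delta'$ in the stack prefix $\Delta;\Gamma_3$) is exactly the strengthening you identify as necessary, and the induction on derivations with Quo/Unq shifting the compartment index is the standard argument. One small misplacement: since $N$ is typed in the stack prefix ending at the hosting compartment, the Quo case needs no weakening of $N$ (the new empty compartment lies strictly below $N$'s context and only shifts the depth index, mirroring $[N/x]_n\,\quo M = \quo\,[N/x]_{n+1}M$); where weakening genuinely enters is the Var case $M=x$, where $\Delta;\Gamma_3\vdash N$ must be widened to $\Delta;\Gamma_1,\Gamma_3,\Gamma_2\vdash N$.
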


The exchange, contraction, and substitution rules must be operated for each level.
For example, the substitution rule is explained as follows.
\[
    \AXC{$\Delta; \Gamma_1, A, \Gamma_2; \Delta' \vdash^{l} B$}
    \AXC{$\Delta; \Gamma_3 \vdash^{l+\#\Delta'} A$}
    \RL{S}
    \BIC{$\Delta; \Gamma_1, \Gamma_3, \Gamma_2; \Delta' \vdash^{l} B$}
    \DisplayProof
\]
Note that $\#\Delta$ denotes the depth of $\Delta$ as a stack.
The following theorem renders the name of the logic consistent with the convention.
There is an obvious forgetful function $|\cdot|$ from leveled judgments to unleveled judgments that forgets levels.

\begin{theorem}\label{thm:labeling}
    $\Delta \vdash A$ if and only if there exists a level $l$ such that $\Delta \vdash^l A$ and $|\Delta \vdash^l A| = \Delta \vdash A$.
\end{theorem}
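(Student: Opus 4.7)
The proof splits into two directions. For $(\Leftarrow)$, given a leveled derivation of $\Delta \vdash^l A$, applying the forgetful map $|\cdot|$ to every judgment in the derivation converts each leveled inference rule into its unleveled counterpart; hence the result is a derivation of $\Delta \vdash A$ in \IK.

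For $(\Rightarrow)$, the plan is to take an unleveled derivation $\pi$ of $\Delta \vdash A$ and annotate every judgment with a level. The key observation is that the relative levels are rigidly forced by the shape of $\pi$: moving upward across an application of Quo decreases the level by one, moving upward across Unq increases it by one, and Var, Abs, App preserve the level. So, defining $\delta(v)$ at each node $v$ of $\pi$ as the net change (number of Unq nodes minus number of Quo nodes) along the path from the root to $v$, once a root level $l_0$ is fixed each node's assigned level is forced to be $l_0 + \delta(v)$.

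It remains to pick $l_0$ large enough that (i) every level is nonnegative and (ii) at each node $v$, every type appearing in that judgment has box-depth at most its assigned level (this is exactly what the side condition $l \geq 1$ on $\Box A^{l-1}$ amounts to, applied at every subformula). Since $\pi$ is finite, the integer $N := \max_v (D_v - \delta(v))$ is finite, where $D_v$ is the maximum box-depth of any type appearing in the judgment at $v$. Taking $l_0 = \max(0, N)$ makes both (i) and (ii) hold everywhere. Annotating $\pi$ accordingly yields a leveled derivation whose image under $|\cdot|$ is $\pi$ by construction.

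The main obstacle is the bookkeeping at the two points where the level structure is nontrivial: Var, where the variable's leveled type must appear in the correctly leveled compartment, and Unq, where the freshly introduced rightmost compartment $\Gamma^{l-1}$ must be well-formed (every type in $\Gamma$ of box-depth at most $l-1$); both are exactly secured by the choice of $l_0$ via $N$. A useful auxiliary observation to isolate first is a \emph{level shift} property: if $\Delta \vdash^l A$ is derivable in the leveled system, so is the judgment obtained by uniformly raising every level in the derivation by a constant $k \geq 0$. This makes the freedom to take $l_0$ arbitrarily large manifest, and it also supplies the needed step in an alternative proof by induction on $\pi$, where the App case requires unifying the levels returned for the two subderivations.
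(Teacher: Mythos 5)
Your proof is correct and follows the same route the paper (only) sketches: the backward direction is immediate by erasing levels, and the forward direction annotates the unleveled derivation with levels that are forced up to a global offset $l_0$, chosen large enough to keep every level nonnegative and every type's box-depth within its compartment's level. The one caveat the paper flags but you leave implicit is that lifting a base type $\iota$ to $\iota^{l}$ at each required level presupposes that ``appropriate base types'' exist at every level, since the leveled system allows distinct universes of primitives per stage.
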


It is known that unleveled Fitch-style system is equivalent to the smallest intuitionistic normal modal logic w.r.t.\ provability.
The leveled Fitch-style system hence just refines proofs of intuitionistic K by introducing levels.
Note that $|\cdot|$ can be easily extended to a proof-relevant function (e.g., $|\lambda x.M : A^l \to B^l| = \lambda x_l.|M : B^l|$).
With this extension, the only-if direction in the above theorem states that given a proof tree in the unleveled Fitch-style modal calculus, we can reconstruct a proof in the leveled system just by attaching levels assuming appropriate base types.
For example, a proof term for the leveled version of the axiom K is $\lambda x. \lambda y. \quo ((\unq x)(\unq y)) : \Box (A^l \to B^l) \to \Box A^l \to \Box B^l$.
By the following lemma, we can identify two context stacks $\Delta$ and $\cdot; \Delta$, where $\cdot$ is an empty context.

\begin{lemma}
    $\cdot; \Delta \vdash^l A$ if and only if $\Delta \vdash^l A$.
\end{lemma}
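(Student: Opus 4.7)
The plan is to prove each direction by structural induction on the derivation, exploiting the fact that every inference rule of {\lambdabox} touches only the rightmost one or two compartments of the stack, whereas the inserted $\cdot$ sits at the leftmost end and is therefore invisible to every rule.

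For the ``if'' direction I would start from a derivation of $\Delta \vdash^{l} A$ and replay each rule after prepending $\cdot$ to the top of every stack occurring in the derivation. In \textsc{Var}, \textsc{Abs}, \textsc{App}, and \textsc{Unq} the prepended $\cdot$ plays no part in any side-condition, so the same instance of the rule goes through unchanged. For \textsc{Quo}, the original premise $\Delta';\cdot \vdash^{l} M$ becomes $\cdot;\Delta';\cdot \vdash^{l} M$ by the inductive hypothesis, and one application of \textsc{Quo} strips the rightmost $\cdot$ to yield $\cdot;\Delta' \vdash^{l+1} \quo M$, as required.

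For the ``only if'' direction I would proceed by induction on the derivation of $\cdot;\Delta \vdash^{l} A$, checking in each case that the leftmost $\cdot$ is never consumed by the last rule. In \textsc{Var}, \textsc{Abs}, and \textsc{App}, the rule is entirely determined by the rightmost compartments, so exactly the same instance concludes $\Delta \vdash^{l} A$ directly. For \textsc{Unq}, writing $\cdot;\Delta = \Delta';\Gamma^{l}$ forces the decomposition $\Delta = \Delta'';\Gamma^{l}$ with $\Delta' = \cdot;\Delta''$; applying IH to the premise $\cdot;\Delta'' \vdash^{l+1} M$ and then \textsc{Unq} yields $\Delta \vdash^{l} \unq M$. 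For \textsc{Quo} the premise is $\cdot;\Delta;\cdot \vdash^{l} M$; viewing this stack as $\cdot;(\Delta;\cdot)$ and applying IH gives $\Delta;\cdot \vdash^{l} M$, whence \textsc{Quo} gives $\Delta \vdash^{l+1} \quo M$.

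The only slightly delicate point, and the step I expect to require the most care, is the \textsc{Quo} case of the ``only if'' direction: since the stack temporarily grows on the right inside that rule, one has to observe that the ``$\cdot;{-}$'' shape at the leftmost end is preserved by appending a fresh compartment on the right, so that IH can be invoked on the strictly smaller subderivation with the correct shape. A minor bookkeeping point, worth flagging but not genuinely obstructive, is that the base cases (\textsc{Var} on stacks of depth two, or \textsc{Unq}/\textsc{Quo} producing short stacks) rely on reading $\Delta^{l+1}$ in those rules as possibly empty, which is the same convention already implicitly used in the statement of the earlier admissibility lemma.
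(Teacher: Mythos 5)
The paper states this lemma without proof, so there is no official argument to diverge from; your two-directional structural induction is the natural one and is correct, and you rightly isolate the \textsc{Quo} case of the only-if direction as the sole place where the stack shape changes under the last rule, handling it properly by re-parenthesising $\cdot;\Delta;\cdot$ as $\cdot;(\Delta;\cdot)$. The one residual wrinkle beyond the convention you flag is the degenerate case where $\Delta$ is the empty stack: there \textsc{Abs} and \textsc{Unq} (with $\Gamma=\cdot$) can conclude $\cdot \vdash^l A$ while no rule concludes a judgment whose stack has depth zero, so the only-if direction in that case holds only under the paper's implicit identification of $\vdash^l A$ with $\cdot \vdash^l A$ --- a defect of the paper's conventions rather than of your proof, and one that does not affect the nonempty-$\Delta$ instances actually used (e.g.\ for denecessitation).
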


Then the so-called \emph{denecessitation theorem} \cite[Chapter 20]{Jackson05} immediately follows.
We will later review this in Section \ref{sec:semantics}.

\begin{corollary}[Denecessitation]
    If $\vdash^{l+1} \Box A$, then $\vdash^l A$.
\end{corollary}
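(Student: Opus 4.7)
The plan is to apply the Unq rule once and then invoke the preceding lemma to strip off the extra empty compartment that Unq introduces.

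First, I read the hypothesis $\vdash^{l+1} \Box A$ as a derivation with an empty rightmost context, i.e.\ $\cdot \vdash^{l+1} \Box A$; this is exactly the one-compartment context stack at level $l+1$ containing a single empty context. Let $M$ be the proof term, so $\cdot \vdash^{l+1} M : \Box A$ (where the inner $A$ is at level $l$, matching the formation rule $\Box A^{l-1}$ for $l \ge 1$).

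Next, I instantiate the Unq rule
\[
\AXC{$\Delta^{l+1} \vdash^{l+1} M : \Box A^{l}$}
\RL{Unq}
\UIC{$\Delta^{l+1}; \Gamma^l \vdash^l \unq M : A^l$}
\DisplayProof
\]
with $\Delta^{l+1} := \cdot$ and the new rightmost context $\Gamma^l := \cdot$. This yields the derivable judgment $\cdot; \cdot \vdash^l \unq M : A$.

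Finally, I apply the immediately preceding lemma, which tells us that $\cdot; \Delta \vdash^l A$ iff $\Delta \vdash^l A$. Taking $\Delta = \cdot$ at level $l$, the two-compartment stack $\cdot; \cdot$ collapses to the single-compartment stack $\cdot$, producing $\cdot \vdash^l \unq M : A$, which is exactly the desired $\vdash^l A$.

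There is essentially no obstacle here: the only thing one must be careful about is the bookkeeping of levels (the $\Box A$ at level $l+1$ is $\Box A^{l}$, so its unquotation sits at level $l$) and the convention that $\vdash^l A$ means a judgment with empty rightmost context at level $l$, so that the preceding lemma applies to remove the extra empty compartment introduced by Unq.
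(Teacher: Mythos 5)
Your proof is correct and matches the paper's intended argument: the paper derives this corollary as an immediate consequence of the preceding lemma, and the only route is exactly the one you take — apply Unq with an empty new compartment and then collapse the resulting $\cdot;\cdot$ stack via the lemma. The level bookkeeping you note is also handled correctly.
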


Computational behaviors of quotation and unquotation are explained in terms of code generation: quotation creates a code template and unquotation makes holes in a template.
Formally, the dynamics of {\lambdabox} are defined in terms of the following rules for the $\Box$ type in addition to the usual $\beta\eta$ rules.
\begin{align*}
    \unq\quo M &\xrightarrow{\beta_\Box} M & M &\xrightarrow{\eta_\Box} \quo\unq M \text{ if $M : \Box A$}
\end{align*}
Although the $\beta\eta$ rules for the $\Box$ type look intuitive enough, the $\beta$ rule for the $\to$ type must be carefully extended to terms with mixed levels.

\begin{definition}
    Substitution $[N/x]M$ is given by $[N/x]_0M$, where $[N/x]_n$ is inductively defined as follows.
    \begin{align*}
        [N/x]_n{y} &= \begin{cases}
            N & (n = 0 \land x = y) \\
            y & (\text{otherwise})
        \end{cases} &
        [N/x]_n \quo M &= \quo [N/x]_{n+1} M &
        [N/x]_n \unq M &= \begin{cases}
            \unq [N/x]_{n-1} M & (n > 0) \\
            \unq M & (\text{otherwise})
        \end{cases}
    \end{align*}
    Other cases are omitted.
\end{definition}

Our substitution rule is intended to allow two contexts of different levels to contain variables with the same literal names.
Consider a term $(\lambda\textcolor{blue}{x}.\quo\lambda\textcolor{red}{x}.(\textcolor{red}{x}(\unq\textcolor{blue}{x})))y$.
The first occurrence of $\textcolor{red}{x}$ is a constituent of the code template $\quo\lambda\textcolor{red}{x}.(\textcolor{red}{x}(\unq{-}))$ whereas the second one $\textcolor{blue}{x}$ is not (notice that they have different levels), so $y$ is substituted only for the second occurrence, resulting in $\quo\lambda\textcolor{red}{x}.(\textcolor{red}{x}(\unq y))$.
{\lambdabox} enjoys the usual meta-theoretical properties.

\begin{theorem}
    Subject reduction, strong normalization, and the Church-Rosser property hold for the $\beta$ rules.
\end{theorem}

\subsection{Other Systems}\label{sec:other}

Many deductive systems of modal logic other than Fitch-style have been proposed.
Especially, streams of Gentzen-style and dual-context systems are important in the field of computer science.

The Gentzen-style system is basically a usual natural deduction system of intuitionistic propositional logic but together with the following additional inference rule for boxes.
In Gentzen-style, in contrast to Fitch-style, the introduction and elimination rules for $\Box$ are mixed into a single rule; logical harmony a la Dummett is dismissed.
\[
    \AXC{$x_1 : A_1,\cdots, x_n : A_n \vdash M : B$}
    \AXC{$\Gamma \vdash N_1 : \Box A_1 \quad \cdots \quad \Gamma \vdash N_n : \Box A_n$}
    \BIC{$\Gamma \vdash \mathop{\mathtt{box}} x_1,\cdots,x_n \mathrel{\mathtt{be}} N_1,\cdots,N_n \mathrel{\mathtt{in}} M : \Box B$}
    \DisplayProof
\]
This calculus is investigated in some papers \cite{Bellin01,DBLP:conf/aplas/Kakutani07,DBLP:journals/corr/Kakutani16}.
Logical provability of the Gentzen-style modal type theory is equivalent to intuitionistic K.
The desired syntactic properties such as strong normalization and categorical semantics are provided in the papers.
It has been shown that the calculus is sound and complete for cartesian closed categories with lax monoidal endofunctors.
As discussed in \cite{DBLP:journals/corr/Kakutani16}, it is also possible to consider strong monoidal functors with some additional equations.
Here, we refer to the strong monoidal version of the soundness and completeness.

\begin{theorem}
    A \emph{Kripke category} is a cartesian closed category endowed with a strong monoidal (i.e., finite product-preserving) endofunctor.
    Kripke categories are sound and complete for the Gentzen-style modal type theory.
\end{theorem}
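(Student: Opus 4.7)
The plan is a standard soundness/completeness argument tailored to this combined box rule, with soundness by induction on derivations and completeness by a term model construction.

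For soundness, I would fix a Kripke category $(\mathcal{C}, T)$, interpret base types as chosen objects, set $\den{A \to B} = \den{B}^{\den{A}}$ and $\den{\Box A} = T\den{A}$, and extend to contexts via finite products. Variables, abstraction, and application are interpreted by the cartesian closed structure in the standard way, so the only nonroutine case is the Gentzen-style box rule. Given $\den{x_1{:}A_1,\ldots,x_n{:}A_n \vdash M : B} = m : \den{A_1} \times \cdots \times \den{A_n} \to \den{B}$ and $\den{\Gamma \vdash N_i : \Box A_i} = n_i : \den{\Gamma} \to T\den{A_i}$, I would define the interpretation of $\mathtt{box}\,\vec{x}\mathrel{\mathtt{be}}\vec{N}\mathrel{\mathtt{in}} M$ as the composite
\[
    \den{\Gamma} \xrightarrow{\langle n_1,\ldots,n_n\rangle} T\den{A_1} \times \cdots \times T\den{A_n} \xrightarrow{\phi^{-1}} T(\den{A_1}\times\cdots\times\den{A_n}) \xrightarrow{Tm} T\den{B},
\]
where $\phi$ is the (invertible) structure map of the strong monoidal functor $T$. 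Soundness of the equational theory for this rule then reduces to naturality of $\phi$ and functoriality of $T$, together with the standard CCC equations for $\beta\eta$.

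For completeness, I would build the syntactic Kripke category $\mathcal{C}_{\mathrm{IK}}$ whose objects are types and whose morphisms $A \to B$ are $\beta\eta$-equivalence classes of terms $x : A \vdash M : B$, using contexts of length $>1$ via iterated products in the usual manner. The CCC structure is given by the $\to$ and $\times$ types. The endofunctor $T$ acts on objects by $T A = \Box A$, and on a morphism represented by $x : A \vdash M : B$ by the term $y : \Box A \vdash \mathtt{box}\, x \mathrel{\mathtt{be}} y \mathrel{\mathtt{in}} M : \Box B$. The canonical map $T A \times T B \to T(A\times B)$ is given by $\mathtt{box}\, x,y \mathrel{\mathtt{be}} \pi_1 z,\pi_2 z \mathrel{\mathtt{in}} \langle x,y\rangle$, and similarly $1 \to T 1$ from $\mathtt{box}\,{}\mathrel{\mathtt{be}}{}\mathrel{\mathtt{in}}\langle\rangle$. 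Once $\mathcal{C}_{\mathrm{IK}}$ is known to be a Kripke category, completeness follows because a term inhabiting a type in the canonical interpretation is by construction the term itself (up to $\beta\eta$), so any equation valid in all Kripke categories must already hold syntactically.

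The main obstacle is verifying that the endofunctor $T$ on the term model is genuinely \emph{strong} monoidal, not merely lax. This is exactly the point where the strong monoidal version (as opposed to the lax one from \cite{Bellin01,DBLP:conf/aplas/Kakutani07}) requires the additional equations alluded to in \cite{DBLP:journals/corr/Kakutani16}: one must show that the syntactic $\phi$ above has an inverse, which amounts to the derived equations
\[
    \mathtt{box}\,\vec{x}\mathrel{\mathtt{be}}\vec{N}\mathrel{\mathtt{in}}\langle M_1,M_2\rangle = \langle \mathtt{box}\,\vec{x}\mathrel{\mathtt{be}}\vec{N}\mathrel{\mathtt{in}} M_1,\ \mathtt{box}\,\vec{x}\mathrel{\mathtt{be}}\vec{N}\mathrel{\mathtt{in}} M_2\rangle
\]
and analogously for the terminal object, together with the coherence axioms relating $\phi$ to associators and unitors of the product. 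I would assume these equations are part of the equational theory of the strong monoidal Gentzen-style calculus (as made explicit in the cited work), so that the check reduces to routine equational reasoning. Given that, functoriality of $T$ and naturality of $\phi$ are immediate from the substitution lemma for Gentzen-style terms, and soundness of the box rule's equations translates directly into the required commuting diagrams.
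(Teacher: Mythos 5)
The paper does not actually prove this theorem: it is imported from the cited literature (Bellin--de Paiva--Ritter, Kakutani) with the remark that the strong monoidal version needs ``some additional equations,'' so there is no in-paper argument to compare against. Your sketch is the standard one those sources use --- soundness by induction on derivations with the box rule interpreted through the monoidal structure map, completeness via the syntactic category with $TA = \Box A$ --- and it is essentially correct. You also correctly isolate the only non-routine point, namely that the syntactic comparison map $TA \times TB \to T(A\times B)$ is invertible in the term model only if the distributivity equations for $\mathtt{box}$ over pairing and the unit are added to the $\beta\eta$-theory; this is exactly the caveat the paper itself flags, and without those equations one would only recover completeness for the lax monoidal version. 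The remaining checks you defer (functoriality of $T$ and naturality of the comparison map in the term model) do rely on the composition/substitution equations for the $\mathtt{box}$ construct being part of the equational theory, which they are in the cited calculi, so the sketch goes through.
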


Dual-context system was first introduced in \cite{Barber96} to study exponentials of linear logic (IMELL), and later refined by several authors \cite{DBLP:journals/sLogica/BiermanP00,DBLP:conf/lics/Kavvos17} to accommodate more logics including the intuitionistic K and S4.
Judgments in dual-context systems have two contexts separated by $\mid$, the left side of which is called \emph{modal context}.
As well as Fitch-style, only the right-hand side context is used when deriving non-modal constructs.
The box of the dual-context system is characterized by the following two rules.
\[
    \AXC{$\cdot \mid \Gamma \vdash A$}
    \RL{$\Box$-I}
    \UIC{$\Gamma \mid \Gamma' \vdash \Box A$}
    \DisplayProof
    \qquad
    \AXC{$\Gamma \mid \Gamma' \vdash \Box A$}
    \AXC{$\Gamma, A \mid \Gamma' \vdash B$}
    \RL{$\Box$-E}
    \BIC{$\Gamma \mid \Gamma' \vdash B$}
    \DisplayProof
\]
It is also known that a Kripke category can be a model of the dual-context system \cite{DBLP:conf/lics/Kavvos17}.
The interpretation of a dual-context judgment in a Kripke category with $F$ is straightforward if we assign terms to proofs appropriately.
\[
    \den{\Gamma \mid \Gamma' \vdash M : A} : F\den{\Gamma} \times \den{\Gamma'} \to \den{A}
\]
We shall revisit the dual-context system in Section~\ref{sec:linearnonlinear}, and discuss another kind of semantics.

\section{Change-of-base Semantics}\label{sec:semantics}

This section presents our main results, soundness and completeness results of the change-of-base semantics.

\subsection{Idea}\label{sec:idea}

To illustrate our idea briefly, we give an interpretation of the two-level fragment of {\lambdabox} in the change-of-base semantics.
In the following, we assume that for any type judgment all contexts whose level is greater than one are trivial.
We also assume that the quotation and unquotation rules in level $1$ are disabled.

The categorical setting is as follows.
Suppose $\V$ is a cartesian closed category and $\A$ is a cartesian closed $\V$-enriched category, which is a $\V$-category with $\V$-enriched (finite) products and exponentials (defined as the right $\V$-adjoint), with $\V$-natural transformations $\pi : A \times B \Rightarrow A$, $\epsilon : B^A \times A \Rightarrow B$, and $! : A \Rightarrow 1$.
Any judgments of both level $0$ and $1$ are interpreted as morphisms in $\V$.
But types are interpreted as objects either of $\A$ and $\V$ depending on their levels: types of level $0$ become objects in $\A$ and ones of level $1$ in $\V$.

We define an interpretation of judgments by induction on typing derivation.
For readability, we may identify terms with judgments and write just $\den{M}$ for $\den{\Delta \vdash^l M : A}$.
If $M$ is of level $1$ and involves no box types, we interpret it in $\V$ in the conventional way of interpretation of $\lambda^\to$ in cartesian closed categories \cite{Lambek86}.
If $M$ is of level $0$ and its context of level $1$ is trivial, we interpret it in the underlying category $\underlying{\A}$ of $\A$, which is in fact a cartesian closed (ordinary) category, in the usual way again.
Note that a morphism in $\underlying{\A}$ is a morphism in $\V$ by definition: $X \to Y \text{ in $\underlying{\A}$} \sslash 1 \to \A(X,Y) \text{ in $\V$}$.
Therefore, $\den{\Gamma \vdash^0 M : X} : \den{\Gamma} \to \den{X} = 1 \to \A(\den{\Gamma}, \den{X})$.
Otherwise, we interpret it directly in $\V$ as in Figure~\ref{fig:twolevelinterp}.
$\Box$ is interpreted as the monoidal covariant hom functor $\A(1, -)$.
In this way, judgments $\Gamma; \cdot \vdash A$ and $\Gamma \vdash \Box A$ become exactly the same morphism.
We consider this reflects the fact that quotation involves no computational contents and is just a transition of viewpoints on the same morphism, from object level to meta level.

\begin{figure*}[t]
    \begin{center}
        \AXC{$\Gamma';\Gamma \vdash^0 x : A$}
        \RL{Var}
        \UIC{$\Gamma' \xrightarrow{!} 1 \xrightarrow{\pi} \A(\Gamma,A)$}
        \DisplayProof
        \qquad
        \AXC{$\Gamma';\Gamma \vdash^0 MN : B$}
        \RL{App}
        \UIC{$\Gamma' \xrightarrow{\tuple{M,N}} \A(\Gamma,B^A) \times \A(\Gamma,A) \iso \A(\Gamma,B^A \times A) \xrightarrow{\A(\Gamma, \epsilon)} \A(\Gamma, B)$}
        \DisplayProof
    \end{center}
    \begin{center}
        \AXC{$\Gamma';\Gamma \vdash^0 \lambda x.M : A \to B$}
        \RL{Abs}
        \UIC{$\Gamma' \xrightarrow{M} \A(\Gamma \times A,B) \iso \A(\Gamma,B^A)$}
        \DisplayProof
        \qquad
        \AXC{$\Gamma \vdash^1 \quo M : \Box A$}
        \RL{Quo}
        \UIC{$\Gamma \xrightarrow{M} \A(1,A)$}
        \DisplayProof
        \qquad
        \AXC{$\Gamma'; \Gamma \vdash^0 \unq M : A$}
        \RL{Unq}
        \UIC{$\Gamma' \xrightarrow{M} \A(1,A) \xrightarrow{\A(!,A)} \A(\Gamma,A)$}
        \DisplayProof
    \end{center}
    \caption{Interpretation of two-level {\lambdabox} in a $\V$-enriched category $\A$ (with identification of $\den{A}$ and $A$)}
    \label{fig:twolevelinterp}
\end{figure*}

\subsection{Infinitely Enriched Categories}

We establish a mathematically precise formulation of the idea presented in the previous subsection.
The major difficulty in formalization is that we need to model infinitely many universes of the logic with infinitely iteratively enriched categories.
For calculi with finitely many levels, it suffices to construct a concrete chain of iterated enrichment in which a category of level $l$ is enriched over another category of level $l+1$.
One approach to generalize finite iteration of enrichment to infinite is to assume an infinite sequence $\tuple{C_n}_n$ of cartesian closed categories, an infinite sequence $\tuple{\A_n}_n$ of cartesian closed categories $\A_n$ enriched over $C_{n+1}$, and extra equations $\underlying{(\A_n)} = C_n$.
This definition is intuitive but too naive to be compared with other models e.g.\ Kripke categories.
Instead, we exploit the change-of-base construction of enriched categories to resolve this issue.

\begin{proposition}[Change of base]
    For monoidal categories $\V$ and $\W$, a monoidal functor $L : \V \to \W$, and a $\V$-category $\A$, $L$ induces a $\W$-category $L_{\ast}\A$.
    $L_{\ast}\A$ has the same collection of objects as $\A$, and hom objects $L_{\ast}\A(X,Y)$ given by $L\A(X, Y)$.
\end{proposition}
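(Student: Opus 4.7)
The plan is to construct the $\W$-category $L_*\A$ by specifying the identity and composition morphisms, then verify the enriched category axioms by combining the axioms of $\A$ with the coherence conditions of the monoidal functor $L$. Let me write $\phi_0 : I_\W \to L I_\V$ and $\phi_{A,B} : LA \tensor_\W LB \to L(A \tensor_\V B)$ for the structure morphisms of $L$, and $j_X : I_\V \to \A(X,X)$ and $M_{X,Y,Z} : \A(Y,Z) \tensor_\V \A(X,Y) \to \A(X,Z)$ for the identities and composition of $\A$.

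First I would define the $\W$-enriched identity at $X$ as the composite
\[
    I_\W \xrightarrow{\phi_0} L I_\V \xrightarrow{L j_X} L\A(X,X) = L_\ast\A(X,X),
\]
and the $\W$-enriched composition $L_*\A(Y,Z) \tensor_\W L_*\A(X,Y) \to L_*\A(X,Z)$ as
\[
    L\A(Y,Z) \tensor_\W L\A(X,Y) \xrightarrow{\phi_{\A(Y,Z),\A(X,Y)}} L\bigl(\A(Y,Z) \tensor_\V \A(X,Y)\bigr) \xrightarrow{L M_{X,Y,Z}} L\A(X,Z).
\]
Both are canonical and automatic once one sees that the monoidal structure of $L$ is precisely what is needed to transport $\V$-structure across into $\W$-structure.

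Next I would verify the associativity axiom for composition in $L_*\A$. Expanding the two sides of the pentagon gives an outer diagram with two kinds of pieces: instances of $L$ applied to the associativity square of $\A$, and instances involving $\phi$ against the associator of $\W$ (respectively $\V$). The associativity hexagon for $\phi$ (one of the monoidal functor coherence axioms) lets me slide the $\phi$'s past the associators, reducing the $\W$-diagram to $L$ applied to the $\V$-associativity square in $\A$, which commutes by assumption. The unit axioms are analogous: they follow from the unit coherence axioms of $L$ combined with $L$ applied to the left and right unit laws in $\A$.

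The main obstacle is simply the bookkeeping of the coherence diagrams: each axiom of $L_*\A$ unfolds into a sizable diagram that one must decompose into an ``outer'' $\W$-level part (handled by coherence of $\phi$) and an ``inner'' part of the form $L(\text{diagram in }\V)$ (handled by $\V$-functoriality of $L$ and the axioms of $\A$). Nothing is conceptually difficult, but one has to be careful to apply $\phi$'s naturality at the right places and to ensure the split between the two layers is clean. Once these diagrams are verified, $L_*\A$ is by construction a $\W$-category on the object set of $\A$ with hom objects $L\A(X,Y)$, as required.
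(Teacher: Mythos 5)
Your construction is correct and is precisely the classical Eilenberg--Kelly change-of-base construction: the unit $I_\W \to LI_\V \to L\A(X,X)$, the composition $L\A(Y,Z) \tensor L\A(X,Y) \to L(\A(Y,Z) \tensor \A(X,Y)) \to L\A(X,Z)$, and the verification of the axioms by splitting each diagram into a coherence square for the monoidal structure of $L$ and $L$ applied to an axiom of $\A$. The paper states this proposition as a known result without proof (deferring to the references on the formal theory of change of base in its appendix), so there is no divergence to report; your argument supplies exactly the standard details being taken for granted.
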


For a symmetric monoidal category $\V$, monoidal $\V$-category and related concepts are defined as usual (see also Appendix \ref{sec:mon-enr}).
In fact, it is known that symmetric monoidal closed categories enriched over a symmetric monoidal closed category behave quite well.
Our approach exploits the following very important result on such enriched categories.

\begin{definition}
    Given a monoidal $\V$-functor $F : \A \to \B$, the \emph{comparison} morphism of $F$ is a $\V$-natural transformation $\theta : \A(I, -) \xrightarrow{F} \B(FI, F-) \xrightarrow{\B(\iota,F-)} \B(I, F-)$
    \footnote{Our definition of comparison morphism is slightly different from \cite{Lucyshyn16}, in which the comparison morphism of $F$ is defined as $\A(I, -) \xrightarrow{\theta} \V(I, F-) \iso F$}.
\end{definition}

\begin{definition}
    A monoidal $\V$-functor $F : \A \to \B$ is \emph{normal}
    \footnote{
        Normality of monoidal functors has nothing to do with \emph{normal forms} in the term calculus, nor \emph{normal} modal logics.
        This terminology dates back to the 1960s \cite{Eilenberg66}.
    } if the comparison morphism of $F$ is an isomorphism.
\end{definition}

\begin{theorem}[\cite{Lucyshyn16}]\label{thm:fund}
    Let $\V$ be a symmetric monoidal closed category.
    The following data are equivalent up to isomorphism.
    \begin{enumerate}
        \item A symmetric monoidal closed $\V$-category $\A$.
        \item A symmetric monoidal closed (ordinary) category $\A$ and a normal symmetric monoidal functor $L : \A \to \V$.
    \end{enumerate}
\end{theorem}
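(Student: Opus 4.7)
The plan is to exhibit two constructions, one for each direction, and verify that they are mutually inverse up to isomorphism. For (1) $\Rightarrow$ (2), given a symmetric monoidal closed $\V$-category $\A$, I would take as the ordinary category its underlying one $\underlying{\A}$, with $\underlying{\A}(X,Y) := \V(I, \A(X,Y))$; the symmetric monoidal closed structure of $\A$ transports to $\underlying{\A}$ because bijections out of $I$ turn the $\V$-enriched universal properties of $\tensor$ and the internal hom into ordinary ones. For the comparison functor, set $L := \A(I,-) : \underlying{\A} \to \V$, with laxator $\A(I,X) \tensor \A(I,Y) \to \A(I, X \tensor Y)$ induced by the $\V$-monoidal structure on $\A$ and unitor $I \iso \A(I,I)$. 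Normality is immediate by construction: the comparison $\underlying{\A}(I, X) \to \V(I, LX)$ unfolds to the identity on $\V(I, \A(I,X))$.

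For (2) $\Rightarrow$ (1), given a normal symmetric monoidal $L : \A \to \V$, I would exploit that an SMC category is canonically self-enriched via its internal hom $[-,-]_\A$, and apply the change-of-base construction (the Proposition above) along $L$. This yields a $\V$-category $L_*\A$ whose hom objects are $L[X,Y]_\A$, and whose composition, identities, symmetric monoidal, and closed structure all transport from $\A$ through the monoidal functor $L$. The crucial check is that the underlying ordinary category of $L_*\A$ is again $\A$: unfolding, $\V(I, L[X,Y]_\A) \iso \A(I, [X,Y]_\A) \iso \A(X,Y)$, where the first iso is normality applied to $[X,Y]_\A$ and the second is the tensor-hom adjunction in $\A$.

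The round-trips are then essentially formal. In one direction, starting from a $\V$-category $\A$ and reconstructing hom objects as $L[X,Y]_{\underlying{\A}} = \A(I, [X,Y]_{\underlying{\A}})$, one recovers $\A(X,Y)$ using that $\A$ is a closed $\V$-category. In the other, $L_*\A(I,-) = L[I,-]_\A \iso L(-)$ via normality combined with the unit coherence of $L$. The main obstacle will be coherence bookkeeping: verifying that associators, unitors, braidings, and the closedness adjunction all correspond under these translations, rather than just the underlying objects and hom-sets. The pivot throughout is using normality to trade hom-sets of $\A$ for global elements in $\V$ of the images of the corresponding objects under $L$; once that dictionary is installed, the remaining diagrams reduce to known coherences in either $\V$ or the self-enrichment of $\A$, which is why the theorem deserves the name of a ``fundamental'' correspondence.
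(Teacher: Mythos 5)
Your proposal is correct and follows essentially the same route as the paper's own (sketched) proof in Appendix~\ref{sec:formal}: the downward direction via the underlying covariant hom functor $\underlying{\A(I,-)}$ (normal by Lemma~\ref{lem:down}), the upward direction via change of base $L_{\ast}\A$ along the self-enrichment of $\A$, and the round-trips via the isomorphisms ${\underlying{\A(I,-)}}_{\ast}\underlying{\A} \iso \A$ and $\underlying{(L_{\ast}\A)} \iso \A$ together with Lemma~\ref{lem:up}. Your normality computations and the use of the tensor-hom adjunction to identify $\V(I, L[X,Y])$ with $\A(X,Y)$ match the paper's supporting propositions exactly.
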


\begin{proof}
    We use change of base in the proof of the upward direction.
    See the sketch in Appendix \ref{sec:formal}.
\end{proof}

This result naturally allows us to extend the definition of finitely iterated enrichment to infinite.
Intuitively, an infinitely enriched category is a (co-)limit of the chain of such iterated enrichment.
\[
    \xymatrix@R-=0.5cm{
        \ar@{}[d]_{\cdots} & \VV_3 \ar@{|->}[d] & \VV_2 \ar@{|->}[d] & \VV_1 \ar@{|->}[d] & \VV_0 \ar@{|->}[d]^{\text{underlying cat.}} \\
        \ar@{~>}[ur] & \underlying{(\VV_3)} \ar@{~>}[ur] & \underlying{(\VV_2)} \ar@{~>}[ur] & \underlying{(\VV_1)} \ar@{~>}[ur]^(0.6){\text{enrich}} & \underlying{(\VV_0)} \\
    }
\]

\begin{definition}
    An \emph{infinitely enriched category} $\VV$ is a functor from $\omega$ to the category $\NorSMCCat$ of all symmetric monoidal closed categories and symmetric monoidal normal functors.
    \[
        \xymatrix{\cdots & \VV_3 \ar[l]_{\Box_3} & \VV_2 \ar[l]_{\Box_2} & \VV_1 \ar[l]_{\Box_1} & \VV_0 \ar[l]_{\Box_0}}%
    \]
\end{definition}

\begin{remark}
    Readers may read the definition above coinductively: ``an \emph{infinitely enriched category} is a symmetric monoidal closed category enriched over another infinitely enriched category.''
\end{remark}

We informally say that an infinitely enriched category is a \emph{finitely enriched category} if there exists a natural number $n \in \N$ such that the enrichment structures above the $n$-th enrichment are all trivial (i.e., self-enrichment by the identity functor).
Specifically, say if $n = 2$, we may call them \emph{doubly enriched categories}.
In our terminology, a finitely enriched category is always an infinitely enriched category.

\begin{example}
    Every symmetric monoidal closed (ordinary) category is an infinitely enriched category by self-enrichment.
\end{example}

\begin{remark}
    For any infinitely enriched category $\VV : \omega \to \NorSMCCat$, if $\VV$ has the colimit $\E$, $\E$ works as the ``total'' category that enriches all components $\VV_n$.
    \vspace{10pt}
    \[
        \xymatrix{
            \E \ar@/^0.5pc/@{~>}[rr] \ar@/^1pc/@{~>}[rrr] \ar@/^1.5pc/@{~>}[rrrr] & \cdots & \VV_2 \ar[l] & \VV_1 \ar[l] & \VV_0 \ar[l]
        }
    \]
\end{remark}

To be used for a model for {\lambdabox}, the monoidal structures have to be defined by cartesian products.
In the following, we focus on the case $\V$ is cartesian monoidal.

\begin{definition}
    A cartesian $\V$-category is a $\V$-category $\A$ equipped with $\V$-adjunctions $\Delta \dashv \times$ and $! \dashv 1$, where $\Delta$ is the diagonal 1-cell $\A \to \A \times \A$ in $\VCat$ and $! : \A \to 1$ is the unique 1-cell into the terminal $\V$-category.
\end{definition}

\begin{lemma}
    A cartesian $\V$-category is a symmetric monoidal $\V$-category.
\end{lemma}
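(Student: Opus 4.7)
The plan is to lift the standard argument that an (ordinary) category with finite products is symmetric monoidal to the $\V$-enriched setting. The hypotheses are exactly what make $\A$ a \emph{cartesian object} in the 2-category $\VCat$: the right adjoint of the diagonal provides binary products $\times : \A \times \A \to \A$, while the right adjoint of $! : \A \to 1$ provides a chosen terminal object, to be used as the tensor unit $I$.

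First, I would unpack the $\V$-adjunctions into usable data. From $\Delta \dashv \times$ I extract, for every pair $A,B \in \A$, $\V$-natural projections $\pi_1 : A \times B \to A$ and $\pi_2 : A \times B \to B$ as the components of the counit, together with a $\V$-natural pairing isomorphism $\A(C, A \times B) \iso \A(C,A) \times \A(C,B)$ in $\V$. Dually, from $! \dashv 1$ I obtain a $\V$-natural isomorphism $\A(C, I) \iso 1$ in $\V$. These are precisely the universal properties of finite products and a terminal object formulated internally to $\V$.

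Next, I would define the coherence data $\otimes := \times$, $I$ as above, and construct the associator $\alpha_{A,B,C} : (A\times B)\times C \to A\times(B\times C)$, the unitors $\lambda_A : I \times A \to A$ and $\rho_A : A \times I \to A$, and the braiding $\sigma_{A,B} : A\times B \to B\times A$ by applying the universal property enriched-categorically; e.g.\ $\sigma_{A,B} := \tuple{\pi_2, \pi_1}$ and $\alpha$ as the unique map whose post-composition with the appropriate projections recovers $\pi_1\pi_1$, $\pi_2\pi_1$, $\pi_2$. Their $\V$-naturality and invertibility follow from the $\V$-naturality and invertibility of the pairing isomorphism above.

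Finally, I would verify the pentagon, triangle, and two hexagon axioms. Each such axiom is an equation between two $\V$-natural transformations whose codomain is a product; hence by the universal property it suffices to check the equation after composing with each projection, at which point everything collapses to trivial identities of projections. The only point that requires care — and the likely main obstacle — is ensuring every step is performed at the level of hom objects in $\V$ rather than merely on underlying ordinary categories; but this is exactly what the hypothesis of $\V$-adjunctions (as opposed to mere ordinary adjunctions between underlying categories) buys us, so the ordinary argument transports verbatim.
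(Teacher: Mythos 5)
The paper states this lemma without proof, treating it as the routine enriched analogue of the fact that finite products make a category symmetric monoidal, and your argument is exactly that routine proof, correctly carried out at the level of hom objects (defining $\alpha$, $\lambda$, $\rho$, $\sigma$ by the enriched universal property and verifying coherence by post-composition with projections). The only point worth making explicit is that the whole setup presupposes $\V$ cartesian, as the paper assumes immediately before the definition: otherwise the diagonal $\A \to \A \tensor \A$ is not even a $\V$-functor, and the identification $(\A \tensor \A)(\Delta C, \tuple{A,B}) \iso \A(C,A) \times \A(C,B)$ that your pairing isomorphism relies on would fail.
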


\begin{lemma}
    Given a cartesian $\V$-category $\A$, $\underlying{\A}$ is a cartesian category.
    Moreover, $\underlying{\A(A,-)} : \underlying{\A} \to \V$ is a cartesian functor.
\end{lemma}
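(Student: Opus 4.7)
The plan is to exploit the fact that the underlying-category construction $(-)_0 : \VCat \to \Cat$ is a 2-functor that preserves finite products. So I would first verify that, since $\V$ is cartesian closed, the categorical product in $\VCat$ of $\A$ and $\B$ has hom objects $(\A \times \B)((A, B), (A', B')) = \A(A, A') \times \B(B, B')$, and that unpacking a generalized element $I \to \A(A, A') \times \B(B, B')$ as a pair yields $\underlying{(\A \times \B)} \iso \underlying{\A} \times \underlying{\B}$ in $\Cat$; likewise $\underlying{1} \iso 1$.

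For the first claim, I would push the $\V$-adjunctions $\Delta \dashv \times$ and $! \dashv 1$ through $(-)_0$. As a 2-functor, $(-)_0$ preserves adjunctions, yielding ordinary adjunctions $\underlying{\Delta} \dashv \underlying{\times}$ and $\underlying{!} \dashv \underlying{1}$ in $\Cat$. Under the identifications just established, $\underlying{\Delta}$ is the diagonal of $\underlying{\A}$ and $\underlying{!}$ is the unique functor to $1$, so these adjunctions are exactly the standard definitions of binary products and a terminal object in $\underlying{\A}$.

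For the second claim, the key input is the $\V$-natural hom isomorphism supplied by the $\V$-adjunction $\Delta \dashv \times$, namely
\[
  \A(A, X \times Y) \iso (\A \times \A)(\Delta A, (X, Y)) = \A(A, X) \times \A(A, Y)
\]
in $\V$, natural in $A, X, Y$; and similarly $\A(A, 1) \iso 1$ from $! \dashv 1$. I would then observe that these are precisely the comparison isomorphisms witnessing that $\underlying{\A(A, -)}$ sends products in $\underlying{\A}$ to products in $\V$, which is the definition of being a cartesian functor.

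The main obstacle I anticipate is the compatibility check that the hom-isomorphism extracted from the $\V$-adjunction actually coincides with the canonical comparison map $\A(A, X) \times \A(A, Y) \to \A(A, X \times Y)$ built from the projections $X \times Y \to X$ and $X \times Y \to Y$ in $\underlying{\A}$ (which arise as components of the counit $\Delta \after \times \Rightarrow \id_{\A \times \A}$). This is not a purely formal identification and requires unwinding the triangle identities of the $\V$-adjunction, but once done, both halves of the lemma follow.
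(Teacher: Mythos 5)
The paper states this lemma without proof, so there is nothing to compare against directly; your argument is correct and uses exactly the machinery the paper itself sets up (the underlying-category construction $\underlying{(-)} = \V(I,-)_\ast : \VCat \to \Cat$ as a 2-functor, hence preserving adjunctions and finite products of $\V$-categories when $\V$ is cartesian, plus the fact that the enriched representable $\A(A,-)$ turns the $\V$-adjunctions $\Delta \dashv \times$ and $! \dashv 1$ into product-preservation in $\V$). Your flagged compatibility check — that the adjunction isomorphism $\A(A,X\times Y)\iso\A(A,X)\times\A(A,Y)$ is the canonical comparison induced by the projections, which are the components of the counit — is indeed the only non-formal step, and it does close via the triangle identities, so the proof is complete.
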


\begin{definition}
    An infinitely enriched category $\VV$ is \emph{cartesian} if $\VV_n$ is a cartesian closed category and $\VV_n \to \VV_{n+1}$ is a cartesian functor for each $n$.
\end{definition}

\begin{example}
    Every cartesian closed (ordinary) category is a cartesian infinitely enriched category.
\end{example}

\subsection{Semantics}

Assume that {\lambdabox} has the product and unit types and related $\beta\eta$ rules.

\begin{definition}[{\lambdabox}-theory]
    We define an equational theory $\Delta \vdash^l M = N : A$ for {\lambdabox}-terms $\Delta \vdash^l M : A$ and $\Delta \vdash^l N : A$ by the closure of the $\beta$ and $\eta$ rules.
\end{definition}

\begin{definition}[Interpretation]
    Let $\CC$ be a cartesian infinitely enriched category.
    Interpretation $\den{-}$ of types and contexts is straightforward (see Section \ref{sec:idea}).
    For any types $A^l$ and contexts $\Gamma^l$, both $\den{A^l}$ and $\den{\Gamma^l}$ are objects in $\CC_l$.
    Given a context stack $\Delta^l$, associate an auxiliary finite-product-preserving functor $\Delta^l(-) : \CC_l \to \CC_{l+\#\Delta}$.
    ($\epsilon$ denotes the empty context stack, and $[-,-]$ denotes exponentiation.)
    It may be helpful for the definition to recall that $\Delta$ is used for a context stack in this paper, whereas $\Gamma$ is for a single context.
    \[
        \epsilon(X) = X
        \qquad
        \Gamma(X) = \Box[\den{\Gamma}, X]
        \qquad
        (\Gamma; \Delta)(X) = \Box[\den{\Gamma}, \Delta(X)]
    \]
    Then terms are interpreted as morphisms of the form $\den{\Gamma; \Delta \vdash^l M : A} : \den{\Gamma} \to \Delta(\den{A})$ by induction on the structure of the judgment and the height of its context stack.
    \begin{align*}
        &\den{\Gamma \vdash^l x : A} = \begin{aligned} \den{\Gamma} \xrightarrow{\pi} \den{A} \end{aligned}
        \qquad
        \den{\Gamma; \Delta; \Gamma' \vdash^l x : A} = \begin{aligned}
            \den{\Gamma} \xrightarrow{!} 1 \iso \Box 1 \xrightarrow{\widetilde{\Box \den{\Delta; \Gamma' \vdash^l x : A}}} (\Delta; \Gamma')(\den{A})
        \end{aligned} \\
        &\den{\Gamma \vdash^l \lambda x.M : A \to B} = \begin{aligned}
            \den{\Gamma} \xrightarrow{\widetilde{\den{M}}} [\den{A},\den{B}]
        \end{aligned} \\
        &\den{\Gamma; \Delta; \Gamma' \vdash^l \lambda x.M : A \to B} = \begin{aligned}
            \den{\Gamma} \xrightarrow{\den{M}} \Delta(\Box[\den{\Gamma',A},\den{B}]) \iso \Delta(\Box[\den{\Gamma'}, [\den{A},\den{B}]])
        \end{aligned} \\
        &\den{\Gamma; \Delta \vdash^l M N : B} = \begin{aligned}
            \den{\Gamma} \xrightarrow{\tuple{\den{M},\den{N}}} \Delta([\den{A},\den{B}]) \times \Delta(\den{A}) \iso \Delta([\den{A}, \den{B}] \times \den{A}) \xrightarrow{\Delta(\ev)} \Delta(\den{B})
        \end{aligned} \\
        &\den{\Gamma; \Delta \vdash^l \quo M : \Box A} = \begin{aligned}
            \den{\Gamma} \xrightarrow{\den{M}} \Delta(\Box[1,\den{A}]) \iso \Delta(\Box\den{A})
        \end{aligned} \\
        &\den{\Gamma; \Delta; \Gamma' \vdash^l \unq M : A} = \begin{aligned}
            \den{\Gamma} \xrightarrow{\den{M}} \Delta(\Box\den{A}) \to \Delta(\Box[\den{\Gamma'},\den{A}])
        \end{aligned}
    \end{align*}
\end{definition}

The above interpretation slightly differs from that presented in Figure~\ref{fig:twolevelinterp} so that the type former $\Box$ coincides with the normal functor $\Box$.
As a result, it identifies two judgments $\Delta; \Gamma \vdash^l M : A$ and $\Delta \vdash^{l+1} \quo\lambda\Gamma.M : \Box(\rg(\Gamma) \to A)$, where $\rg(x_1:A_1,\cdots,x_n:A_n) \to A$ means $A_1 \to \cdots \to A_n \to A$.
We will introduce in Section~\ref{sec:contextual} another interpretation which makes $M$ and $\quo M$ precisely the same morphism.

\begin{theorem}[Soundness and completeness]
    $\Delta \vdash^l M = N : A$ if and only if $\den{M} = \den{N}$ holds for any cartesian infinitely enriched category.
\end{theorem}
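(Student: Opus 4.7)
The plan is to prove the two directions separately, using the standard template: induction on derivations for soundness, and a syntactic term-model construction for completeness.

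For the forward direction (soundness), I would proceed by induction on the derivation of $\Delta \vdash^l M = N : A$. Since the theory is generated by the $\beta\eta$ rules for $\to$, $\times$, $1$, and $\Box$ together with the evident congruence rules, it suffices to verify each generating rule in an arbitrary cartesian infinitely enriched category $\CC$. The single-level rules for $\to$, $\times$, and $1$ reduce to the standard Lambek--Scott equations in the cartesian closed category $\CC_l$. For $\beta_\Box : \unq\quo M = M$ and $\eta_\Box : \quo\unq M = M$ (with $M : \Box A$), the key ingredient is the normality of $\Box_l : \CC_l \to \CC_{l+1}$: the comparison isomorphism $\Box_l A \iso \CC_l(1, A)$ from Theorem~\ref{thm:fund} is precisely what implements the round-trip between a morphism into the hom object and a morphism in the underlying category. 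A preliminary substitution lemma matching the level-sensitive operator $[N/x]_n$ with the interpretation, in particular on its level-incrementing and level-decrementing clauses for $\quo$ and $\unq$, is needed to close the induction; the admissibility of weakening, exchange, and contraction from Section~\ref{sec:typetheory} then handles the structural rules uniformly.

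For the converse direction (completeness), I would build a syntactic model $\CC^{\mathrm{syn}}$ and show that $\den{M}$ faithfully reflects the equivalence class of $M$. At level $n$, take $\ob(\CC^{\mathrm{syn}}_n)$ to be the types of level $n$, with morphisms $\CC^{\mathrm{syn}}_n(A, B)$ the $\beta\eta$-equivalence classes of terms $x : A \vdash^n M : B$; cartesian closed structure comes from syntactic products and function types exactly as in the classical term model for simply typed $\lambda$-calculus. The functor $\Box_n : \CC^{\mathrm{syn}}_n \to \CC^{\mathrm{syn}}_{n+1}$ sends $A$ to $\Box A$ and $[M]$ to the class of $y : \Box A \vdash^{n+1} \quo (M[\unq y / x]) : \Box B$. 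Monoidal preservation ($\Box 1 \iso 1$, $\Box(A \times B) \iso \Box A \times \Box B$) and normality ($\Box(1 \to A) \iso \Box A$) are derivable in the $\beta\eta$ theory, so $\CC^{\mathrm{syn}}$ is a bona fide cartesian infinitely enriched category. The faithfulness claim $\den{M}^{\CC^{\mathrm{syn}}} = [M]$---up to the identification of $\Delta; \Gamma \vdash^l M : A$ with $\Delta \vdash^{l+1} \quo \lambda \Gamma . M : \Box(\rg(\Gamma) \to A)$ noted after the interpretation---is then proved by simultaneous induction on the height of the context stack and the structure of $M$, whence $\den{M} = \den{N}$ in $\CC^{\mathrm{syn}}$ forces $M =_{\beta\eta} N$.

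The main obstacle will be the faithfulness calculation for $\CC^{\mathrm{syn}}$. Because the interpretation of a term whose context stack has nontrivial depth mixes $\Box$ and $[-,-]$ through the auxiliary functor $\Delta(-)$, identifying $\den{M}$ with an explicit syntactic representative requires threading the round-trip of normality isomorphisms through every layer of the stack. A related subtlety is verifying normality of $\Box_n$ in $\CC^{\mathrm{syn}}$ itself: categorically this is just $\Box(1 \to A) \iso \Box A$, but one must check naturality against the action of $\Box_n$ on arbitrary morphisms, which in turn relies on exactly the $\beta_\Box$ and $\eta_\Box$ rules used for soundness. Once these pieces are in place, the remainder is a routine Lambek--Scott calculation.
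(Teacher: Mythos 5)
Your proposal follows essentially the same route as the paper: soundness by induction on the generating $\beta\eta$ rules, and completeness via the standard term model with $\CC_l(A,B) = \{x : A \vdash^l M : B\}$ and the normal functor $\Box_l$ acting by $[M] \mapsto \quo[(\unq y)/x]M$, which is exactly the paper's construction. Your elaboration of the faithfulness calculation and the normality check for the syntactic $\Box_l$ fills in details the paper leaves implicit, but the underlying argument is the same.
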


\begin{proof}
    Completeness is proved by the standard term model construction.
    Construct cartesian closed categories $\CC_l$ for all $l$ by collecting all terms (with one free variable) of the same level $\CC_l(A,B) = \{ x : A \vdash^l M : B \}$.
    Normal functors $\Box_l : \CC_l \to \CC_{l+1}$ arise from $(x : A \vdash^l M : B) \mapsto (y : \Box A \vdash^{l+1} \quo [(\unq y)/x]M : \Box B)$.
\end{proof}

Immediately from the proof of the completeness,  we can see that {\lambdabox} theories form internal languages (a la Lambek and Scott) of cartesian infinitely enriched categories.
Given a cartesian infinitely enriched category $\CC$, there exists a {\lambdabox} theory such that its term model is (naturally) isomorphic to $\CC$.
We call such a {\lambdabox} theory an \emph{internal language} of $\CC$.

\begin{remark}
    Normality of monoidal functors corresponds to the necessitation and denecessitation in logic.
    \[
        \AXC{$1 \to X$ in enriched cat. $\CC_l$}
        \doubleLine
        \UIC{$1 \to \CC_l(1, X)$ in enriching cat. $\CC_{l+1}$}
        \DisplayProof
        =
        \AXC{$\vdash^l X$ in object}
        \doubleLine
        \UIC{$\vdash^{l+1} \Box X$ in meta}
        \DisplayProof
    \]
\end{remark}

Of course, (not necessarily cartesian) infinitely enriched categories provide semantics for multi-staged linear lambda calculus.
A doubly enriched model of a linear calculus is discussed later.

\section{Finite Approximate Constructions}\label{sec:constructions}

This section gives two constructions of infinitely enriched categories.
Both constructions are finite approximate constructions: they can generate finitely enriched categories of any length from one structure, but cannot produce one with infinitely many non-trivial enrichment.
We remark that finite approximation is sufficient for model construction of the type theory since any context stack cannot be infinitely high.
We leave finding generic construction methods of truly infinitely enriched categories as an important future work towards the theory of iterated enrichment.

\subsection{Iterative Change-of-base Construction}

A Kripke category canonically induces a finite approximation of a cartesian infinitely enriched category by iterating the change-of-base construction.
The following is a fundamental property of symmetric monoidal closed categories.

\begin{lemma}\label{lem:compar-symmon}
    Given symmetric monoidal closed categories $\V$ and $\W$ and a symmetric monoidal functor $F : \V \to \W$, the comparison morphism $\theta$ of $F$ is a symmetric monoidal natural transformation.
\end{lemma}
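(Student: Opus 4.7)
The plan is to verify the two defining axioms of a symmetric monoidal natural transformation for $\theta$, namely the unit equation and the binary (multiplicativity) equation; compatibility with the symmetry is then automatic, because both $F$ and the representables $\V(I,-)$, $\W(I,F-)$ are symmetric monoidal. First I would unfold the definition: $\theta_X = \W(\iota, FX) \after F_{I,X}$, where $F_{I,X} : \V(I,X) \to \W(FI, FX)$ is the enriched action of $F$ on hom objects (which exists because $F$ is monoidal and $\V$, $\W$ are closed) and $\iota : I \to FI$ is the unit constraint of $F$. The source $\V(I,-)$ and target $\W(I,F-)$ carry canonical monoidal structures obtained from the composition and the constraints of $F$; the axioms pin down compatibility of $\theta$ with these.

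For the unit equation, I would evaluate $\theta_I$ at the canonical element $j_{\V} : I \to \V(I,I)$ (the name of $\id_I$) and show the result agrees with the canonical element $I \to \W(I,FI)$ that transposes $\iota$. Since $F$ preserves identities of enriched homs, $F_{I,I} \after j_{\V}$ is the name of $\id_{FI}$, and postcomposition with $\W(\iota, FI)$ precomposes this identity with $\iota$, giving exactly the name of $\iota$. This handles the unit axiom.

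For the binary equation, the task is to compare the two morphisms $\V(I,X) \tensor \V(I,Y) \rightrightarrows \W(I, F(X \tensor Y))$ obtained either by (a) the monoidality of $\V(I,-)$ followed by $\theta_{X \tensor Y}$, or (b) $\theta_X \tensor \theta_Y$ followed by the monoidality of $\W(I, F-)$. Substituting the defining composite of $\theta$ into each side produces a larger diagram that I would decompose into cells, each commuting by one of the following: naturality of $\W(-,-)$ in its first argument; the $\V$-functoriality of $F$ on homs with respect to tensor (which uses the binary constraint $F_2 : FX \tensor FY \to F(X \tensor Y)$); and the unit/binary coherence triangle of the monoidal functor $F$, expressed as $F_2 \after (\iota \tensor FY) = \iota'$ etc. Pasting these cells yields the desired equality.

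The main obstacle is the binary equation: although conceptually routine, it interleaves three layers of structure simultaneously—composition in the internal homs of $\V$ and $\W$, the enriched action $F_{-,-}$ on those homs, and the lax monoidal constraints $(F_2, \iota)$ of $F$. The care lies in laying out the diagram so that each cell is manifestly an instance either of naturality or of one of $F$'s own coherence laws, rather than mixing them together, so that the proof reduces to invoking the symmetric monoidal functor axioms already assumed for $F$.
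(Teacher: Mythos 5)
Your proposal is correct and takes essentially the same route as the paper: the paper's entire proof is the one-line remark that monoidality of $\theta$ is checked by a straightforward element-wise calculation, and your verification of the unit and binary axioms for $\theta_X = \W(\iota,FX)\after F_{I,X}$ (with compatibility with the symmetry coming for free, as the paper also notes in its appendix) is exactly that calculation written out in full.
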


\begin{proof}
    Monoidality of $\theta$ is checked by straightforward element-wise calculation.
\end{proof}

\begin{lemma}\label{lem:cob-symmon-nat}
    Given a symmetric monoidal natural transformation $\phi : G \to F : \V \to \W$, the component ${\phi_\ast}_\A : F_\ast{\A} \to G_\ast{\A}$ of $\phi_\ast$ is (strict) symmetric monoidal, if $\A$ is a symmetric monoidal $\V$-category.
\end{lemma}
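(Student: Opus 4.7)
My plan is to reduce the claim to the monoidal coherence data of $\phi$ by explicitly unwinding the change-of-base construction on all relevant structure. First I would recall that the symmetric monoidal $\W$-category $F_\ast\A$ is obtained by transporting the tensor $\otimes_\A : \A \otimes_\V \A \to \A$ and unit $I_\A$ of $\A$ through $F$: the objects of $F_\ast \A$ agree with those of $\A$, and on hom objects the tensor is the composite
\[
F\A(X,Y) \otimes_\W F\A(X',Y') \xrightarrow{F_\otimes} F(\A(X,Y) \otimes_\V \A(X',Y')) \xrightarrow{F(\otimes_\A)} F\A(X \otimes X', Y \otimes Y'),
\]
where $F_\otimes$ is the laxator of $F$, while the unit of $F_\ast\A$ at $I_\A$ uses the unit comparison $I_\W \to FI_\V \to F\A(I_\A,I_\A)$. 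The symmetry and coherence isomorphisms are obtained by applying $F$ to the corresponding data of $\A$. The monoidal structure on $G_\ast \A$ is defined identically with $G$ in place of $F$.

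Next I would describe ${\phi_\ast}_\A$ concretely: it is the identity on objects, while on hom objects its action is the component of $\phi$ at $\A(X,Y)$ (with orientation dictated by the convention the paper uses for $\phi : G \to F$). $\W$-functoriality---preservation of identities and composition---follows routinely from the ordinary naturality of $\phi$ together with its compatibility with the unit comparisons and laxators of $F$ and $G$, i.e.\ from the very axioms of a monoidal natural transformation.

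Since ${\phi_\ast}_\A$ is the identity on objects, any potential monoidal laxator and unit comparison for it are forced at the object level to be identities; thus proving strict monoidality reduces to a hom-object equation. Writing out the two composites that need to coincide, one sees that the equation is exactly the hexagon (respectively square) asserting that $\phi$ intertwines the laxators of $F$ and $G$, together with the unit triangle, which is precisely the definition of $\phi$ being a monoidal natural transformation. The symmetry axiom trivialises because $\phi$ is \emph{symmetric} monoidal, and the associator/unitor coherence follows from the fact that they are obtained by applying $F$ and $G$ to the same data in $\A$ and that $\phi$ is natural. Hence ${\phi_\ast}_\A$ is strict symmetric monoidal.

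The only real difficulty is bookkeeping: one must verify that the monoidal structure on $F_\ast \A$ refers to no data of $F$ beyond its laxator and unit comparison, so that the monoidality axioms of $\phi$ supply exactly the information needed to make the induced hom-object diagram commute. Once this observation is in hand, the remaining verifications are short diagram chases of the same element-wise flavour as the proof of Lemma~\ref{lem:compar-symmon}.
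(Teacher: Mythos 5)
The paper states this lemma without proof (in keeping with its remark that the neighbouring Lemma~\ref{lem:compar-symmon} is ``checked by straightforward element-wise calculation''), and your proposal supplies exactly the intended verification: unwind the transported monoidal structures on $F_\ast\A$ and $G_\ast\A$, observe that ${\phi_\ast}_\A$ is the identity on objects with hom-action the components of $\phi$, and reduce strict (symmetric) monoidality to the laxator/unit axioms of $\phi$ together with naturality of $\phi$ at $\otimes_\A : \A \otimes \A \to \A$. One small caution: being the identity on objects does not by itself \emph{force} the structure cells of ${\phi_\ast}_\A$ to be identities (an endomorphism of $X \otimes X'$ in $G_\ast\A$ need not be the identity); the correct reading --- which your subsequent reduction in fact carries out --- is that identity cells \emph{can be chosen}, and the monoidal-functor axioms then collapse to the hom-object square you describe, whose commutativity splits into the monoidality square of $\phi$ followed by naturality at $\otimes_\A$. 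The orientation mismatch you flag ($\phi : G \to F$ versus ${\phi_\ast}_\A : F_\ast\A \to G_\ast\A$) is an inconsistency in the paper's statement rather than a defect of your argument, as the corollary's use of $\theta_\ast$ confirms the covariant convention.
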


\begin{corollary}
    Given a symmetric monoidal functor $F : \V \to \W$ between symmetric monoidal categories and a symmetric monoidal $\V$-category $\A$, $F$ induces a symmetric monoidal functor $\overline{F} : \underlying{\A} \to \underlying{(F_{\ast}\A)}$.
    Moreover, $\overline{F}$ is an isomorphism if $F$ is normal.
\end{corollary}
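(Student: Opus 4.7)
The plan is to construct $\overline{F}$ explicitly and then verify the required properties in turn. Set $\overline{F}$ to be the identity on objects (both sides inherit $\ob \A$), and on a morphism $f : X \to Y$ in $\underlying{\A}$ --- which by definition is a $\V$-morphism $f : I \to \A(X,Y)$ --- define $\overline{F}(f)$ as the composite $I \xrightarrow{\iota} FI \xrightarrow{Ff} F\A(X,Y) = (F_\ast\A)(X,Y)$ in $\W$, using the unit structure morphism $\iota$ of $F$. This is visibly a morphism in $\underlying{(F_\ast\A)}$.

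Next I would verify functoriality. Preservation of identities unfolds to the definition of the enriched identity in $F_\ast\A$: by the change-of-base construction, the identity at $X$ in $F_\ast\A$ is $F(j^\A_X) \after \iota$, where $j^\A_X$ is the enriched identity of $\A$ at $X$ --- this is exactly $\overline{F}(j^\A_X)$. Preservation of composition follows from a short diagram chase using the multiplier $\mu$ of $F$ together with the defining equation of composition in $F_\ast\A$ (namely, $F$ applied to composition in $\A$, precomposed by $\mu$). For symmetric monoidality of $\overline{F}$, I would invoke that $\A$ being a symmetric monoidal $\V$-category endows $\underlying{\A}$ with a symmetric monoidal structure, and that change-of-base analogously transports such structure to a symmetric monoidal $\W$-category structure on $F_\ast\A$, hence on $\underlying{(F_\ast\A)}$; the coherence data of $\overline{F}$ is assembled from $\iota$ and $\mu$ together with $F$ applied to the enriched coherence data of $\A$, and the pentagon/triangle/hexagon for $\overline{F}$ reduce to the monoidal coherence axioms of $F$ together with Lemma~\ref{lem:compar-symmon}.

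For the isomorphism claim, assume $F$ is normal. Then $\overline{F}$ is bijective on objects by construction, so it suffices to check bijectivity on hom-sets. The action of $\overline{F}$ on hom-sets is exactly the map $\V(I, \A(X,Y)) \to \W(I, F\A(X,Y))$, $f \mapsto Ff \after \iota$, which is precisely the component at $\A(X,Y)$ of the comparison morphism $\theta$ of $F$ in the sense of the Definition above. Normality of $F$ means $\theta$ is an isomorphism at every object of $\V$, so in particular at $\A(X,Y)$, hence $\overline{F}$ is a bijection on each hom-set and therefore an isomorphism of symmetric monoidal categories.

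The main obstacle I anticipate is not the isomorphism step, which drops out essentially by definition once the correct map is identified as the comparison, but rather the bookkeeping for the symmetric monoidal coherences of $\overline{F}$: tracking how the enriched coherence data of $\A$, transported through $F$ via $\iota$ and $\mu$, assembles into cells satisfying pentagon, triangle, and hexagon in $\underlying{(F_\ast\A)}$. The content reduces to monoidal coherence of $F$ plus the change-of-base definitions, but the diagrams are the most computation-heavy part of the proof.
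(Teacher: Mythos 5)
Your construction is correct and produces exactly the functor the paper intends---the remark immediately following the corollary records the same formula $\overline{F}f = Ff \after \iota$---but you take the hands-on route where the paper's proof is purely formal. The paper applies the change-of-base 2-functor ${(-)}_\ast$ to the comparison 2-cell $\theta : \V(I,-) \to \W(I,F-)$ of Lemma~\ref{lem:compar-symmon}, obtaining $\theta_\ast : \underlying{(-)} \to \underlying{(F_\ast(-))}$, and then defines $\overline{F} = {\theta_\ast}_\A$; symmetric monoidality of this component is exactly Lemma~\ref{lem:cob-symmon-nat}, and invertibility of $\theta$ under normality passes through ${(-)}_\ast$ to give the isomorphism claim. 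This outsources all of the functoriality and coherence bookkeeping---precisely the part you rightly flag as the computation-heavy portion of your argument---to the formal theory of change-of-base developed in the appendix. What your version buys is transparency: it makes explicit that the hom-set action of $\overline{F}$ is literally the component $\theta_{\A(X,Y)}$ of the comparison morphism, which is the cleanest way to see why normality yields full faithfulness and hence (with bijectivity on objects) an isomorphism. Both proofs hinge on the same two facts, monoidality of $\theta$ and its invertibility, and differ only in packaging. One terminological quibble: the coherence conditions you must verify for $\overline{F}$ are the unit and multiplication compatibility axioms of a symmetric monoidal \emph{functor}, not the pentagon, triangle, and hexagon, which are axioms of the monoidal \emph{categories} themselves.
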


\begin{proof}
    Applying $\theta$ in Lemma \ref{lem:compar-symmon} to ${(-)}_\ast$, we obtain the 2-cell $\theta_\ast : \underlying{(-)} \to \underlying{(F_\ast(-))} : \VCat \to \WCat$.
    By Lemma \ref{lem:cob-symmon-nat}, there is a symmetric monoidal functor $\overline{F} = {\theta_{\ast}}_\A$.
\end{proof}

\begin{remark}
    $\overline{F}$ sends a morphism $f : X \to Y = I \to \A(X,Y)$ to $\overline{F}f = X \to Y = I \xrightarrow{\iota} FI \xrightarrow{Ff} F\A(X,Y)$.
\end{remark}

Let $\V$ be a Kripke category with a monoidal endofunctor $\Box : \V \to \V$.
Using the above lemma and corollaries, we can iteratively perform the change-of-base construction starting from $\V$.
\[
    \xymatrix@R-=0.5cm{
        & \Box_{\ast}\V \ar@{|->}[d] & \overline{\Box}_{\ast}{(\Box_{\ast}\V)} \ar@{|->}[d] & \overline{\overline{\Box}}_{\ast}{(\overline{\Box}_{\ast}{(\Box_{\ast}\V)})} \ar@{|->}[d] &\\
        \V \ar@(l,u)[]^{\Box} \ar@{~>}[ur] \ar[r]_-{\overline{\Box}} & \underlying{(\Box_{\ast}\V)} \ar@{~>}[ur] \ar[r]_-{\overline{\overline{\Box}}} & \underlying{(\overline{\Box}_{\ast}{(\Box_{\ast}\V)})} \ar@{~>}[ur] \ar[r]_-{\overline{\overline{\overline{\Box}}}} & \underlying{(\overline{\overline{\Box}}_{\ast}{(\overline{\Box}_{\ast}{(\Box_{\ast}\V)})})} \ar@{}[ur]_{\cdots}
    }
\]
Note that $\overline{\Box}_{\ast}{(\Box_{\ast}\V)} = {(\overline{\Box} \after \Box)}_{\ast}\V$, etc.
The hom sets of the constructed category say at level 2 are explicitly written down as follows.
\[
    \underlying{(\overline\Box_\ast(\Box_\ast{\V}))}(X,Y) = \underlying{(\Box_\ast{\V})}(I, \overline{\Box}\after\Box[X,Y])
    = \underlying{(\Box_\ast{\V})}(I, \Box[X,Y])
    = \V(I, \Box[I, \Box[X, Y]])
\]
This corresponds to the equivalence of derivability between judgments $X \vdash Y$ and $\vdash \Box(1 \to \Box(X \to Y))$.
Logically, ${\overline{\Box}}^{(n)}$ performs necessitation for given judgments and sends them to higher stages.

The model construction in this subsection explains why a Fitch-style modal logic can be interpreted in a Gentzen-style modal logic.
It is not so difficult to define the syntactic translation of {\lambdabox} into the Gentzen-style calculus based on this construction.

\subsection{Iterative Co-Kleisli Construction}

Another construction arises from comonads.
The key observation is that a comonad over a cartesian closed category produces another cartesian closed category by the co-Kleisli construction.

\begin{lemma}
    Given a comonad $\tuple{T, \delta, \epsilon}$ over a cartesian closed category $\V$, the co-Kleisli category $\V_T$ is again cartesian closed, if $T$ preserves binary products
    \footnote{More generally, any oplax monoidal comonad over a symmetric monoidal closed category produces another symmetric monoidal closed category via co-Kleisli construction.}.
\end{lemma}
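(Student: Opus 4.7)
The plan is to identify a candidate cartesian closed structure on $\V_T$ and verify the universal properties, with preservation of binary products by $T$ entering essentially only in the construction of exponentials.

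First I would take the terminal object $1$ and binary products $A \times B$ of $\V$ as the corresponding structure in $\V_T$. The terminal object is immediate since $\V_T(X, 1) = \V(TX, 1)$ is a singleton. For binary products, the projections in $\V_T$ are $\pi_i \after \epsilon_{A \times B} : T(A \times B) \to A, B$, and given $\V_T$-morphisms $f : TX \to A$ and $g : TX \to B$ the mediating morphism is the $\V$-pairing $\tuple{f, g} : TX \to A \times B$. Unfolding co-Kleisli composition,
\[
    \pi_A \after \epsilon_{A \times B} \after T\tuple{f, g} \after \delta_X = \pi_A \after \tuple{f, g} \after \epsilon_{TX} \after \delta_X = f,
\]
by naturality of $\epsilon$ and the counit law. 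This step does not require $T$ to preserve products; uniqueness is similar.

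For exponentials I would set $[A, B]_{\V_T} = [TA, B]_{\V}$. Preservation of binary products by $T$ yields a chain of natural isomorphisms
\[
    \V_T(X \times A, B) = \V(T(X \times A), B) \iso \V(TX \times TA, B) \iso \V(TX, [TA, B]) = \V_T(X, [TA, B]).
\]
The evaluation morphism in $\V_T$ is defined as the image of $\id_{[TA, B]}$ under this bijection, and currying is its inverse. What remains is to check naturality in $X$ viewed as an object of $\V_T$: for $h : X' \to X$ in $\V_T$, i.e., $h : TX' \to X$ in $\V$, precomposition in $\V_T$ by $h \times \id_A$ on the left-hand side must correspond under the isomorphism to precomposition by $h$ on the right.

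The main obstacle will be this last naturality verification. Co-Kleisli precomposition intertwines $\delta$ with $T$, whereas the isomorphism chain passes through the monoidal comparison $T(X \times A) \iso TX \times TA$; reconciling the two uses coassociativity of $\delta$, the counit laws, and naturality of the comparison morphism of $T$. Once it is established, the triangle identities for the adjunction follow formally. The generalization in the footnote is proved by the same outline: replace cartesian products by the monoidal tensor, use oplax monoidality of $T$ in place of strict preservation, and replace the exponential $[A, B]$ by the right-closed structure of $\V$, with the oplax comparison providing the coherence needed for the adjunction bijection.
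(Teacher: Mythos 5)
The paper states this lemma without proof (it is treated as a known fact about product-preserving comonads), so there is no in-paper argument to compare against; judged on its own, your outline is the standard proof and is essentially correct. You identify the right structure: finite products in $\V_T$ are inherited from $\V$ with projections $\pi_i \after \epsilon$, and the exponential is $[TA,B]$; you are also right that the hypothesis on $T$ enters only for the exponentials --- products exist in any co-Kleisli category over a category with finite products, which is most quickly seen from the adjunction $L \dashv R$ with $L : \V_T \to \V$, $LX = TX$, and $R : \V \to \V_T$ the identity-on-objects right adjoint, since $R$ preserves limits and is bijective on objects. The one substantive step you defer, naturality of the bijection $\V_T(X \times A, B) \iso \V_T(X, [TA,B])$ in $X$ as an object of $\V_T$, does go through with the ingredients you name, but it is worth noting that the same adjunction lets you package it without any hand computation: $L$ preserves binary products exactly when $T$ does, so
\[
\V_T(X \times A, B) = \V(L(X \times A), B) \iso \V(LX \times TA, B) \iso \V(LX, [TA,B]) = \V_T(X, R[TA,B]),
\]
where every isomorphism is natural in $X \in \V_T$ because it is obtained by applying the functor $L$ or the ordinary adjunctions of $\V$; this replaces the coassociativity/counit bookkeeping you anticipate. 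Your remark on the footnote is also the right idea, though in the oplax monoidal case one must additionally check that the comparison maps $T(X \tensor Y) \to TX \tensor TY$ make the inherited tensor functorial on co-Kleisli morphisms and compatible with the associators, which is where the oplax comonad axioms are actually used.
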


Not surprisingly, this construction does not yield an infinitely enriched category at once.
First, we develop the theory of Kleisli constructions, in order to obtain infinitely many cartesian closed categories from one.
A similar construction is explored in \cite{Cheng11}.
Let $\V$ be a category and $\tuple{T, \mu^T, \eta^T}$ and $\tuple{S, \mu^S, \eta^S}$ monads over $\V$.

\begin{definition}
    A \emph{distributive law} of $T$ over $S$ is a natural transformation $l : ST \to TS$ subject to the following conditions.
    \begin{align*}
        \mu^T S \after Tl \after lT &= l \after S \mu^T & l \after S \eta^T &= \eta^T S &
        T \mu^S \after lS \after Sl &= l \after \mu^S T & l \after \eta^S T &= T \eta^S
    \end{align*}
\end{definition}

Let $l : ST \to TS$ be a distributive law of $T$ over $S$.

\begin{lemma}
    $\tuple{TS, \mu^{TS}, \eta^{TS}}$ is a monad over $\V$, where $\mu^{TS}$ and $\eta^{TS}$ are given by $\mu^{TS} = TSTS \xrightarrow{TlS} TTSS \xrightarrow{\mu^T\mu^S} TS$ and $\eta^{TS} = 1 \xrightarrow{\eta^T\eta^S} TS$.
\end{lemma}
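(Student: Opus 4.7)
The plan is to verify the three monad axioms for $\tuple{TS, \mu^{TS}, \eta^{TS}}$ by diagram chasing, using the four axioms of the distributive law $l$ together with the monad laws for $T$ and $S$. This is a classical construction and the proof is routine once the correct rewrite pattern is identified.

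For the two unit laws, I would unfold the definitions and exploit naturality together with the unit equations of the distributive law:
\begin{align*}
    l \after \eta^S T &= T \eta^S, & l \after S \eta^T &= \eta^T S.
\end{align*}
For the left unit, $\mu^{TS} \after \eta^{TS}_{TS} = (\mu^T \mu^S) \after (TlS) \after (\eta^T \eta^S)_{TS}$; naturality of $\eta^T$ and $\eta^S$ pushes them past the outer factors, and the equation $l \after \eta^S T = T\eta^S$ lets the remaining $l$ disappear, after which the monad unit laws for $T$ and $S$ collapse the composite to the identity on $TS$. The right unit law $\mu^{TS} \after TS\,\eta^{TS} = \id$ is symmetric and uses $l \after S \eta^T = \eta^T S$ instead.

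The main work is in associativity, $\mu^{TS} \after \mu^{TS}_{TS} = \mu^{TS} \after TS\,\mu^{TS}$. Here the two pentagonal axioms
\begin{align*}
    \mu^T S \after Tl \after lT &= l \after S\mu^T, & T \mu^S \after lS \after Sl &= l \after \mu^S T
\end{align*}
do the heavy lifting. Expanding both sides yields a composite $TSTSTS \to TS$ containing three copies of $l$ and two copies each of $\mu^T$ and $\mu^S$. The strategy is to rewrite each side into a common normal form where all occurrences of $l$ are applied first and the two monad multiplications are applied last: the first pentagon lets an $l$ slide past an adjacent $\mu^T$, the second lets an $l$ slide past an adjacent $\mu^S$, and naturality of $l$ handles the remaining whiskerings. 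Once both sides are in this normal form, associativity of $\mu^T$ and $\mu^S$ separately finishes the job.

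The hard part will be the bookkeeping in the associativity diagram: there are many intermediate objects of the form $T^i S^j$ with various whiskerings, and it is easy to confuse applications of $l$ on the left versus the right. I would handle this by displaying the diagram as a hexagon whose interior is tiled with subsquares, each verified either by naturality of $l$, one of its four defining equations, or an associativity/unit axiom for $T$ or $S$. No step requires any hypothesis beyond those already listed, so the result follows purely formally from the definition of a distributive law.
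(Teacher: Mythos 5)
Your proposal is correct: it is the standard proof of Beck's theorem that a distributive law $l : ST \to TS$ makes $TS$ a monad, verifying the unit laws from naturality plus $l \after \eta^S T = T\eta^S$ and $l \after S\eta^T = \eta^T S$, and associativity by normalizing both composites $TSTSTS \to TS$ via the two pentagon axioms and naturality of $l$. The paper gives no proof of this lemma at all --- it is cited as classical (cf.\ Cheng's work on iterated distributive laws) --- so there is nothing to compare against; your argument is exactly the one the paper implicitly relies on. One tiny bookkeeping note: each side initially contains only two occurrences of $l$; the third appears when a pentagon axiom is applied to reach the common normal form $TSTSTS \to TTTSSS \to TS$, which is consistent with your tiled-hexagon plan.
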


\begin{definition}
    $l$ induces an endofunctor $S_T : \V_T \to \V_T$ over the Kleisli category, mapping $S_T(A) = SA$ and $S_T(A \xrightarrow{f} TB) = SA \xrightarrow{Sf} STB \xrightarrow{l} TSB$.
    This endofunctor $S_T$ is called the \emph{Kleisli lifting} of $S$ along $T$.
\end{definition}

\begin{lemma}
    There exist natural transformations $\eta^S_T : 1 \to S_T$ and $\mu^S_T : {S_T}^2 \to {S_T}$ such that $\tuple{S_T, \mu^S_T, \eta^S_T}$ forms a monad over the Kleisli category $\V_T$.
\end{lemma}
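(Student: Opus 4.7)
The plan is to construct the unit and multiplication of the lifted monad by transporting those of $S$ along the canonical Kleisli embedding $F_T : \V \to \V_T$, which is the identity on objects and sends $g : A \to B$ in $\V$ to $\eta^T_B \after g : A \to TB$ in $\V_T$. Concretely, I would define
\[
    (\eta^S_T)_A \;:=\; \eta^T_{SA} \after \eta^S_A \;:\; A \to TSA, \qquad (\mu^S_T)_A \;:=\; \eta^T_{SA} \after \mu^S_A \;:\; SSA \to TSA,
\]
regarded as morphisms $A \to S_T A$ and $S_T S_T A \to S_T A$ in $\V_T$. Note that this is exactly $F_T(\eta^S_A)$ and $F_T(\mu^S_A)$, which is the only sensible choice since we want the lift to recover $S$ on the image of $F_T$.

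Next I would check $\V_T$-naturality of $\eta^S_T$ and $\mu^S_T$. For a Kleisli arrow $f : A \to TB$, the required square $S_T f \after_{\V_T} (\eta^S_T)_A = (\eta^S_T)_B \after_{\V_T} f$ unfolds in $\V$ to an equation between two pastings that contract via the monad unit law $\mu^T \after T\eta^T = \id = \mu^T \after \eta^T T$, naturality of $\eta^T$ and $\eta^S$, and the distributive law axiom $l \after \eta^S T = T\eta^S$. Naturality of $\mu^S_T$ is handled in the same manner.

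Finally I would verify the three monad axioms. The left unit $\mu^S_T \after_{\V_T} \eta^S_T S_T = \id_{S_T}$ collapses immediately by the $T$- and $S$-unit laws. The right unit $\mu^S_T \after_{\V_T} S_T \eta^S_T = \id_{S_T}$ first applies $S_T$ to the Kleisli arrow $\eta^S_T$, producing an expression containing $l \after S\eta^T_{SA}$ which simplifies to $\eta^T_{SSA}$ by the distributive law axiom $l \after S\eta^T = \eta^T S$; the remainder then collapses by the unit laws and $\mu^S \after S\eta^S = \id$. Associativity $\mu^S_T \after_{\V_T} S_T \mu^S_T = \mu^S_T \after_{\V_T} \mu^S_T S_T$, after expanding the Kleisli composition $g \after_{\V_T} f = \mu^T \after Tg \after f$, reduces to the pentagon
\[
    T\mu^S \after lS \after Sl \;=\; l \after \mu^S T,
\]
which is exactly the fourth distributive law axiom, combined with the associativity of $\mu^T$ and naturality of $l$.

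The main obstacle is purely bookkeeping: each of the above equations, once the Kleisli composition is unfolded, becomes a diagram of non-trivial size, and one must be careful to identify which of the four distributive law axioms, the two monad laws for $T$ and $S$, and the naturality squares of $\eta^T$, $\eta^S$, $\mu^T$, $\mu^S$, $l$ is being invoked at each step. No conceptually new ingredient enters beyond these; the lemma is essentially one half of Beck's theorem on distributive laws, applied to the Kleisli resolution of $T$, so the commutativity of each diagram is guaranteed in advance by the structure of that theorem.
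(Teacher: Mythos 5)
The paper states this lemma without proof (it is the Kleisli-side half of Beck's correspondence between distributive laws and liftings, for which the authors defer to the standard theory, cf.\ the cited work of Cheng), so there is no in-paper argument to compare against; judged on its own, your construction is the right one and the standard one: $\eta^S_T = F_T\eta^S$ and $\mu^S_T = F_T\mu^S$ where $F_T$ is the identity-on-objects functor $\V \to \V_T$, and the verification does go through with exactly the toolkit you list. One piece of your bookkeeping is misplaced, though. After unfolding Kleisli composition, the unit laws of $T$ and naturality of $\eta^T$ cancel all occurrences of $T$ and $l$ from the associativity square, which collapses to $\mu^S \after S\mu^S = \mu^S \after \mu^S S$, i.e.\ plain associativity of $\mu^S$ --- the pentagon $T\mu^S \after lS \after Sl = l \after \mu^S T$ is \emph{not} what makes associativity work. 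Where that pentagon is genuinely consumed is in the $\V_T$-naturality of $\mu^S_T$, which you dismissed as ``handled in the same manner'' as naturality of $\eta^S_T$: for a Kleisli arrow $f : A \to TB$ the two sides of the naturality square reduce to $T\mu^S_B \after l_{SB} \after Sl_B \after SSf$ and $l_B \after \mu^S_{TB} \after SSf$ respectively, and equating them is precisely that axiom. (The remaining pentagon, $\mu^T S \after Tl \after lT = l \after S\mu^T$, is spent on functoriality of $S_T$, which the paper packages into the preceding definition.) Since every required identity does hold and you have all the ingredients on the table, this is a misattribution rather than a gap, but in a proof whose entire content is ``which axiom fires where,'' it is worth getting the ledger right.
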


\begin{lemma}
    The Kleisli categories $\V_{TS}$ and $(\V_T)_{S_T}$ are identical.
    Moreover, their adjunctions commute.
    \[
        \xymatrix{C \rtwocell{'\rotatebox{90}{$\vdash$}} & C_T \rtwocell{'\rotatebox{90}{$\vdash$}} & (C_T)_{S_T}} = \xymatrix{C \rtwocell{'\rotatebox{90}{$\vdash$}} & C_{TS}}
    \]
\end{lemma}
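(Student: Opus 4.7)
The plan is to prove $\V_{TS} = (\V_T)_{S_T}$ by checking that the two categories agree on objects, hom-sets, identities, and composition, and then to deduce the adjunction equation from the fact that the left (free) functors compose in the expected way, which forces the right adjoints.

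First I would compare underlying data. Both categories share the objects of $\V$. A morphism $A \to B$ in $\V_{TS}$ is a $\V$-map $A \to TSB$. A morphism $A \to B$ in $(\V_T)_{S_T}$ is a $\V_T$-map $A \to S_T B$, which unfolds to a $\V$-map $A \to T(S_T B) = TSB$. So the hom-sets coincide on the nose. For identities, the unit $\eta^{TS}_A = \eta^T_{SA} \after \eta^S_A$, and the identity at $A$ in $(\V_T)_{S_T}$, namely $\eta^S_{T,A}$ viewed as a $\V$-morphism $A \to TSA$, reduces to the same composite.

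The heart of the proof is comparing compositions. For $f : A \to TSB$ and $g : B \to TSC$, composition in $\V_{TS}$ expands (via the stated formula for $\mu^{TS}$ and naturality of $\mu^T$) to
\[
\mu^{TS}_C \after TSg \after f \;=\; T\mu^S_C \after \mu^T_{SSC} \after Tl_{SC} \after TSg \after f.
\]
Composition in $(\V_T)_{S_T}$ is $\mu^S_{T,C} \after_{\V_T} (S_T g \after_{\V_T} f)$. Expanding with $S_T g = l_{SC} \after Sg$, $\mu^S_{T,C} = \eta^T_{SC} \after \mu^S_C$, and the Kleisli composition rule in $\V_T$, the expression collapses via naturality of $\mu^T$ and the cancellation $\mu^T \after T\eta^T = \id$ to the same composite. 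This establishes equality of categories. For the adjunctions, once the categories are identified, it suffices to compare the left (free) functors. The free functor $\V \to \V_{TS}$ sends $f : A \to B$ to $\eta^{TS}_B \after f$. The composite $\V \to \V_T \to (\V_T)_{S_T}$ sends $f$ first to $\eta^T_B \after f$ and then through the free functor of $S_T$ to $\eta^S_{T,B} \after_{\V_T} (\eta^T_B \after f)$; unfolding and using $\eta^S_{T,B} = \eta^T_{SB} \after \eta^S_B$ together with the monad unit axiom reduces this to $\eta^{TS}_B \after f$. The forgetful sides then agree automatically, since each sends a Kleisli morphism to its underlying $\V$-arrow.

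The main obstacle will be the bookkeeping in the composition step: several layers of functorial action, instances of $\mu^T$, and the distributive law $l$ appear, and collapsing them to the canonical $\mu^{TS}$-form requires choreographed use of naturality squares, the distributive-law axioms, and the monad unit axioms. Once this routine but intricate calculation is carried out, everything else is essentially formal.
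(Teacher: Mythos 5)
The paper states this lemma without proof, pointing only to Cheng's work on iterated distributive laws, so there is no in-paper argument to compare against; your direct element-wise verification is the standard one and is essentially correct. Your identification of objects and hom-sets is right, your formulas $S_T g = l_{SC} \after Sg$, $\eta^S_{T,A} = \eta^T_{SA} \after \eta^S_A$, $\mu^S_{T,C} = \eta^T_{SC} \after \mu^S_C$ match the paper's definition of the Kleisli lifting, and the composition computation does collapse as you describe: expanding $\mu^S_{T,C} \after_{\V_T} (S_T g \after_{\V_T} f)$ and cancelling $\mu^T \after T\eta^T = \id$ yields $T\mu^S_C \after \mu^T_{SSC} \after Tl_{SC} \after TSg \after f$, which is exactly $\mu^{TS}_C \after TSg \after f$ after rewriting $\mu^T\mu^S$ by the interchange law. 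The left-adjoint comparison is also correct. The one imprecise step is your closing claim that ``the forgetful sides agree automatically, since each sends a Kleisli morphism to its underlying $\V$-arrow'': the right adjoint of a Kleisli adjunction is not the identity on morphisms --- it sends $A$ to $TA$ and $f : A \to TB$ to $\mu^T_B \after Tf$ --- so equality of $G_{TS}$ with $G_T \after G_{S_T}$ requires checking that $\mu^T_{SB} \after T(\mu^S_{T,B} \after_{\V_T} S_T f) = \mu^{TS}_B \after TSf$, a computation of the same flavor as your composition step (or, for agreement up to canonical isomorphism only, an appeal to uniqueness of right adjoints). This is a one-line fix, not a flaw in the approach.
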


In what follows we only consider the case where $\tuple{T, \mu^T, \eta^T}$ and $\tuple{S, \mu^S, \eta^S}$ are the same.
We then want to lift a distributive law $l : T^2 \to T^2$ along $T$ to the Kleisli category in the same way as $\mu^S_T$ and $\eta^S_T$.
To do this, we introduce another axiom for distributive laws.

\begin{definition}
    A distributive law $l : T^2 \to T^2$ is \emph{self-distributive} if it is subject to the \emph{Yang-Baxter equation}:
    \[
        Tl \after lT \after Tl = lT \after Tl \after lT.
    \]
    We say $l$ is a \emph{self-distributive law} if it is self-distributive.
\end{definition}

Self-distributivity not only allows Kleisli lifting of distributive laws but also makes the lifted distributive laws again self-distributive.

\begin{lemma}
    Let $l : T^2 \to T^2$ be a self-distributive law of $T$.
    $l$ induces another self-distributive law $l_T$ of $T_T$, by $l_T = \eta \after l : T^2 \to T^3$.
\end{lemma}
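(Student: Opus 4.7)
The plan is to verify three things: (i) that $l_T$ is a $\V_T$-natural transformation $T_T^2 \Rightarrow T_T^2$; (ii) that it satisfies the four axioms of a distributive law of $T_T$ over $T_T$ in the Kleisli category $\V_T$; and (iii) that it satisfies the Yang--Baxter equation in $\V_T$. The bookkeeping rule that fixes the shape of every calculation is the standard one: a $\V_T$-morphism $X \to Y$ is a $\V$-morphism $X \to TY$, so the components of $l_T$, viewed in $\V$, must be of type $T^2 A \to T \cdot T_T^2 A = T^3 A$, which matches the formula $l_T = \eta \after l$ of the statement. Similarly, Kleisli composition of $\V_T$-arrows $f : A \to B$ and $g : B \to C$ unfolds as the $\V$-composite $\mu \after Tg \after f : A \to TC$, the Kleisli lifting acts on $\V_T$-arrows by $T_T f = l \after Tf$, and the lifted unit $\eta^{T_T}$ and multiplication $\mu^{T_T}$ of the monad $T_T$ correspond, in $\V$, to explicit composites built from $\eta$, $\mu$, and $l$ as produced in the preceding lemmas.

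First I would handle (i): naturality of $l_T$ in $\V_T$ asserts that for each $\V_T$-arrow $f : A \to B$ the square formed by $T_T^2 f$ and $l_T$ commutes. After substituting the Kleisli expansions of composition and of $T_T^2$, this becomes a commutative diagram in $\V$ all of whose cells are instances of naturality of $l$ or of $\eta$. For the unit axioms of (ii) --- which state compatibility of $l_T$ with $\eta^{T_T}$ on both sides --- the same dictionary reduces each to a $\V$-equation that is a routine consequence of the unit axioms for $l$ together with the monad unit laws for $T$.

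The multiplication axioms of (ii) are where the real content lies. Expanding the Kleisli-lifted $\mu^{T_T}$ introduces an extra copy of $l$, so each axiom becomes a $\V$-equation in which $l$ appears twice on each side, interleaved with $\eta$, $\mu$, and various whiskered $T$'s. The strategy is to slide the $\eta$'s past $l$ using naturality, to apply the original multiplication axioms of $l$ to collapse adjacent $Tl \after lT$ factors, and to invoke the Yang--Baxter equation to swap the residual pair of $l$'s so that the two sides coincide. For (iii), the Yang--Baxter equation for $l_T$ in $\V_T$ similarly expands to a $\V$-equation involving three copies of $l$ wrapped in $\eta$'s and $\mu$'s; after sweeping the units and multiplications to the boundary via naturality and the monad laws, both sides become precisely the two sides of the Yang--Baxter equation for $l$, so (iii) follows directly from the hypothesis.

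The hard part will be the bookkeeping in the multiplication axioms of (ii), where each side is a composite of five or six arrows in $\V$ that must be normalized into a common form. The cleanest presentation is probably via string diagrams, but even in linear prose the calculation proceeds by the three-step recipe above, in which only the final appeal to Yang--Baxter is substantive --- the preceding rewrites are all forced by naturality, the monad laws for $T$, and the distributive-law axioms for $l$.
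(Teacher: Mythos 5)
The paper states this lemma without proof, so there is no official argument to compare against; your overall strategy --- unfold Kleisli composition, then verify naturality, the four distributive-law axioms, and Yang--Baxter by normalizing with naturality of $\eta$ and $l$, the monad laws, and the axioms for $l$ --- is the right (and essentially the only) way to do it, and the identification of the component type $T^2A \to T^3A$ is correct. However, you have mislocated where the self-distributivity hypothesis actually enters, and one of your steps would fail as justified.

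Concretely, your step (i) claims that naturality of $l_T$ in $\V_T$ reduces to instances of naturality of $l$ and $\eta$. It does not: for a Kleisli arrow $f : A \to TB$, one has $T_T^2 f = l_{TB} \after Tl_B \after T^2 f$, and after cancelling the units via $\mu \after T\eta = 1$ and sliding $\eta$ and $l$ by naturality, the two legs of the naturality square become $Tl_B \after l_{TB} \after Tl_B \after T^2 f$ and $l_{TB} \after Tl_B \after l_{TB} \after T^2 f$. Their equality is exactly the Yang--Baxter equation (and, since $T^2\eta$ is split monic, it is equivalent to it). So the hypothesis you defer to the ``hard'' multiplication axioms is in fact indispensable already for $l_T$ to be a natural transformation on $\V_T$ --- this is the whole reason the lemma assumes self-distributivity rather than mere distributivity. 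Conversely, the multiplication axioms of (ii) are less dramatic than you suggest: writing $\mu^{T_T} = \eta T \after \mu$ and using the unit axiom $l \after T\eta = \eta T$ to simplify $T_T\mu^{T_T}$ and $T_T l_T$, both sides collapse to $\eta T^2$ precomposed with the two sides of the corresponding multiplication axiom for $l$ itself; no appeal to Yang--Baxter is needed there. Your treatment of the unit axioms and of (iii) is correct. The proof is completable with the tools you list, but as written the justification of naturality is insufficient, and a reader following your recipe would be misled about which axiom is doing the work.
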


We summarize the results above in terms of comonads.

\begin{corollary}
    Assume given a comonad $\tuple{T, \delta, \epsilon}$ over $\V$, and a self-distributive law $l : T^2 \to T^2$.
    By the duals of the preceding lemmas, there are a comonad $\tuple{T_T, \delta_T, \epsilon_T}$ over $\V_T$ and a self-distributive law $l_T : {T_T}^2 \to {T_T}^2$ given by the following data in $\V$:
    \begin{align*}
        T_T(TA \xrightarrow{f} B) &= T^2A \xrightarrow{l} T^2A \xrightarrow{Tf} TB &
        \delta_T &= T^2 \xrightarrow{\epsilon} T \xrightarrow{\delta} T^2 &
        \epsilon_T &= T^2 \xrightarrow{T\epsilon} T \xrightarrow{\epsilon} 1 &
        l_T &= T^3 \xrightarrow{\epsilon} T^2 \xrightarrow{l} T^2
    \end{align*}
\end{corollary}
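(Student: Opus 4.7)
The plan is to derive this corollary as a formal dualization of the two preceding lemmas, applied in the opposite category. Writing $\V^{\op}$ for the opposite of $\V$, a comonad $(T, \delta, \epsilon)$ on $\V$ is the same datum as a monad on $\V^{\op}$ (with $\mu^T = \delta$ and $\eta^T = \epsilon$), the co-Kleisli category $\V_T$ coincides with $((\V^{\op})_T)^{\op}$, and each of the distributive-law axioms and the Yang--Baxter equation is preserved under reversal of arrows. Hence a self-distributive law of the comonad $T$ on $\V$ is the same datum as a self-distributive law of the induced monad $T$ on $\V^{\op}$.

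Given this setup, I would apply the Kleisli-lifting lemma in $\V^{\op}$ to obtain a monad $T_T$ on $(\V^{\op})_T$ together with a self-distributive law $l_T$, and then transport back along $(-)^{\op}$ to recover the desired comonad $(T_T, \delta_T, \epsilon_T)$ on $\V_T$ and the self-distributive law $l_T$ on $T_T$. The explicit formulas in the statement are obtained by unfolding the monad-case constructions and reversing composition order: the Kleisli action $S_T(f) = l \after Sf$ dualizes to $T_T(f) = Tf \after l$; the unit and multiplication of $T_T$ (given in the monad case by $\eta^T$ whiskered with $\eta^T$ and by $\mu^T$ postcomposed with $\eta^T$) become the counit $\epsilon_T = \epsilon \after T\epsilon$ and comultiplication $\delta_T = \delta \after \epsilon T$; and the formula $l_T = \eta^T \after l$ dualizes to $l_T = l \after \epsilon T^2$, which is exactly the morphism written in the statement.

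The main obstacle, such as it is, lies in the bookkeeping of whiskerings and of the order of composition under dualization: one must verify that, for example, the $\epsilon$ appearing in $\delta_T$ is really $\epsilon T$ (not $T\epsilon$) and the $\epsilon$ in $l_T$ is $\epsilon T^2$. Once the formulas have been matched against the dualized output of the preceding lemmas, no further verification is needed, since every comonad law and every distributive-law axiom for $(T_T, \delta_T, \epsilon_T, l_T)$ on $\V_T$ reduces to the corresponding already-established monad-law instance in $(\V^{\op})_T$. As a sanity check I would verify a single axiom directly in $\V$ (e.g.\ $\epsilon_T \after \delta_T = \id_{T_T}$ read as a morphism in $\V_T$) using the original comonad identities together with the four compatibility axioms for $l$, but this is pure diagram chasing and adds no conceptual content.
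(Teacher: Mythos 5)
Your proposal is correct and matches the paper's (implicit) argument: the corollary is stated as following ``by the duals of the preceding lemmas,'' and your route through $\V^{\op}$ --- transporting the Kleisli-lifting lemmas for monads and unfolding the formulas under arrow reversal, with the whiskering bookkeeping for $\epsilon T$ versus $T\epsilon$ --- is exactly that dualization made explicit. No gap.
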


We also check that co-Kleisli lifting preserves monoidality.

\begin{lemma}
    Let $\tuple{T, \delta, \epsilon}$ and $l$ as the preceding corollary.
    If $T$ preserves binary products, so is $T_T$.
    Moreover, if $T$ is finite-product-preserving, so is $T_T$.
\end{lemma}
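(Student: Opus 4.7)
The plan is to describe the canonical comparison morphism for $T_T$ explicitly and exhibit a two-sided inverse in $\V_T$. By the preceding lemma the co-Kleisli category $\V_T$ inherits binary products from $\V$: the right adjoint $U : \V \to \V_T$ of the co-Kleisli adjunction, sending $f$ to $f \after \epsilon$, preserves products because $T$ does, so the product of $A$ and $B$ in $\V_T$ is again $A \times B$ with projection $\pi^T_A$ represented in $\V$ by $\pi_A \after \epsilon_{A \times B} : T(A \times B) \to A$. Pairings in $\V_T$ coincide with pairings in $\V$ at the level of representing morphisms.

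The main computation is that of $T_T \pi^T_A$. By the formula for $T_T$ in the preceding corollary it is represented by $T(\pi_A \after \epsilon) \after l = T\pi_A \after T\epsilon \after l$ in $\V$. The unit axiom of the distributive law, dualised from the defining identity $l \after S\eta^T = \eta^T S$, yields $T\epsilon \after l = \epsilon T$, which collapses the above to $T\pi_A \after \epsilon_{T(A \times B)}$. Writing $\phi = \tuple{T\pi_A, T\pi_B} : T(A \times B) \to TA \times TB$ for the comparison morphism of $T$ in $\V$, which is an iso by hypothesis, the canonical comparison $\theta : T_T(A \times B) \to T_T A \times T_T B$ in $\V_T$ is then represented by $\phi \after \epsilon_{T(A \times B)}$. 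I take the candidate inverse to be represented by $\phi^{-1} \after \epsilon_{TA \times TB} : T(TA \times TB) \to T(A \times B)$. Since co-Kleisli composition is $g \circ_T f = g \after Tf \after \delta$, expanding either composite and invoking naturality of $\epsilon$ along $\phi$ (respectively $\phi^{-1}$) together with the comonad identity $T\epsilon \after \delta = \id$ collapses it to the appropriate component of $\epsilon$, which is precisely the identity morphism in $\V_T$.

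The terminal case is softer: if $T 1 \iso 1$ in $\V$, then $\V_T(X, T 1) = \V(TX, T 1) \iso \V(TX, 1)$ is a singleton, so $T_T 1 = T 1$ is terminal in $\V_T$, whence $T_T 1 \iso 1$ in $\V_T$. The main obstacle, I expect, is to locate the relevant distributive-law axiom: without the identity $T\epsilon \after l = \epsilon T$ the expression for $T_T \pi^T_A$ is an opaque triple composite involving $l$, but once that collapse is in hand the rest of the argument is routine comonad bookkeeping in $\V$.
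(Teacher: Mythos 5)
Your argument is correct, and in fact the paper states this lemma without any proof, so there is nothing to compare it against; what you give is the natural verification one would expect. Two small remarks. First, the justification for binary products in $\V_T$ is slightly off: the right adjoint $J : \V \to \V_T$, $f \mapsto f \after \epsilon$, preserves products unconditionally (it is a right adjoint), so $\V_T$ inherits binary products from $\V$ whether or not $T$ preserves them; the hypothesis that $T$ preserves products is needed only to make the comparison $\phi = \tuple{T\pi_A, T\pi_B}$ invertible (and, in the earlier lemma, to get closedness). Second, your key identity $T\epsilon \after l = \epsilon T$ is exactly the counit axiom forced by the requirement that $T_T$ preserve co-Kleisli identities, so it is available by hypothesis; once you have it, your computation shows $T_T(J\pi_A) = J(T\pi_A)$, so the comparison for $T_T$ is $J\phi$, and your explicit two-sided inverse $J(\phi^{-1})$ checks out by the naturality of $\epsilon$ and $T\epsilon \after \delta = \id$ exactly as you say. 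The terminal-object case is likewise fine.
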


Next, we make $\V_{T^{n+1}}$ enriched over $\V_{T^n}$.
One may think assuming $T$ is internal is enough, but then $T_T$ may not be internal.
We instead take another approach, assuming all $T$-algebras.

\begin{definition}
    Let $\tuple{T, \mu, \eta}$ be a monad over $\V$.
    A \emph{$T$-algebra} is a morphism $\alpha_X : TX \to X$ in $\V$ subject to $\alpha_X \after T{\alpha_X} = \alpha_X \after \mu_X$ and $\alpha_X \after \eta_X = 1$.
\end{definition}

\begin{definition}
    $\tuple{\V, T, \mu, \eta, \alpha}$ is a \emph{category with all $T$-algebras} if $\tuple{T, \mu, \eta}$ is a monad over $\V$ and $\alpha : T \to 1$ is a natural transformation such that for each $X$ in $\V$, $\alpha_X$ is a $T$-algebra.
\end{definition}

Under a certain reasonable assumption, $T$-algebras can be lifted along $T$ with a self-distributive law.

\begin{definition}
    $T$-algebra $\alpha$ is \emph{self-distributive} if $l \after T\alpha = \alpha T$.
\end{definition}

\begin{proposition}
    If $\alpha$ is self-distributive, morphisms $\alpha_T : T_T \to 1$ in $\V_T$ given by $\eta \after \alpha : T \to T$ form a natural transformation of self-distributive $T_T$-algebras.
\end{proposition}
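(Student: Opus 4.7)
The plan is to translate every equation from the Kleisli category $\V_T$ back to $\V$ using two basic facts: Kleisli composition reads $g \circ_{\V_T} f = \mu \after Tg \after f$ (for $f : A \to TB$ and $g : B \to TC$ in $\V$), and the Kleisli lifting acts on morphisms by $T_T(f) = l \after Tf$. Under this dictionary each component $\alpha_{T,X} : T_T X \to X$ in $\V_T$ is represented in $\V$ by $\eta_X \after \alpha_X : TX \to TX$, and the proposition splits into three verifications: (i) $\alpha_T$ is a natural transformation $T_T \Rightarrow 1_{\V_T}$ in $\V_T$; (ii) each component is a $T_T$-algebra in $\V_T$; (iii) the self-distributivity equation $l_T \after T_T \alpha_T = \alpha_T T_T$ holds.

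A preliminary computation I would carry out first is the $\V$-level identity $T_T \alpha_{T,X} = l \after T\eta \after T\alpha = \eta T \after T\alpha$, where the second equality uses the distributive-law axiom $l \after T\eta^T = \eta^T T$. With this in hand, step (i) becomes straightforward: the naturality square for any $f : X \to Y$ in $\V_T$ unfolds into a $\V$-diagram that closes by naturality of $\alpha$ and $\eta$ together with the monad unit laws.

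For (ii), Kleisli expansion reduces the unit axiom $\alpha_{T,X} \circ_{\V_T} \eta^{T_T}_X = \id_X$ to the original $T$-algebra equation $\alpha \after \eta = \id$, and the associativity axiom reduces to the $T$-algebra equation $\alpha \after T\alpha = \alpha \after \mu$ after plugging in the formula for $\mu^{T_T}$ derived from the earlier lemma and using the distributive-law axioms plus naturality of $\mu$, $\eta$, and $\alpha$.

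The main obstacle is (iii). Using $l_T = \eta T^2 \after l$ together with the formula for $T_T \alpha_T$ obtained above, both $l_T \circ_{\V_T} T_T \alpha_T$ and $\alpha_T T_T$ expand to $\V$-morphisms of the same source and target, whose equality must then be forced. The critical step will be an appeal to the hypothesis $l \after T\alpha = \alpha T$, preceded by naturality moves for $l$, $\eta$, and $\alpha$ to bring the two sides into alignment, and possibly followed by one use of the Yang--Baxter equation to reorganise adjacent copies of $l$. I expect this $\eta$-bookkeeping and the identification of the correct naturality squares to be the chief source of difficulty, though no idea outside the stated axioms should be required.
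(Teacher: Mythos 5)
The paper offers no proof of this proposition, so there is nothing to compare line by line; your decomposition into (i) naturality of $\alpha_T$ in $\V_T$, (ii) the two $T_T$-algebra laws, and (iii) self-distributivity of $\alpha_T$, driven by the dictionary $g \circ_{\V_T} f = \mu \after Tg \after f$ and $T_T f = l \after Tf$, is the natural (essentially the only) way to argue, and your preliminary identity $T_T\alpha_{T,X} = \eta_{TX}\after T\alpha_X$ is correct. Steps (ii) and (iii) go through exactly as you describe: after the $\mu\after T\eta = 1$ cancellations the unit and associativity axioms collapse to $\alpha\after\eta = 1$ and $\alpha\after T\alpha = \alpha\after\mu$, and (iii) reduces, after one naturality move for $\eta$, to a single application of the self-distributivity of $\alpha$ --- the Yang--Baxter equation is in fact never needed.

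The one genuine error is in your account of step (i). Naturality of $\alpha_T$ in $\V_T$ does \emph{not} follow from naturality of $\alpha$ and $\eta$ plus the monad unit laws. Unfolding the square for a Kleisli morphism $f : X \to TY$, the side $f \circ_{\V_T} \alpha_{T,X}$ reduces to $f\after\alpha_X = \alpha_{TY}\after Tf$ (naturality of $\alpha$), while $\alpha_{T,Y}\circ_{\V_T} T_T f$ reduces to $T\alpha_Y\after l_Y\after Tf$; what is needed to equate them is precisely $\alpha_{TY} = T\alpha_Y \after l_Y$, i.e.\ the self-distributivity hypothesis on $\alpha$. Instantiating the square at $f = \id_{TY}$ shows that naturality of $\alpha_T$ is in fact \emph{equivalent} to self-distributivity of $\alpha$, so no amount of bookkeeping with the ingredients you list can close (i). The hypothesis is available to you (you invoke it in (iii)), so the argument is repairable by appealing to it in (i) as well; but as written your justification of (i) is false, and it misrepresents the logical structure of the proposition: self-distributivity of $\alpha$ is what makes $\alpha_T$ a natural transformation in $\V_T$ at all, not merely what makes the lifted algebra self-distributive.
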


Now, we are ready to construct (cartesian) infinitely enriched categories.
We have seen that a finite-product-preserving self-distributive comonad $T$ over a cartesian closed category with all self-distributive $T$-coalgebras induces another such by Kleisli lifting.
This operation can be repeated arbitrarily for finitely many times.
By the next lemma we can regard $\V_T$ as a $\V$-category, so iterating the operation produces a (cartesian) infinitely enriched category.

\begin{lemma}
    Assume given a cartesian closed category with all $T$-coalgebras $\tuple{\V, T, \delta, \epsilon, \alpha}$, and $T$ preserves finite products.
    Then a normal cartesian functor $F : \V_T \to \V$ is given by $F(X) = X$ and $F(f : X \to Y) = X \xrightarrow{\alpha_X} TX \xrightarrow{f} Y$.
\end{lemma}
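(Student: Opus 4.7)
The plan is to verify, in turn, that the prescribed map $F$ is a well-defined functor $\V_T \to \V$, that it strictly preserves finite products, and that its comparison morphism is invertible. Since $F$ is the identity on objects and the cartesian structure of $\V_T$ is inherited from $\V$ (because $T$ preserves finite products), the monoidal coherence data for $F$ may be taken to be identities, so essentially all the work is a computation on morphisms using the coalgebra axioms $\epsilon_X \circ \alpha_X = \id_X$ and $\delta_X \circ \alpha_X = T\alpha_X \circ \alpha_X$, together with naturality of $\alpha$.

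For functoriality, preservation of identities is immediate from the counit coalgebra law: the identity $X \to X$ in $\V_T$ is $\epsilon_X : TX \to X$, and $F(\epsilon_X) = \epsilon_X \circ \alpha_X = \id_X$. For composition, given $f : TX \to Y$ and $g : TY \to Z$ representing $\V_T$-morphisms, the co-Kleisli composite is $g \circ Tf \circ \delta_X$, and I must check that $g \circ Tf \circ \delta_X \circ \alpha_X = g \circ \alpha_Y \circ f \circ \alpha_X$. This follows by first using the coassociativity coalgebra law to rewrite $\delta_X \circ \alpha_X$ as $T\alpha_X \circ \alpha_X$, then naturality of $\alpha$ at $\alpha_X$ to get $\alpha_{TX} \circ \alpha_X$, and finally naturality of $\alpha$ at $f$ to move $\alpha$ past $f$. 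Preservation of finite products is then almost automatic: the co-Kleisli projection $\pi_i \circ \epsilon_{X \times Y}$ becomes $\pi_i$ after postcomposition with $\alpha_{X \times Y}$ by the counit law, and pairings are preserved because $F$ is the identity on objects.

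For normality, I identify the comparison morphism with the map $F : \V_T(1,X) = \V(T1, X) \to \V(1, X)$ sending $\phi$ to $\phi \circ \alpha_1$; internally in $\V$, this is $[\alpha_1, X] : [T1, X] \to [1, X]$. Since $T$ preserves the terminal, $T1$ is itself terminal, so the counit $\epsilon_1 : T1 \to 1$ is the unique iso and its section $\alpha_1$ is forced to be its inverse; in particular $\alpha_1$ is an isomorphism. Hence precomposition with $\alpha_1$ is a bijection externally, and $[\alpha_1, X]$ is an isomorphism in $\V$, which is exactly what normality demands.

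The only mildly delicate step, I expect, is the composition calculation, where one must orchestrate coassociativity and two different instances of naturality of $\alpha$ in the right order; the cartesian and normality conditions then both fall out cleanly, the former from the counit coalgebra law and the latter from the single observation that $\alpha_1 : 1 \to T1$ is an isomorphism whenever $T$ preserves the terminal object.
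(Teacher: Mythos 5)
Your proof is correct and its key step coincides with the paper's: normality is reduced to the invertibility of $\alpha_1 : 1 \to T1$, which follows from $T1 \iso 1$ together with the coalgebra law and naturality of $\alpha$, exactly as the paper's (much terser) proof indicates. The functoriality and product-preservation computations you spell out are routine checks the paper leaves implicit, and your versions are correct.
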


\begin{proof}
    Normality follows from $T1 \iso 1$ and naturality of $\alpha$.
    Use the fact that $\delta_{1} = \alpha_{T1} = {!} : T1 \to T^2{1}$.
\end{proof}

\begin{proposition}
    $\V_{T^n}$ forms a cartesian infinitely enriched category, with $\V_{T^{n-k}}$ as the $k$-th enriching category.
\end{proposition}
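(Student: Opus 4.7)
The plan is to establish the proposition by an inductive construction that feeds the preceding lemmas and corollaries of this subsection back into themselves, so that co-Kleisli lifting preserves the full packet of hypotheses at every step.

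First I would collect the base structure: assume $\V$ is cartesian closed, carrying a finite-product-preserving self-distributive comonad $\tuple{T, \delta, \epsilon}$, a self-distributive natural family $\alpha : T \to 1$ of $T$-coalgebras, and a self-distributive law $l : T^2 \to T^2$. I would then verify that the same package of hypotheses transfers to $\tuple{\V_T, T_T, \delta_T, \epsilon_T, \alpha_T, l_T}$: the lemma on product-preserving comonads over a ccc yields cartesian closedness of $\V_T$; the corollary on co-Kleisli lifted comonads and self-distributive laws supplies the lifted comonad $T_T$ and the lifted law $l_T$; the subsequent lemma gives that $T_T$ preserves finite products; and the proposition on $\alpha_T$ supplies the lifted self-distributive $T_T$-coalgebra. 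Hence the construction is closed under one round of lifting and so may be iterated arbitrarily often.

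Next I would iterate: define inductively $\V_{T^0} = \V$ and $\V_{T^{m+1}} = (\V_{T^m})_{T_{T^m}}$. By induction every $\V_{T^m}$ is cartesian closed and carries all the required structure, so the final lemma of the subsection yields a normal cartesian functor $F_m : \V_{T^{m+1}} \to \V_{T^m}$ for every $m$. I would then reindex so that $\VV_k = \V_{T^{n-k}}$ for $0 \le k \le n$ and $\VV_k = \V$ with $\Box_k = \id$ for $k > n$, obtaining an $\omega$-indexed diagram in $\NorSMCCat$ in which every transition functor is a normal cartesian functor. By the definition of cartesian infinitely enriched category this exhibits $\V_{T^n}$ as the base of such a structure, with $\V_{T^{n-k}}$ as its $k$-th enriching component, as claimed.

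The main obstacle is not any single step, each of which is handled by a previously proved lemma, but the bookkeeping of the inductive invariant: one must be sure that \emph{all} of the hypotheses (cartesian closedness, finite-product-preservation of the comonad, self-distributivity of both the law $l$ and the coalgebra $\alpha$) propagate simultaneously through the co-Kleisli lifting. Self-distributivity, in particular, is precisely the Yang-Baxter-type axiom engineered to survive lifting; confirming that each lifted structure continues to satisfy these axioms against its lifted neighbours is where the real content lies, and it is essentially what the several lemmas and corollaries immediately preceding the proposition are organised to guarantee.
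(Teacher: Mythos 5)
Your proof is correct and follows essentially the same route the paper intends: the paper gives no separate proof of this proposition, treating it as the immediate consequence of iterating the co-Kleisli lifting (which, by the preceding lemmas and corollaries, preserves cartesian closedness, finite-product preservation of the comonad, and self-distributivity of both $l$ and the coalgebra family), applying the final lemma to obtain normal cartesian functors $\V_{T^{m+1}} \to \V_{T^m}$, and padding with trivial self-enrichments to obtain the $\omega$-diagram in $\NorSMCCat$. The only slip is notational: for a comonad the coalgebra family runs $\alpha : 1 \to T$ (components $\alpha_X : X \to TX$), not $\alpha : T \to 1$, which is the monad/algebra direction the paper dualizes.
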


The theory of such comonads has a non-trivial model.

\begin{example}
    Letting $A$ be a monoid object, ${(-)}^A$ forms a comonad that preserves finite products.
    Then the morphism $! : A \to 1$ gives rise to coalgebras $X \to X^A$, and a self-distributive law is given by the swapping $\sigma = \tuple{\pi_2,\pi_1}$.
    The additive monoid over $\N$ in $\Set$ is a typical example, which is a model for the comonadic framework of stream programming.
\end{example}

\section{Modal Axioms and Effects}\label{sec:models}

In this section, we show and discuss various instances of the change-of-base semantics.

\subsection{LTL Next Modality}

Full and faithful cartesian closed functors are known to give a complete semantics for {\lambdacircle}, which corresponds to the ``next'' fragment of intuitionistic linear temporal logic (LTL) in the sense of Curry-Howard~\cite{DBLP:journals/jacm/Davies17,Benaissa98}.
However, the requirement for full faithfulness seems weird to us because the next operator is known to be characterized by the following additional axiom scheme \cite{DBLP:journals/iandc/KojimaI11}:
\[
    (\Circ A \to \Circ B) \to \Circ(A \to B),
\]
which seems to assert closedness of the functor $\Circ$, not requiring the full faithfulness.
The following theorem reveals that full faithfulness, in fact, requires normality of the cartesian closed functor.

\begin{theorem}
    Assume given a cartesian monoidal functor $\Circ : \V \to \W$ between cartesian closed categories.
    $\Circ$ is full and faithful if and only if $\Circ$ is normal and strong closed, i.e., $\widetilde{\Circ\ev} : \Circ[X,Y] \to [\Circ X,\Circ Y]$ is an isomorphism.
\end{theorem}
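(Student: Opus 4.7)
The plan is to prove both directions by chasing natural bijections of hom-sets through the cartesian closed adjunctions of $\V$ and $\W$ and the comparison data $\iota$, $\mu$, $\widetilde{\Circ\ev}$ associated with the lax monoidal $\Circ$.

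For the sufficient direction (normal and strong closed imply ff), I compose four bijections
\[
    \V(X, Y) \iso \V(1, [X, Y]) \iso \W(1, \Circ[X, Y]) \iso \W(1, [\Circ X, \Circ Y]) \iso \W(\Circ X, \Circ Y),
\]
using the CCC adjunction in $\V$, the normal comparison $\theta_{[X, Y]}$, postcomposition by $\widetilde{\Circ\ev}$, and the CCC adjunction in $\W$. A routine unwinding of the composite on a morphism $f : X \to Y$, using the defining equation $\ev \after (\widetilde{\Circ\ev} \times \id) = \Circ\ev \after \mu$, naturality of $\mu$, and the left-unit coherence $\Circ\lambda \after \mu_{1, X} \after (\iota \times \id) \after \lambda^{-1} = \id_{\Circ X}$ available for any lax cartesian monoidal functor, shows the composite equals $\Circ f$; hence $\Circ$ is full and faithful.

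For the necessary direction, I first establish normality. Faithfulness gives $\W(\Circ 1, \Circ 1) \iso \V(1, 1) = \{\id\}$, so writing $t : \Circ 1 \to 1$ for the terminal map, the endomorphism $\iota \after t$ of $\Circ 1$ is forced to be $\id_{\Circ 1}$, and dually $t \after \iota$ is the unique element of $\W(1, 1)$, namely $\id_1$; hence $\iota$ is an iso, and $\theta_X$ factors as the ff bijection $\V(1, X) \iso \W(\Circ 1, \Circ X)$ followed by precomposition with the iso $\iota$, so $\theta_X$ is itself a bijection. For strong closedness, the chain of bijections above read backwards shows that postcomposition by $\widetilde{\Circ\ev}$ induces a bijection on $\W(1, -)$; the analogous argument with $\V(Z' \times X, Y) \iso \W(\Circ Z' \times \Circ X, \Circ Y)$ extends this to a bijection on $\W(\Circ Z', -)$ for every $Z' \in \V$. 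I then construct the inverse morphism $[\Circ X, \Circ Y] \to \Circ[X, Y]$ by transposing $\ev : [\Circ X, \Circ Y] \times \Circ X \to \Circ Y$ through the universal property of $\Circ[X, Y]$ transferred along $\Circ$, using ff and the iso $\iota$ to identify the requisite morphism.

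The hard step is this last one: promoting the pointwise hom-set bijections to a genuine isomorphism of $\W$-objects. The bijections derived directly from ff and normality cover only hom-sets out of objects in the essential image of $\Circ$ and out of $1$; extending to arbitrary $Z \in \W$ requires carefully invoking the universal property of the exponential $[\Circ X, \Circ Y]$ in $\W$ together with the lax monoidal coherence of $\mu$, so as to produce an actual morphism-level inverse of $\widetilde{\Circ\ev}$ rather than merely a bijection of global sections.
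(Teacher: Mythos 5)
The paper states this theorem without any proof, so there is nothing of its own to compare your argument against; I can only assess the proposal on its merits.

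Your ``if'' direction is correct: the chain $\V(X,Y) \iso \V(1,[X,Y]) \iso \W(1,\Circ[X,Y]) \iso \W(1,[\Circ X,\Circ Y]) \iso \W(\Circ X,\Circ Y)$ does compose to $f \mapsto \Circ f$ by exactly the computation you describe (naturality of $\mu$, the defining equation of $\widetilde{\Circ\ev}$, and the left-unit coherence), so normal plus strong closed gives full faithfulness. In the converse, your normality argument is also correct and clean: terminality forces $\W(1,1)$ and, via faithfulness, $\W(\Circ 1,\Circ 1)$ to be singletons, so $\iota$ is invertible and $\theta_X$ is a composite of two bijections.

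The step you defer as ``the hard step,'' however, is not merely hard: it cannot be carried out, because the implication ``full and faithful $\Rightarrow$ strong closed'' is false under the stated hypotheses. Your argument establishes that postcomposition with $\widetilde{\Circ\ev}$ is bijective on $\W(W,-)$ only for $W=1$ and $W=\Circ Z'$; Yoneda needs this for \emph{every} $W$ in $\W$, and objects outside the essential image of $\Circ$ are genuinely out of reach. Concretely, let $\W$ be the Boolean algebra $\{\emptyset,\{a\},\{b\},\{a,b\}\}$ ordered by inclusion, regarded as a posetal cartesian closed category, and let $\V$ be the full subposet on $\{\emptyset,\{a\},\{a,b\}\}$, a three-element chain and hence itself a Heyting algebra. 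The inclusion $\Circ$ preserves the terminal object and binary meets on the nose, so it is cartesian monoidal; it is full and faithful; and it is normal, since $\V(1,X)$ and $\W(1,\Circ X)$ are simultaneously empty or simultaneously singletons. Yet $\Circ[\{a\},\emptyset]=\emptyset$ while $[\Circ\{a\},\Circ\emptyset]=\{b\}$, so $\widetilde{\Circ\ev}$ is the non-invertible morphism $\emptyset\subseteq\{b\}$. Note that postcomposition with it \emph{is} bijective on $\W(1,-)$ and on every $\W(\Circ Z',-)$ (in each case the two hom-sets have the same cardinality, $0$ or $1$), which is precisely why your partial argument cannot detect the failure; the defect is only visible from $\{b\}$, which is not in the image of $\Circ$. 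So either the theorem requires a stronger hypothesis (for instance, taking strong closedness as part of the premise, in which case only ``full and faithful $\Leftrightarrow$ normal'' remains and your two correct arguments already suffice --- this matches the prose preceding the theorem, which emphasizes only that full faithfulness requires normality), or the left-to-right half of the stated equivalence must be weakened.
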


\begin{lemma}
    The Yoneda embedding $y : \V \to \PSh{\V}$ is a normal cartesian closed functor.
\end{lemma}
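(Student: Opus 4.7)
The plan is to reduce the lemma directly to the preceding theorem, which asserts that among cartesian monoidal functors between cartesian closed categories, being fully faithful is equivalent to being both normal and strong closed. So the strategy is to verify the three hypotheses for the Yoneda embedding and then invoke this theorem.

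First I would observe the ambient setting: with $\V$ cartesian closed, the presheaf category $\PSh{\V}$ is also cartesian closed by a standard construction (products computed pointwise, exponentials given by the end/hom-of-natural-transformations formula). Second, I would verify that $y$ is cartesian monoidal. Since limits in $\PSh{\V}$ are computed pointwise and the representable $\V(X,-)$ preserves limits, one has $y(A \times B)(X) = \V(X, A \times B) \iso \V(X,A) \times \V(X,B) = (y(A) \times y(B))(X)$ naturally in $X$, and similarly $y(1) \iso 1$. Third, $y$ is fully faithful by the classical Yoneda lemma.

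Having confirmed these three hypotheses, the preceding theorem (in its ``only if'' direction) immediately gives that $y$ is normal and strong closed. Strong closedness combined with cartesianness of $y$ yields preservation of exponentials up to isomorphism, so $y$ is a cartesian closed functor in the strong sense; together with normality, this is exactly the claim of the lemma.

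No serious obstacle is expected, since the lemma is essentially a corollary of the preceding theorem paired with the classical properties of the Yoneda embedding. The only item requiring care is a terminological one: making sure that ``normal cartesian closed functor'' in the lemma's statement aligns with ``normal, cartesian monoidal, and strong closed'' in the theorem's sense, i.e., that preservation of exponentials is meant up to isomorphism rather than merely laxly. Given the usage elsewhere in the paper, this identification is immediate.
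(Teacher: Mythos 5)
Your proof is correct and follows exactly the route the paper intends: the theorem immediately preceding the lemma is stated precisely so that the lemma follows from the Yoneda embedding being cartesian monoidal (since representables preserve limits and limits of presheaves are pointwise) and fully faithful. The paper gives no separate proof of the lemma, and your verification of the hypotheses, including the terminological identification of ``normal cartesian closed'' with ``normal, cartesian, and strong closed,'' supplies the details in the expected way.
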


Therefore, every cartesian closed category may be regarded as a category enriched over the category of presheaves.

\begin{corollary}
    Given a cartesian closed category $\V$, $y$ gives rise to a cartesian doubly enriched category $\xymatrix@C-=0.5cm{\V & \PSh{\V} \ar@{~>}[l]}$.
\end{corollary}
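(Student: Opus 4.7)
The plan is to construct the required infinitely enriched category by padding the single non-trivial functor $y : \V \to \PSh{\V}$ with identities above level $1$. Recalling the definition, a cartesian infinitely enriched category is a functor $\VV : \omega \to \NorSMCCat$ whose components are cartesian closed and whose transition arrows are cartesian. I would set $\VV_0 = \V$ and $\VV_n = \PSh{\V}$ for every $n \geq 1$, declaring the transitions to be $\Box_0 = y$ and $\Box_n = \id_{\PSh{\V}}$ for $n \geq 1$. Only the bottom step is non-trivial, giving exactly the doubly enriched shape $\V \;{\leftsquigarrow}\; \PSh{\V}$ pictured in the statement.

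Next I would check the two conditions in turn. Cartesian closure of each component is immediate: $\V$ is by hypothesis, and $\PSh{\V}$ is a standard cartesian closed category with pointwise finite products and exponentials $[F,G](X) = \PSh{\V}(yX \times F, G)$. For the transitions, the identities are trivially normal and cartesian, while $\Box_0 = y$ is normal cartesian closed by the preceding lemma. A cartesian functor between cartesian closed categories is automatically a normal symmetric monoidal functor, so the assembled data really does land in $\NorSMCCat$, and composition with the identities preserves this.

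The main obstacle sits inside the preceding lemma rather than in the corollary itself: one must see that the Yoneda embedding preserves finite products (which follows from representables being limit-preserving) and is both normal and strong closed; these two properties unfold via the Yoneda isomorphism $\PSh{\V}(yX, yY) \cong \V(X, Y)$ together with the standard exponential formula above, so that both $\widetilde{y\ev}$ and the comparison morphism become instances of the Yoneda iso. Once the preceding lemma is in hand, the corollary is purely formal assembly, so I would spend the written proof space on exhibiting the data $(\VV_n, \Box_n)$ explicitly and invoking the preceding lemma, rather than on further calculation.
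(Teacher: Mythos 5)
Your construction is exactly the paper's (implicit) argument: pad the single non-trivial step $y:\V\to\PSh{\V}$ with identity self-enrichments above level~$1$, check cartesian closure of $\V$ and $\PSh{\V}$, and invoke the preceding lemma for normality and cartesian closedness of $y$; the corollary is indeed just formal assembly on top of that lemma. One sentence of your justification is false as stated, though: a cartesian (finite-product-preserving) functor between cartesian closed categories is automatically a \emph{symmetric monoidal} functor for the cartesian monoidal structures, but it is \emph{not} automatically normal --- e.g.\ $X\mapsto X\times X$ on $\Set$ preserves finite products, yet its comparison map $\Set(1,X)\to\Set(1,X\times X)$ is the diagonal, not an isomorphism. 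This does not break your proof, since normality of $y$ is supplied independently by the preceding lemma and normality of the identities is trivial; the bridge you actually need is ``normal cartesian $\Rightarrow$ normal symmetric monoidal,'' which does hold and lands the data in $\NorSMCCat$ as required.
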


\begin{remark}
    Cartesian doubly enriched categories arising from the Yoneda embedding are also used by Hofmann \cite{DBLP:conf/lics/Hofmann99} to offer categorical semantics of higher-order abstract syntax (HOAS), and by Bekki and Masuko \cite{DBLP:conf/jsai/Bekki08,Bekki14} to study semantics of natural languages.
    The enriched category-theoretic viewpoint could lead us to further analysis of the underlying connections between modal logics and domain-specific languages.
\end{remark}

\subsection{S4 Modality and Linear Non-linear Models}\label{sec:linearnonlinear}

S4 is a normal modal logic characterized by the following axiom schemata: $\Box A \to A$ and $\Box A \to \Box\Box A$.
Almost clear from the axioms, S4 modality is categorically characterized by (monoidal) comonads\cite{Bierman96}.
As well as Kripke categories, we can perform iterative change-of-base construction along the underlying monoidal functor of a monoidal comonad.

\begin{lemma}
    Given a monoidal comonad $\Box$ over a cartesian closed category $\V$, there are $\V$-functors $\epsilon_n : \Box^{n+1}_{\ast}\V \to \Box^{n}_{\ast}\V$ and $\delta_n : \Box^{n+1}_{\ast}\V \to \Box^{n+2}_{\ast}\V$ for each $n$, subject to
    $\delta_{n+1}\after\delta_n = \Box_\ast{\delta_n}\after\delta_n$ and $\epsilon_{n+1}\after\delta_n = 1 = \Box_\ast{\epsilon_n}\after\delta_n$.
    The following picture illustrates how $\epsilon_n$ arises.
    \[
        \xymatrix@R-=0.5cm{
            \V \dtwocell^{1}_{\Box}{^\epsilon} \ar@{|->}[r]^-{{(-)}_\ast} & \SMCVCat \dtwocell^{1}_{\Box_\ast}{^\epsilon_{\ast}} \ar@{}[r]|{\ni} & \Box^{n}_\ast{\V} \ar@{|->}[d]_{\Box_\ast} \ar@{|->}[dr]^{1}\\
            \V \ar@{|->}[r]^-{{(-)}_\ast} & \SMCVCat \ar@{}[r]|{\ni} & \Box^{n+1}_\ast{\V} \ar[r]_{\epsilon_n := {\epsilon_\ast}_{\Box^{n}_\ast{\V}}} & \Box^{n}_\ast{\V}
        }
    \]
\end{lemma}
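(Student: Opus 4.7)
The plan is to derive the lemma as a direct consequence of the 2-functoriality of the change-of-base operation ${(-)}_\ast$, applied to the comonad structure on $\Box$. Lemma \ref{lem:cob-symmon-nat} already supplies the 2-cell action: a symmetric monoidal natural transformation between symmetric monoidal functors $\V \to \W$ induces a family of $\W$-functors between the corresponding change-of-base enrichments. Combined with the standard functoriality of ${(-)}_\ast$ on 1-cells, namely $(G \after F)_\ast \A \iso G_\ast(F_\ast \A)$, this supplies enough structure to transport the entire comonad $\tuple{\Box, \delta, \epsilon}$ into the setting of $\V$-enriched categories.

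First I would unpack the monoidal comonad data: $\epsilon$ and $\delta$ are monoidal natural transformations $\Box \Rightarrow 1_\V$ and $\Box \Rightarrow \Box^2$ satisfying the usual counit and coassociativity axioms. Applying ${(-)}_\ast$ yields $\V$-natural transformations $\epsilon_\ast : \Box_\ast \Rightarrow 1_{\VCat}$ and $\delta_\ast : \Box_\ast \Rightarrow \Box_\ast \after \Box_\ast$ between endo-1-cells of $\VCat$. Whiskering with the object $\Box^n_\ast \V$ produces the desired $\V$-functors
\[
    \epsilon_n := {\epsilon_\ast}_{\Box^n_\ast \V} : \Box^{n+1}_\ast \V \to \Box^n_\ast \V,
    \qquad
    \delta_n := {\delta_\ast}_{\Box^n_\ast \V} : \Box^{n+1}_\ast \V \to \Box^{n+2}_\ast \V,
\]
each being the identity on objects, and acting on hom objects by $\epsilon$ and $\delta$ applied to the $\V$-object $\Box^n \V(X,Y)$ respectively. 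This is precisely the construction that the diagram in the lemma depicts.

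The identities in the statement then reduce to whiskered instances of the original comonad laws. The composites $\delta_{n+1} \after \delta_n$ and $\Box_\ast \delta_n \after \delta_n$ are the components at $\Box^n_\ast \V$ of the 2-cells $(\delta_\ast \Box_\ast) \after \delta_\ast$ and $(\Box_\ast \delta_\ast) \after \delta_\ast$, which coincide by 2-functoriality applied to coassociativity $(\delta\Box) \after \delta = (\Box\delta) \after \delta$ of the original $\delta$. The counit equations $\epsilon_{n+1} \after \delta_n = 1 = \Box_\ast \epsilon_n \after \delta_n$ follow by the identical argument applied to $(\epsilon\Box) \after \delta = 1 = (\Box\epsilon) \after \delta$. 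The main obstacle is the bookkeeping needed to make 2-functoriality precise: one must verify that ${(-)}_\ast$ respects horizontal composition of monoidal natural transformations, and that the identification $(\Box \after \Box)_\ast \iso \Box_\ast \after \Box_\ast$ is strict enough (or else compatible with the relevant coherence isomorphisms) that the stated equations hold on the nose rather than merely up to isomorphism. Once this coherence is settled, every required equation is literally the image under ${(-)}_\ast$ of an equation already satisfied by the comonad $\Box$ in $\V$.
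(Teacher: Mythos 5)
Your construction is exactly the one the paper intends: the lemma's own diagram already defines $\epsilon_n := {\epsilon_\ast}_{\Box^{n}_\ast\V}$ as the component of the image of $\epsilon$ under the 2-functor ${(-)}_\ast$, and you extend this to $\delta_n$ and derive the stated equations as whiskered images of the comonad laws, using the strict identification $\Box_\ast(\Box^n_\ast\V) = \Box^{n+1}_\ast\V$ that the paper also assumes (cf.\ $\overline{\Box}_{\ast}(\Box_{\ast}\V) = (\overline{\Box}\after\Box)_{\ast}\V$ in Section~\ref{sec:constructions}). The coherence point you flag is the only real bookkeeping, and it is handled by the 2-functoriality results collected in Appendix~\ref{sec:formal}, so the proposal is correct and matches the paper's approach.
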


There is, however, a more convenient way to characterize a monoidal comonad in terms of enriched categories.
Like ordinary adjunctions, any monoidal comonad arises from a monoidal adjunction.

\begin{lemma}
    Given a monoidal adjunction $F \dashv G : \V \to \W$, $F$ is strong monoidal and $G$ is normal.
\end{lemma}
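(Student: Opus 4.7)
The plan is to handle the two claims in sequence: first establish that $F$ is strong monoidal by appealing to the doctrinal-adjunction principle, then deduce normality of $G$ from it by a short adjunction argument.

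For the strong monoidality of $F$: unpacking the definition of a monoidal adjunction, $G$ carries a lax monoidal structure with unit $\iota_G : I_\V \to GI_\W$ and multiplication $m_G : GX \tensor GY \to G(X \tensor Y)$, and the unit $\eta$ and counit $\epsilon$ of $F \dashv G$ are monoidal natural transformations. Under the mate correspondence these lax maps transpose to oplax structure maps $\phi : FI_\V \to I_\W$ and $\psi_{X,Y} : F(X \tensor Y) \to FX \tensor FY$ on $F$ (concretely, $\phi = \epsilon_{I_\W} \after F\iota_G$, and $\psi_{X,Y}$ is defined analogously from $m_G$). I would then write down the candidate inverses directly from $m_G$, $\iota_G$, $\eta$, and $\epsilon$, and verify by diagram chase that they invert $\phi$ and $\psi$, using the triangle identities together with the monoidality of $\eta$ and $\epsilon$. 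This is the classical observation underlying Kelly's doctrinal adjunction, and the argument is a piece of 2-categorical mate calculus.

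For the normality of $G$: let $\phi : FI_\V \iso I_\W$ be the iso delivered by the previous step, so that $\iota_G = G\phi \after \eta_{I_\V}$. The comparison morphism $\theta_X : \W(I_\W, X) \to \V(I_\V, GX)$ sends $f$ to $Gf \after \iota_G$. Transposing under the adjunction bijection $\V(I_\V, GX) \iso \W(FI_\V, X)$ gives $\epsilon_X \after F(Gf \after \iota_G) = f \after \epsilon_{I_\W} \after FG\phi \after F\eta_{I_\V} = f \after \phi$, where the first equality uses naturality of $\epsilon$ and the second uses a triangle identity. Hence $\theta_X$ factors as precomposition with the iso $\phi$ followed by the adjunction bijection, which is a composite of bijections. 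The identification is natural in $X$, so $\theta$ is invertible and $G$ is normal.

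The main obstacle lies in the doctrinal-adjunction step: the mate-calculus bookkeeping needed to show that the oplax structure maps on $F$ are invertible, although this is classical and purely mechanical once the structures and candidate inverses are written down. The normality corollary is then essentially the one-line computation above.
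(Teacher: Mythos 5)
Your proposal is correct and follows essentially the same route as the paper: strong monoidality of $F$ comes from the doctrinal-adjunction/mate observation that the adjunction bijection carries $\iota_G$ to the inverse of $F$'s unit constraint, and normality of $G$ is obtained by transposing the comparison morphism across $\V(I,GX)\iso\W(FI,X)$ and precomposing with that isomorphism. The paper's proof is just a compressed version of your computation.
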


\begin{proof}
    The natural isomorphism $\V(X,GY) \iso \W(FX,Y)$ sends the unit $\iota_G$ of $G$ to the inverse of the unit $\iota_F$ of $F$.
    Normality of $G$ follows from $I \to X \sslash FI \xrightarrow{\iota_F^{-1}} I \to X \sslash I \xrightarrow{\iota_G} GI \to GX$.
\end{proof}

Therefore, given a (symmetric) monoidal comonad $\Box : \V \to \V$ whose co-Kleisli category is symmetric monoidal closed, we can regard it as a doubly enriched category $\xymatrix@C-=0.5cm{\V & \V_\Box \ar@{~>}[l]}$.

In fact, this phenomenon was perceived at least in the 1990s in the studies of linear logic\cite{DBLP:conf/csl/Benton94}.
Symmetric monoidal closed categories with a monoidal adjunction to a cartesian closed category are called linear non-linear models (LNL models) \cite{DBLP:conf/tlca/Bierman95,Schalk04}, and known to form a categorical model of the IMELL.
Benton proposed in their pioneering paper a logic and its corresponding term calculus that separate linear and classical inference into two distinct types of judgments.
The syntax and semantics can be generalized to other adjunctions, and the system is sometimes called \emph{adjoint logic}~\cite{DBLP:conf/lics/BentonW96}.
In Benton's calculus, linear and classical judgments interact with each other via the following four rules.
The symbols $\mathtt{F}$ and $\mathtt{G}$ are the syntactic counterpart of the adjoint pair $F$ and $G$ above.
\[
    \AXC{$\Gamma \vdash_{\mathcal{C}} s : X$}
    \UIC{$\Gamma; \cdot \vdash_{\mathcal{L}} \mathtt{F}(s) : \mathtt{F}X$}
    \DisplayProof
    \quad
    \AXC{$\Gamma; \Theta \vdash_{\mathcal{L}} e : \mathtt{F}X$}
    \AXC{$\Gamma, x:X; \Xi \vdash_{\mathcal{L}} f : A$}
    \BIC{$\Gamma; \Theta, \Xi \vdash_{\mathcal{L}} \mathop{\mathtt{let}}\mathtt{F}(x) = e \mathrel{\mathtt{in}} f : A$}
    \DisplayProof
    \quad
    \AXC{$\Gamma; \cdot \vdash_{\mathcal{L}} e : A$}
    \UIC{$\Gamma \vdash_{\mathcal{C}} \mathtt{G}(e) : \mathtt{G}A$}
    \DisplayProof
    \quad
    \AXC{$\Gamma \vdash_{\mathcal{C}} s : \mathtt{G}A$}
    \UIC{$\Gamma; \cdot \vdash_{\mathcal{L}} \mathtt{derelict}(s) : A$}
    \DisplayProof
\]

Although $\mathcal{L}$ is a SMCC in original Benton's calculus, it is straightforward to modify it so that $\mathcal{L}$ be a CCC.
Regarding $\vdash_{\mathcal{L}}$ as $\vdash^0$ and $\vdash_{\mathcal{C}}$ as $\vdash^1$, readers notice that the latter two rules are exactly quotation and unquotation.
Therefore, this calculus can be considered an extension of two-level {\lambdabox}.

On the other hand, the first two rules are very much like the dual-context calculus presented in Section~\ref{sec:other}.
Indeed, it is possible to introduce levels to the dual-context system with the same convention as the latter two (i.e., $\vdash_{\mathcal{C}}$ as the meta level of $\vdash_{\mathcal{L}}$).
\[
    \AXC{$\cdot \mid \Gamma \vdash^{l+1} A$}
    \RL{$\Box$-I}
    \UIC{$\Gamma \mid \Gamma' \vdash^{l} \Box A$}
    \DisplayProof
    \qquad
    \AXC{$\Gamma \mid \Gamma' \vdash^{l} \Box A$}
    \AXC{$\Gamma, A \mid \Gamma' \vdash^{l} B$}
    \RL{$\Box$-E}
    \BIC{$\Gamma \mid \Gamma' \vdash^{l} B$}
    \DisplayProof
\]
While boxing of Fitch-style increases the level of a proof, that of dual-context \emph{decreases} it.
It can be seen easily that assignment of levels does not affect the provability.
If we state it formally, the dual-context version of Theorem~\ref{thm:labeling} holds.

By this labeling, we can conclude that Benton's calculus is nothing but a fusion of Fitch-style and dual-context systems restricted to two-levels, where $\mathtt{F}$ is the dual-context box and $\mathtt{G}$ is the Fitch-style box.
As a consequence, one can see that the box modality of the dual-context system is the left adjoint of the box modality of the Fitch-style system.\footnote{
    Such an investigation makes sense as far as neither disjunctions nor bottom is dealt with.
    Since disjunctions are usually interpreted as colimits, a left adjoint $F$ has to preserve disjunctions.
    In this paper, we focus on only box modalities and conjunctions.
}

It is not difficult to extend Benton's calculus to multi-levels.
As we have done in {\lambdabox}, we can just replace outer contexts $\Gamma$ with context stacks $\Delta$.
\[
    \AXC{$\Delta \vdash^{l+1} A$}
    \RL{$\Box$-I}
    \UIC{$\Delta; \Gamma' \vdash^{l} \Box A$}
    \DisplayProof
    \qquad
    \AXC{$\Delta; \Gamma' \vdash^{l} \Box A$}
    \AXC{$\Delta, A; \Gamma' \vdash^{l} B$}
    \RL{$\Box$-E}
    \BIC{$\Delta; \Gamma' \vdash^{l} B$}
    \DisplayProof
\]
These rules are a multi-context version of the dual-context system, which we shall call \emph{multi-context} system.
Although this multi-context system is equivalent to the dual-context system with respect to the provability, types and boxes are strictly categorized into levels.
The multi-level Benton-style calculus can be defined as the union of the multi-context and Fitch-style systems.
A model of the multi-level Benton-style calculus is given as an infinite sequence of monoidal adjunctions, which forms an infinitely enriched category.
\[
    \xymatrix{\cdots \ar@/^/[r] \ar@{}[r]|{\rotatebox{90}{$\vdash$}} & \VV_3 \ar@/^/[r] \ar@{}[r]|{\rotatebox{90}{$\vdash$}} \ar@/^/[l] & \VV_2 \ar@/^/[r] \ar@{}[r]|{\rotatebox{90}{$\vdash$}} \ar@/^/[l] & \VV_1 \ar@/^/[r] \ar@{}[r]|{\rotatebox{90}{$\vdash$}} \ar@/^/[l] & \VV_0 \ar@/^/[l]}
\]
In this sense, the Fitch-style box and the dual-context box are an adjoint pair in the multi-level system.

In Benton's semantics, a $0$-level judgment $\den{\Gamma; \Gamma' \vdash^{0} M : A}$ is interpreted as a morphism $F\den{\Gamma} \times \den{\Gamma'} \to \den{A}$ in $\VV_0$.
If we regard $\VV_0$ as an enriched category through $G$, however, it can be considered a morphism from $\den{\Gamma} \to G[\den{\Gamma'}, \den{A}]$ in $\VV_1$.
The latter interpretation is just the semantics proposed in the previous section.
So, the existence of the left adjoint enables us to interpret the Fitch-style system in a lower level category without enrichment.
This idea has already been proposed by Clouston in \cite{DBLP:journals/corr/abs-1710-08326}, although the monoidal condition of $F$ is relaxed.
In the paper, Clouston gave a sound categorical interpretation for Fitch-style type theory via a diamond modality, which is a left adjoint of the box.
In this sense, we can say that Clouston's interpretation uses a dual-context flavor for the Fitch-style calculus.
Of course, since the diamond is not monoidal, connection between the diamond and the multi-context box is less trivial when more than two contexts are involved.
For example, given a three-level context $\Gamma; \Gamma'; \Gamma''$, it is interpreted as $F(F\den{\Gamma} \times \den{\Gamma'}) \times \den{\Gamma''}$ in Clouston's model.
If $F$ is strong monoidal, it is isomorphic to $F^2\den{\Gamma} \times F\den{\Gamma'} \times \den{\Gamma''}$ and coincides with the interpretation of the multi-context system.

Although readers may worry about the fact that original Clouston's model is not a sequence of adjunctions but one adjunction on one category, it is not essential for the discussions.
If we consider the case all categories in an infinitely enriched category coincide, the box is interpreted as an endofunctor on that category.
Hence, a (normal) Kripke category is a special case of infinitely enriched categories, and our studies can be applied to a Kripke category under the assumption that $G$ is normal.
In addition, conversely, it is possible to generalize Clouston's models and interpretation with multiple levels straightforwardly.
In that case, appropriate levels should be assigned to the occurrences of $F$ in the above paragraph.

\subsection{Multi-staged Effectful Computation}

We can also extend the semantics to effectful computation.
Formal semantics of modal type theories with effects is not well studied because a Kleisli category is not a monoidal category in general.
While one of the authors has provided Gentzen-style semantics for the effectful modal type system in \cite{DBLP:conf/aplas/Kakutani07}, this paper proposes Fitch-style semantics allowing each stage to have a different effect.

\begin{definition}
    A \emph{$\V$-monad} is a lax functor from the terminal 2-category $1$ to $\VCat$.
    Similarly, \emph{monoidal $\V$-monad} and \emph{symmetric monoidal $\V$-monad} are defined as monads in $\MonVCat$ and $\SMVCat$.
\end{definition}

\begin{remark}
    Given a $\V$-monad $T$ over a $\V$-category $\A$, $T_0$ is a monad over $\A_0$.
\end{remark}

\begin{lemma}
    If $\V$ is symmetric monoidal closed, $\V$-monad is exactly the same as strong monad.
\end{lemma}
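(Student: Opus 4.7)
The plan is to invoke the classical correspondence due to Kock between $\V$-enriched endofunctors on the self-enriched category $\V$ and endofunctors equipped with a tensorial strength, and then extend that bijection to natural transformations and monad structures. Since the paper defines a $\V$-monad as a lax functor $1 \to \VCat$, i.e.\ a $\V$-category $\A$ together with a $\V$-endofunctor, a $\V$-unit, and a $\V$-multiplication satisfying the usual monad axioms, and since the notion of strong monad is standardly defined on $\V$ itself, the statement should be read as: $\V$-monads on the self-enriched $\V$ correspond bijectively to strong monads on $\V$.

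First I would set up the bijection at the functor level. Given a $\V$-functor $T : \V \to \V$ with action $T_{A,B} : [A,B] \to [TA, TB]$, define the strength $t_{A,B} : A \otimes TB \to T(A \otimes B)$ as the adjunct of the composite $A \xrightarrow{\widetilde{\mathrm{id}}} [B, A \otimes B] \xrightarrow{T_{B, A \otimes B}} [TB, T(A \otimes B)]$, where $\widetilde{\mathrm{id}}$ is the transpose of $\mathrm{id}_{A \otimes B}$ under the $\otimes \dashv [-,-]$ adjunction. Conversely, given a strong endofunctor with strength $t$, recover $T_{A,B}$ as the adjunct of $T(\ev) \after t_{[A,B], A} : [A,B] \otimes TA \to TB$. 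One then checks that these assignments are mutually inverse, preserve identities and composition, and that the strength coherence axioms (unit and associativity for the strength) correspond precisely to the $\V$-functoriality axioms for $T_{A,B}$.

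Next I would verify that $\V$-natural transformations between $\V$-functors $S, T : \V \to \V$ correspond exactly to strong natural transformations. Unfolding the $\V$-naturality square through the tensor/hom adjunction yields the square expressing compatibility of the underlying ordinary natural transformation with the two strengths; one shows this correspondence is a bijection. Applying this at the units and multiplications of a $\V$-monad yields strong natural transformations, and the monad associativity and unit axioms, being equations of $\V$-natural transformations, translate directly under the bijection to the corresponding axioms of a strong monad; the converse direction is obtained by reversing the construction.

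The main obstacle is the natural-transformation step: verifying that enriched naturality is exactly strength-compatibility requires a careful diagram chase that genuinely uses symmetry and closedness of $\V$. The functor-level bijection and the monad-law verifications are then routine transpositions. Since these calculations are essentially the content of Kock's original theorem on strong functors and monads over a symmetric monoidal closed base, I would cite that result and only sketch the key transposition rather than grind through every diagram.
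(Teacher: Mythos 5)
The paper states this lemma without any proof---it is a restatement of Kock's classical theorem on strong functors and enriched endofunctors over a symmetric monoidal closed base---so your reconstruction is being compared against an implicit citation rather than a written argument. Your reading of ``$\V$-monad'' as a $\V$-monad on the self-enriched $\V$ is the intended one (confirmed by the subsequent lemma on Kleisli enrichment), and your transposition bijection between hom-object actions $[A,B]\to[TA,TB]$ and strengths $A\otimes TB\to T(A\otimes B)$, extended to naturality and the monad laws, is correct and is exactly the argument the authors rely on.
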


It immediately follows that every monad on $\Set$ is strong. Assume that $\V$ is symmetric and closed.

\begin{lemma}
    Given a $\V$-monad $T$ on $\V$, or equivalently given a strong monad, the Kleisli category $\V_T$ is canonically enriched over $\V$.
    Moreover, the Kleisli adjunction between $\V$ and $\V_T$ is also $\V$-enriched.
\end{lemma}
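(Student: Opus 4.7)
The plan is to exploit the fact that a $\V$-monad on $\V$ is exactly a monad in the 2-category $\VCat$, so the classical Kleisli construction can be carried out entirely internally to $\VCat$. All that is then required is to unpack this internal construction into explicit formulas on objects, hom-objects, identities and composition.

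First I would build $\V_T$ explicitly. Objects are those of $\V$, and hom-objects are $\V_T(X,Y) := \V(X, TY)$. The identity $I \to \V_T(X,X)$ is the component $\eta_X : I \to \V(X, TX)$ of the enriched unit of $T$. Composition $\V_T(Y,Z) \tensor \V_T(X,Y) \to \V_T(X,Z)$ is the composite
\[
\V(Y, TZ) \tensor \V(X, TY) \xrightarrow{T \tensor \id} \V(TY, T^2Z) \tensor \V(X, TY) \xrightarrow{\after} \V(X, T^2Z) \xrightarrow{\V(X, \mu_Z)} \V(X, TZ),
\]
where the middle arrow is enriched composition in $\V$. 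Associativity of this composition reduces to associativity of enriched composition in $\V$, $\V$-functoriality of $T$ (used to commute $T$ past composition), and the monad axiom $\mu \after T\mu = \mu \after \mu T$. The unit laws similarly reduce to $\mu \after T\eta = \id = \mu \after \eta T$.

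Next I would construct the Kleisli adjunction. Define $F : \V \to \V_T$ to be the identity on objects, with action on hom-objects $\V(X,Y) \xrightarrow{\V(X, \eta_Y)} \V(X, TY) = \V_T(X,Y)$; and define $G : \V_T \to \V$ by $GX := TX$, with action $\V(X, TY) \xrightarrow{T} \V(TX, T^2Y) \xrightarrow{\V(TX, \mu_Y)} \V(TX, TY)$. $\V$-functoriality of each follows from the monad laws together with $\V$-functoriality of $T$ and $\V$-naturality of $\eta$ and $\mu$. The $\V$-adjunction $F \dashv G$ is then witnessed by the literal equality of hom-objects
\[
\V_T(FX, Y) = \V(X, TY) = \V(X, GY),
\]
with $\V$-naturality in both arguments a direct consequence of the definitions of $F$ and $G$ above.

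The main obstacle is bookkeeping rather than insight: the $\V$-category and $\V$-functoriality axioms unfold into commutative diagrams in $\V$ whose commutativity must be pieced together from several instances of enriched-composition associativity, $\V$-functoriality of $T$, and $\V$-naturality of $\eta$ and $\mu$. No new idea is needed; the argument is precisely the internalisation to $\VCat$ of the classical Kleisli adjunction, which is legitimate because $T$ is by assumption a monad in $\VCat$.
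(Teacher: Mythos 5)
Your proof is correct: it is the standard enriched Kleisli construction, with $\V_T(X,Y)=\V(X,TY)$ built from the self-enrichment of the closed category $\V$, Kleisli composition $\mu\after Tg\after f$ expressed diagrammatically, and the adjunction witnessed by the literal equality $\V_T(FX,Y)=\V(X,GY)$. The paper states this lemma without any proof, and your argument is exactly the routine verification the authors are eliding; the only small caveat is that associativity of Kleisli composition also needs $\V$-naturality of $\mu$ (to push $T$ of a composite past $\mu$), which you correctly list among your ingredients in the closing paragraph but omit from the associativity sentence itself.
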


\begin{proposition}
    Given symmetric monoidal closed categories $\V$ and $\W$, a normal monoidal functor $F : \V \to \W$, and a $\V$-monad $T$ on $\V$, $\V_T$ is canonically enriched over $\W$.
\end{proposition}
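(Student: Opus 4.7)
The plan is to combine the preceding lemma with the change-of-base construction. By that lemma, since $T$ is a $\V$-monad on $\V$, the Kleisli category $\V_T$ is canonically $\V$-enriched, with hom objects $\V_T(X,Y) = [X, TY]$ (the internal hom of $\V$) and composition arising from the Kleisli multiplication together with the closed structure. The monoidal functor $F : \V \to \W$ then induces, via the earlier change-of-base proposition, a $\W$-category $F_*(\V_T)$ having the same objects as $\V_T$ and hom objects
\[
    F_*(\V_T)(X,Y) \;=\; F\,[X,TY].
\]
Composition in $F_*(\V_T)$ is obtained by applying $F$ to the Kleisli composition morphism of $\V_T$, precomposed with the laxator of $F$; the unit and associativity axioms of a $\W$-enrichment then follow automatically from the corresponding axioms of $\V_T$ as a $\V$-category together with the monoidal-functor axioms of $F$. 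This is exactly the content of change of base, so no fresh verification is required at this step.

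The role of normality of $F$ is to justify the word ``canonically'' in the statement: by the earlier corollary, if $F$ is normal then the induced functor $\overline{F} : \underlying{(\V_T)} \to \underlying{(F_*(\V_T))}$ is an isomorphism. Hence $F_*(\V_T)$ has precisely the same underlying ordinary category as the Kleisli construction $\V_T$, so one is entitled to regard $\V_T$ itself (not some base-changed surrogate) as a $\W$-category. Without normality, one would still obtain a $\W$-category but its underlying category could differ from $\V_T$, and the word ``canonical'' would lose its force.

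The main obstacle, such as it is, is purely bookkeeping: checking that the composition
\[
    F[X,TY] \tensor F[Y,TZ] \to F\bigl([X,TY] \tensor [Y,TZ]\bigr) \xrightarrow{F(c)} F[X,TZ]
\]
produced by change of base (where $c$ is the Kleisli composition of $\V_T$) agrees with the morphism one would write down directly using the strength of $T$, and that the unit $I \to F[X,TX]$ factors through $F\iota_X$ as expected. Since both pieces are instances of already-proved facts (the preceding lemma for the $\V$-enrichment, and the change-of-base proposition for the pushforward), this amounts to invoking the two results in succession.
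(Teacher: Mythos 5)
Your argument is correct and is exactly the route the paper intends: the paper leaves this proposition without an explicit proof, but the preceding lemma (Kleisli enrichment of $\V_T$ over $\V$ via the strength) followed by change of base along the normal monoidal functor $F$, with normality guaranteeing the underlying category of $F_\ast(\V_T)$ is still $\V_T$, is precisely the implied two-step composition. Your identification of the hom objects as $F[X,TY]$ and of normality as what licenses saying ``$\V_T$ itself'' is enriched over $\W$ matches the paper's framework.
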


The above construction gives semantics for a call-by-value multi-staged lambda calculus such that each stage has its own effect.
\[
    \xymatrix@R-=0.5cm{
        \ar@{}[d]{\cdots} \ar@{~>}[r] \ar@{~>}[rd] & \VV_3 \ar@(ul,ur)[]^{T_3} \ar@{~>}[r] \ar@{~>}[rd] \ar@{~>}[d] & \VV_2 \ar@{~>}[d] \ar@{~>}[rd] \ar@(ul,ur)[]^{T_2} \ar@{~>}[r] & \VV_1 \ar@{~>}[d] \ar@{~>}[rd] \ar@(ul,ur)[]^{T_1} \ar@{~>}[r] & \VV_0 \ar@{~>}[d] \ar@(ul,ur)[]^{T_0} \\
        & {\VV_3}_{T_3} & {\VV_2}_{T_2} & {\VV_1}_{T_1} & {\VV_0}_{T_0}
    }
\]
We can also make Kleisli categories monoidal to enrich some other categories.
In this case, the enrichment of a Kleisli category can be explained as change-of-base construction.

\begin{lemma}
    Given a monoidal (resp. symmetric monoidal) $\V$-monad $T$ on a (resp. symmetric monoidal) $\V$-category $\A$, the enriched Kleisli category $\A_T$ is monoidal (resp. symmetric monoidal).
\end{lemma}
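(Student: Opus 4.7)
The plan is to transport the monoidal structure of $\A$ to $\A_T$ along the Kleisli inclusion, using the lax monoidal structure of $T$ to restore functoriality. On objects I set $X \tensor Y$ in $\A_T$ to be $X \tensor Y$ in $\A$, and take the unit to be the unit $I$ of $\A$. A monoidal $\V$-monad comes equipped with $\V$-natural morphisms $\phi_{X,Y} : TX \tensor TY \to T(X \tensor Y)$ and $\phi_0 : I \to TI$ satisfying the lax monoidal coherence together with the two compatibilities between $\phi$ and the monad structure $(\eta, \mu)$. I would then define the tensor on hom-objects of $\A_T$ as the composite
\[
    \A_T(X,X') \tensor \A_T(Y,Y') \to \A(X \tensor Y,\, TX' \tensor TY') \to \A(X \tensor Y,\, T(X' \tensor Y')),
\]
where the first arrow is the action of $\tensor : \A \times \A \to \A$ as a $\V$-functor on hom-objects, and the second is post-composition with $\phi$.

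Next I would check that this is a $\V$-functor $\A_T \times \A_T \to \A_T$. Preservation of identities reduces to the unit compatibility $\phi \after (\eta \tensor \eta) = \eta$, and preservation of Kleisli composition is precisely the interchange diagram between $\mu$ and $\phi$ that is part of the definition of a monoidal monad. $\V$-functoriality is automatic because the tensor on $\A$ is already a $\V$-functor and $\phi$ is $\V$-natural. The associator, left and right unitors, and in the symmetric case the braiding, of $\A_T$ are then obtained by post-composing the corresponding $\V$-natural isomorphisms of $\A$ with $\eta$; the pentagon, triangle, and hexagon identities in $\A_T$ follow from the analogous identities in $\A$ together with naturality of $\eta$ and its compatibility with $\phi$.

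The main obstacle is the verification that $\tensor$ preserves Kleisli composition, because this is the only step where the nontrivial interaction between $\mu$ and $\phi$ really has to be used; the remaining steps essentially lift diagrammatic identities of $\A$ through $\eta$. The symmetric variant requires no extra work beyond invoking the compatibility of $\phi$ with the braiding, which is precisely the distinguishing property of a symmetric monoidal $\V$-monad.
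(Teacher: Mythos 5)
Your proof is correct and follows what is evidently the intended argument; the paper itself states this lemma without proof, so there is nothing to contrast it with. You correctly identify the crux: bifunctoriality of the Kleisli tensor is exactly the statement that $\mu$ is a monoidal $\V$-natural transformation (i.e., $\phi \after (\mu \tensor \mu) = \mu \after T\phi \after \phi$), which is supplied by the paper's definition of a monoidal $\V$-monad as a monad in $\MonVCat$; the coherence isomorphisms then transport along the free functor $\A \to \A_T$, which your construction makes strict monoidal since $\eta$ is a monoidal $\V$-natural transformation.
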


\section{Contextual Modality}\label{sec:contextual}

In this section, we apply the change-of-base semantics to generalized contextual modality.
Contextual modality allows more direct interpretation of quoted terms.

\subsection{Contextual Modal Type Theories}

Contextual modal type theories are type theories internalizing hypothetical judgments.
Nanevski et al.\, introduced contextual modality~\cite{DBLP:journals/tocl/NanevskiPP08} in search of the logical foundation of meta-variables and explicit substitution.
Contextual modality is described as a generalization of modality: whereas ordinary modality asserts the proposition is true under no hypotheses, contextual modality permits assertions of propositional truth under any number of hypotheses.
We denote a contextual modal type with hypotheses $\Gamma$ and conclusion $A$ by $[\Gamma]A$.
Contextual modal types $[\cdot]A$ with no hypotheses may be written as $\Box A$.

While the original formulation in \cite{DBLP:journals/tocl/NanevskiPP08} was dual-context and based on the intuitionistic S4, there is another formulation of contextual modality on top of Fitch-style system and the intuitionistic K \cite{Murase17}.
In \cite{Murase17}, the (Quo) and (Unq) rules of {\lambdabox} are replaced with the following new rules.
\[
    \AXC{$\Delta; \Gamma \vdash^l M : B$}
    \RL{Quo}
    \UIC{$\Delta \vdash^{l+1} \quo^{\tuple{\dom(\Gamma)}} M : [\rg(\Gamma)]B$}
    \DisplayProof
    \quad
    \AXC{$\Delta \vdash^{l+1} M : [A_1,\cdots, A_n]B$}
    \AXC{$\Delta; \Gamma \vdash^l N_1 : A_1 \enskip \cdots \enskip \Delta; \Gamma \vdash^l N_n : A_n$}
    \RL{Unq}
    \BIC{$\Delta; \Gamma \vdash^l \unq_{\tuple{N_1,\cdots, N_n}} M : B$}
    \DisplayProof
\]
The calculus is called {\lambdabracket}.
Logically, contextual modalities of level $l+1$ allow us to refer meta-theoretical properties of the $l$-level theory.
Interestingly, the structural rules are reasoned in the logic itself.

\begin{theorem}
    The formalized structural rules are provable in {\lambdabracket}:
    $\vdash^{l+1} [\Gamma]A \to [\Gamma,B]A$, $\vdash^{l+1} [\Gamma,B,B]A \to [\Gamma,B]A$, $\vdash^{l+1} [\Gamma,B,B',\Gamma']A \to [\Gamma,B',B,\Gamma']A$, and $\vdash^{l+1} [\Gamma]B \to [\Gamma,B]A \to [\Gamma]A$.
\end{theorem}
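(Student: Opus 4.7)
The plan is to give explicit closed-term witnesses to each of the four implications, following a single schema: at level $l+1$, $\lambda$-abstract the hypothesis(es), descend to level $l$ via (Unq) with an appropriate tuple of substitution terms, and re-internalize at level $l+1$ via (Quo) with the desired list of binders. Writing $\Gamma = x_1{:}A_1,\ldots,x_n{:}A_n$ throughout, each derivation becomes a short instance of ``(Unq) down, (Quo) up''.

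For weakening, the witness
$\lambda z.\quo^{\tuple{x_1,\ldots,x_n,y}}\unq_{\tuple{x_1,\ldots,x_n}} z$
inhabits $[\Gamma]A \to [\Gamma,B]A$: the extra binder $y{:}B$ introduced by (Quo) simply does not appear in the body. Contraction is handled by a repeated entry in the (Unq)-tuple,
$\lambda z.\quo^{\tuple{x_1,\ldots,x_n,y}}\unq_{\tuple{x_1,\ldots,x_n,y,y}} z.$
Exchange follows the same schema: the (Unq)-tuple lists the variables in the order expected by the source $[\Gamma,B,B',\Gamma']A$, while the (Quo)-binders are listed in the permuted order of the target $[\Gamma,B',B,\Gamma']A$.

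Substitution requires a nested (Unq). Given $u : [\Gamma]B$ and $v : [\Gamma,B]A$ at level $l+1$, take
\[\lambda u.\lambda v.\quo^{\tuple{x_1,\ldots,x_n}}\unq_{\tuple{x_1,\ldots,x_n,\,\unq_{\tuple{x_1,\ldots,x_n}} u}} v.\]
The inner (Unq) produces an inhabitant of $B$ at level $l$ under $\Gamma$; the outer (Unq) then feeds this as the last argument to $v$ to obtain an inhabitant of $A$ at level $l$ under $\Gamma$; (Quo) finally re-internalizes the result under the $\Gamma$-binders.

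Each derivation is routine once the pattern is in view; the only care required is bookkeeping --- checking that each (Unq) supplies exactly as many arguments as the contextual modality demands, and that each argument type-checks at level $l$ in the intended context stack $\Delta;\Gamma$. The conceptual content of the theorem, which is really \emph{why} the proofs are so uniform, is that contextual modality already internalizes hypothetical judgments: the structural rules acting on the hypotheses of $[-]$ are definable by explicit substitution in the level-$l$ fragment and then re-packaged by (Quo), without invoking any external structural lemma.
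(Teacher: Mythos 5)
Your witnesses are correct: each term type-checks under the (Quo)/(Unq) rules of {\lambdabracket} (the only care needed, as you note, is that the (Unq)-tuple matches the arity and types of the source modality and the (Quo)-binders match the target context, which your terms do, including the nested (Unq) for cut). The paper states this theorem without giving a proof, and your explicit ``(Unq) down, (Quo) up'' terms are exactly the intended derivations, so there is nothing to compare beyond confirming the proposal is sound.
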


We replace the rewriting rules (and hence the equations) for boxes as well.
\begin{align*}
    \unq_{\tuple{N_1,\cdots,N_n}} \quo^{\tuple{x_1,\cdots,x_n}} M &\xrightarrow{\beta_{[\cdot]}} [N_1/x_1,\cdots,N_n/x_n]M & M &\xrightarrow{\eta_{[\cdot]}} \quo^{\tuple{x_1,\cdots,x_n}} \unq_{\tuple{x_1,\cdots,x_n}} M \text{ if $M : [A_1,\cdots,A_n]B$}
\end{align*}
Here, we implicitly introduced parallel substituion for contextual modalities, which we do not define here since it is beyond the scope of the paper.
Meanwhile, we claim that the desired fundamental properties of calculus all hold in {\lambdabracket}.

\begin{theorem}
    Subject reduction, strong normalization, and the Church-Rosser property hold for the $\beta$ rules.
\end{theorem}

\subsection{Semantics}

We can interpret terms of {\lambdabracket} in {\lambdabox} via syntactic translation, where quotation with variables is replaced with abstraction followed by quotation.
\[
    \quo^{\tuple{x_1,\cdots,x_n}} M \mapsto \quo\lambda x_1,\cdots,x_n.M
\]
Therefore cartesian infinitely enriched categories also become a model of {\lambdabracket} for free.
However, it should be noted that one can find a more direct and concise interpretation of {\lambdabracket} in the same model.
As remarked in Section~\ref{sec:semantics}, the interpretation of {\lambdabox} is forced to be awkward in order to make syntactic $\Box$ and normal functors coincide.
In the language of enriched categories, it sends {\lambdabox} judgments $\Gamma; \Gamma' \vdash^l A$ to morphisms of the form $\Gamma \to \A(1, [\Gamma', A])$, instead of $\Gamma \to \A(\Gamma', A)$.
This was mainly due to the fact that there is no syntactic construct exposing the contravariant part of (enriched) hom funtors.
The presence of contextual modalities enables us to avoid such workarounds and to implement precisely the naive idea presented in Section~\ref{sec:semantics}.
We can define the interpretation by identifying contextual modalities $[-]-$ and hom functors $\A(-,-)$, which are also identified with $\Box[-,-]$ via change of base.
Consequently, terms before and after quotation become exactly the same morphism in the semantics.
\[
    \den{\Gamma; x_1 : A_1,\cdots,x_n : A_n \vdash^l M : B} = \den{\Gamma \vdash^{l+1} \quo^{\tuple{x_1,\cdots,x_n}} M : [A_1,\cdots,A_n]B} = \den{\Gamma} \xrightarrow{M} \Box[{\textstyle \prod{\den{A_i}}},\den{B}]
\]

\begin{proposition}
    Cartesian infinitely enriched categories are sound and complete w.r.t.\ {\lambdabracket} with this $\den{\cdot}$.
\end{proposition}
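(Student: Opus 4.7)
The plan is to verify soundness and completeness in parallel to the earlier theorem for \lambdabox, exploiting the fact that under the new interpretation a quoted term and its body denote literally the same morphism, so the novel equational rules for contextual modalities collapse to standard calculations in a cartesian closed enriched category. As a sanity check, soundness could alternatively be derived from the earlier \lambdabox-result by pushing terms through the syntactic translation $\quo^{\tuple{x_i}}M \mapsto \quo\lambda x_1.\cdots.\lambda x_n.M$, but the direct argument below is cleaner and sets up completeness.

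For soundness, I first check well-definedness: the interpretation is defined by induction on the term, and the admissible structural rules ensure it is independent of derivational choices. I then verify each $\beta\eta$ axiom. The rules of the $\lambda^\to$ and product fragments reduce, at each level $l$, to the usual interpretation of the simply typed $\lambda$-calculus inside the cartesian closed category $\CC_l$, and are preserved as in the earlier theorem. The novel equations are $\beta_{[\cdot]}$ and $\eta_{[\cdot]}$. For $\beta_{[\cdot]}$, since $\den{\quo^{\tuple{x_1,\ldots,x_n}} M} = \den{M}$ by definition, the denotation of $\unq_{\tuple{N_1,\ldots,N_n}} \quo^{\tuple{x_i}} M$ unfolds to the composite of $\tuple{\den{N_1},\ldots,\den{N_n}}$ with the evaluation morphism of the internal hom $[\prod\den{A_i},\den{B}]$, which by a substitution lemma equals $\den{[N_1/x_1,\ldots,N_n/x_n]M}$. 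For $\eta_{[\cdot]}$, one uses the $\eta$ law of the internal hom together with the normality of $\Box_l$ to conclude $\den{\quo^{\tuple{x_i}} \unq_{\tuple{x_i}} M} = \den{M}$. The technical core is a parallel-substitution lemma at mixed levels, proved by mutual induction on the height of the context stack and the structure of the term.

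For completeness, I construct a term model along the lines of the \lambdabox{} proof. Let $\CC_l$ have types of level $l$ as objects and equivalence classes, under the \lambdabracket-theory, of single-hypothesis terms $x:A \vdash^l M : B$ as morphisms; composition and cartesian closed structure come from substitution, application and abstraction. The normal symmetric monoidal functor $\Box_l : \CC_l \to \CC_{l+1}$ sends $A$ to $\Box A$ on objects and a morphism $[x:A \vdash^l M : B]$ to $[y : \Box A \vdash^{l+1} \quo^{\tuple{}}[\unq_{\tuple{}}y/x]M : \Box B]$, with normality of the comparison forced by $\beta_{[\cdot]}$ and $\eta_{[\cdot]}$. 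Identifying $[A_1,\ldots,A_n]B$ with $\Box[\prod A_i,B]$ as the interpretation prescribes, the denotation $\den{-}$ in this term model is essentially the identity on terms, so $\den{M} = \den{N}$ forces $M = N$ in the theory.

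The main obstacle I anticipate is the bookkeeping around the simultaneous substitution built into $\beta_{[\cdot]}$: one must show that tupling into a product followed by the evaluation morphism of the internal hom agrees with iterated single substitution, both semantically and syntactically, and that this agreement is stable under the enrichment structure. Once this substitution lemma is established and the identification $[\Gamma]A = \Box[\prod\den{\Gamma},\den{A}]$ is shown to be coherent with change of base, the remaining verifications are routine repetitions of the cartesian-closed-category calculations already used for \lambdabox.
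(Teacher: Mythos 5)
Your proposal is correct and, in substance, follows the route the paper intends: the paper offers no explicit proof of this proposition, but its surrounding text points to exactly the two ingredients you use, namely the syntactic translation $\quo^{\tuple{x_i}}M \mapsto \quo\lambda x_1\cdots x_n.M$ into {\lambdabox} (giving models ``for free'' from the earlier soundness/completeness theorem) and the identification $[\Gamma]A \iso \Box[\prod\den{\Gamma},\den{A}]$ that makes $M$ and $\quo^{\tuple{x_i}}M$ the same morphism. Your completeness argument is a direct transcription of the paper's term-model construction for {\lambdabox}, with $\CC_l(A,B) = \{x:A\vdash^l M:B\}$ and $\Box_l$ given by $\quo^{\tuple{}}[\unq_{\tuple{}}y/x]M$, so there is no real divergence there. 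The one place where your direct treatment adds genuine value over the paper's ``for free'' remark is that completeness with respect to the \emph{new} $\den{\cdot}$ does not literally transfer along the translation unless one also checks that the direct and translated interpretations agree up to the coherent isomorphism $[A_1,\ldots,A_n]B \iso \Box[\prod A_i,B]$; your from-scratch term model sidesteps that agreement lemma, at the cost of re-verifying soundness of $\beta_{[\cdot]}$ and $\eta_{[\cdot]}$ via the parallel-substitution lemma you rightly flag as the technical core. Two small points to tighten: (i) for the term model to be a \emph{cartesian} infinitely enriched category you should also note that each $\Box_l$ is finite-product-preserving, which follows from the provability of $[\Gamma](A\times B) \leftrightarrow [\Gamma]A\times[\Gamma]B$ and $[\Gamma]1\leftrightarrow 1$ in the theory; (ii) the soundness of $\eta_{[\cdot]}$ is really the $\eta$-law of the enriched exponential (the hom object $\Box[\prod A_i,B]$) rather than normality of $\Box_l$ per se --- normality is what makes the empty-hypothesis case $\vdash^l A$ iff $\vdash^{l+1}\Box A$ work.
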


Although {\lambdabracket} is semantically identified with {\lambdabox}, {\lambdabracket} is superior to {\lambdabox} in some practical cases.
Because {\lambdabracket} distinguishes judgmental validity $[\Gamma]A$ and their internal representations $\Gamma \to A$, we can exploit this to avoid unnecessary object-level applications to reduce runtime overhead.
For example, a quantum circuit description language \textsc{Proto-Quipper-M} has circuit types $\mathtt{Circ}(T,U)$ to represent a unit of computation with input $T$ and output $U$ in a non-higher-order way \cite{DBLP:journals/corr/RiosS17}.
The circuit types are in fact exactly the contextual modality in {\lambdabracket}, and indeed their semantics is given in terms of the $\Set$-enrichment structure of any symmetric monoidal category.
Similarly, we can easily generalize {\lambdabox} to model non-higher-order computation by removing the implicational structure in the object-level logic.
Reversible computation is a notable example; its model, the category $\PInj$ of all sets and partial injections, has the canonical dagger symmetric traced monoidal structure but no closed structure \cite{DBLP:conf/birthday/Heunen13}.
It even makes sense to assume no structures (i.e., no products, abstractions, the unit element, nor multiple variables) in the object level at all.
With this modification, the two-level Fitch-style system for contextual modality is expected to serve with internal languages to any enriched category.

\begin{ack}
    The authors thank their colleague Hiroki Kobayashi for helpful discussions and comments to an early draft of this paper.
    They also want to acknowledge criticism and encouragement from members of SLACS 2017 and CSCAT 2018.
    This work is partially supported by JSPS KAKENHI Grant Number 18J21885.
\end{ack}


\clearpage
\appendix

\section{Monoidal Enriched Categories}\label{sec:mon-enr}

Defining higher-dimensional monoidal structures is hard work.
In \cite{Day97}, a definition of \emph{monoidal objects} is given in terms of \emph{Gray monoids}, which are strictified monoidal bicategories.
By the coherence theorem of tricategories \cite{Gordon95}, it is possible to define monoidal objects in any monoidal 2-category as well.
For simplicity, we restrict our attention to monoidal (strict) 2-categories in the following.
Let $\V$ be a symmetric monoidal category.

\begin{definition}
    A \emph{monoidal 2-category} is a 2-category $\A$ together with a 2-functor $\tensor : \A \times \A \to \A$ and a 0-cell $I \in \A$ such that there are coherent 2-natural isomorphisms $\alpha: (X \tensor Y) \tensor Z \iso X \tensor (Y \tensor Z)$, $\lambda : I \tensor X \iso X$, and $\rho : X \tensor I \iso X$.
    A monoidal 2-category is \emph{symmetric} if it is endowed with another coherent 2-natural isomorphism $\sigma : X \tensor Y \iso Y \tensor X$.
\end{definition}

\begin{example}
    The 2-category $\VCat$ of all $\V$-categories, $\V$-functors, and $\V$-natural transformations is a symmetric monoidal 2-category.
    The symmetric monoidal structure of $\VCat$ is induced by the tensor of $\V$.
    That is, for $\V$-categories $\A, \B$, $\A \tensor \B$ consists of a collection $\ob\A \times \ob\B$ of objects and hom objects $(\A \tensor \B)(\tuple{X, Y}, \tuple{X', Y'}) = \A(X, X') \tensor \B(Y, Y')$.
\end{example}

\begin{definition}
    Let $\A$ be a monoidal 2-category.
    A \emph{monoidal object}, or \emph{pseudomonoid}, in $\A$ is a 0-cell $A \in \A$ together with 1-cells $\mu : A \tensor A \to A$ and $\eta : I \to A$ and 2-cells $\alpha, \lambda, \rho$ subject to the usual coherence conditions.
    \[%
        \xymatrix{
            A^3 \ar[d]_{\mu1} \ar[r]^{1\mu} & A^2 \ar[d]^{\mu} \\
            A^2 \ar[r]_{\mu} \ar@{=>}[ur]^{\alpha} & A
        }
        \qquad
        \xymatrix{
            A \ar@{}[rd]^{}="a" \ar[rd]^(0.3){1} \ar[d]_{1\eta} \ar[r]^{\eta1} & A^2 \ar[d]^{\mu} \ar@{=>}"a"^{\rho} \\
            A^2 \ar@{=>}"a"^{\lambda} \ar[r]_{\mu} & A
        }
    \]
    Here we omitted tensor symbols and coherent natural isomorphisms of $\A$.
\end{definition}

A monoidal object in $\A$ is \emph{symmetric} if it has a 2-cell $\sigma : \mu \to \mu \after \sigma_\A$ satisfying the coherent condition of symmetry.

\begin{definition}
    A \emph{monoidal $\V$-category} is a monoidal object in $\VCat$.
    Similarly, we call a symmetric monoidal object in $\VCat$ \emph{symmetric monoidal $\V$-category}.
\end{definition}

Given a monoidal $\V$-category $\A$, its 1-cells $\mu$ and $\eta$ just specify a $\V$-functor $\tensor : \A \tensor \A \to \A$ and an object $I \in \A$.

\begin{definition}
    A \emph{monoidal $\V$-functor} between monoidal $\V$-categories $\A, \B$ is a $\V$-functor $F : \A \to \B$ together with $\V$-natural transformations $\tau : \mu_\B \after (F \tensor F) \to F \after \mu_\A$ and $\iota : \eta_\B \to F \after \eta_\A$ subject to the usual axioms.
    A \emph{monoidal $\V$-natural transformation} between monoidal $\V$-functors $F, G : \A \to \B$ is similarly defined to be a $\V$-natural transformation between $F$ and $G$ that respects the $\V$-monoidal structure.
\end{definition}

\emph{Symmetric monoidal $\V$-functor} between symmetric monoidal $\V$-categories is similarly defined.
Likewise in ordinary category theory, namely in the case of $\V = \Set$, monoidal $\V$-natural transformations are automatically symmetric monoidal $\V$-natural transformations.  

\begin{example}
    Monoidal objects coincide with monoidal categories when the ambient category is $\Cat$ (i.e., $\Set$-enriched).
    Also, monoidal 2-categories are exactly monoidal $\Cat$-categories, that is, monoidal objects in $\twoCat$.
    \emph{Monoidal 2-functor} and \emph{monoidal 2-natural transformation} are similarly defined in terms of monoidal structures in $\twoCat$.
\end{example}

Assume $\V$ is closed.
Then $\V$ can be canonically regarded as a $\V$-category (by taking $\V(X,Y) = [X,Y]$).

\begin{lemma}[\cite{Kelly05,Lucyshyn16}]
    $\V$ is a symmetric monoidal $\V$-category.
\end{lemma}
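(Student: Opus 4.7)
The plan is to build the symmetric monoidal $\V$-category structure on $\V$ by transporting its ordinary symmetric monoidal closed structure through the internal hom enrichment $\V(X,Y) = [X,Y]$. I need to exhibit the tensor as a $\V$-functor $\otimes : \V \otimes \V \to \V$, the unit $I$ as an object, and the associator, unitors, and symmetry as $\V$-natural isomorphisms, and then verify pentagon, triangle, and hexagon in the enriched sense. The guiding principle is that every piece of data and every axiom at the $\V$-enriched level is determined, via the adjunction $- \otimes Z \dashv [Z,-]$, by its corresponding ordinary data or axiom, which already holds in $\V$.

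First I would define the $\V$-functoriality of $\otimes$. On objects it is the ordinary tensor, and on hom objects the structure map
\[
    [X,X'] \otimes [Y,Y'] \longrightarrow [X \otimes Y,\, X' \otimes Y']
\]
is obtained by currying
\[
    [X,X'] \otimes [Y,Y'] \otimes X \otimes Y \;\xrightarrow{\sim}\; ([X,X'] \otimes X) \otimes ([Y,Y'] \otimes Y) \;\xrightarrow{\ev \otimes \ev}\; X' \otimes Y',
\]
using the symmetry of $\V$ to interleave the factors. Compatibility with the enriched composition and identities of $\V$ (given by internal composition $[Y,Z] \otimes [X,Y] \to [X,Z]$ and $\iota : I \to [X,X]$) is a routine currying calculation: when the relevant transposes are evaluated, both sides reduce to the same composite of $\ev$'s.

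Next, for each of $\alpha, \lambda, \rho, \sigma$, the underlying isomorphism already exists in $\V$ as an ordinary symmetric monoidal closed category. I would check $\V$-naturality by observing that a $\V$-natural transformation between $\V$-functors into $\V$ is determined by a family of morphisms satisfying a naturality square whose transpose under the $- \otimes Z \dashv [Z,-]$ adjunction is the ordinary naturality square of the same components; since the ordinary squares commute (by the ordinary symmetric monoidal structure of $\V$), so do their enriched transposes. The pentagon, triangle, and hexagon axioms in $\VCat$ likewise transpose to the ordinary pentagon, triangle, and hexagon for $\V$, which hold by assumption.

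The hard part is not conceptual but notational: laying out the structure map of $\otimes$ and the $\V$-naturality squares of $\alpha, \lambda, \rho, \sigma$ involves nested uses of $\ev$, internal composition, and the symmetry of $\otimes$, and one must be careful that the adjunction argument really does reduce each enriched diagram to its ordinary counterpart rather than to a variant twisted by extra symmetries. Once this bookkeeping is arranged — essentially by treating the whole construction as an application of the ``closed implies $\V$-enriched'' principle to the bifunctor $\otimes$ and the various monoidal natural isomorphisms — everything collapses to the corresponding unenriched facts, and the lemma follows. For a fully rigorous treatment one may cite \cite{Kelly05,Lucyshyn16}, where the argument is carried out in detail.
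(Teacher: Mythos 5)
Your proposal is correct and is essentially the argument the paper relies on: the paper gives no proof of its own but cites \cite{Kelly05,Lucyshyn16}, where exactly this autoenrichment construction --- currying $\ev \otimes \ev$ to enrich $\otimes$ on hom objects and transposing the coherence and naturality diagrams along $-\otimes Z \dashv [Z,-]$ --- is carried out in detail. Your own caveat about the transposition step is the right one to flag, since ordinary naturality does not imply $\V$-naturality in general, but for these canonical structure maps the verification goes through as you describe.
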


\begin{definition}
    A symmetric monoidal $\V$-category is \emph{closed} if it has $\V$-adjunctions $X \tensor (-) \dashv [X, -]$ for each $X$.
\end{definition}

\begin{lemma}
    $\V$ is a symmetric monoidal closed $\V$-category.
\end{lemma}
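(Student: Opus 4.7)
The plan is to build on the preceding lemma, which already gives the symmetric monoidal $\V$-category structure on $\V$, so that only the closedness condition needs to be verified: for each object $X$ of $\V$, one must exhibit a $\V$-adjunction $X \tensor (-) \dashv [X, -] : \V \to \V$. On the underlying ordinary category $\underlying{\V}$, this is just the defining tensor-hom adjunction witnessing that $\V$ is symmetric monoidal closed in the usual sense, so what is at stake is promoting the hom-set bijection
\[
    \V(X \tensor Y, Z) \iso \V(Y, [X, Z])
\]
to a $\V$-natural isomorphism of hom objects
\[
    [X \tensor Y, Z] \iso [Y, [X, Z]].
\]

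First, I would write down the internal currying morphism by transposing (inside $\V$) the evaluation $\ev : [X \tensor Y, Z] \tensor (X \tensor Y) \to Z$ to obtain a morphism $[X \tensor Y, Z] \to [Y, [X, Z]]$, and the internal uncurrying by composing two evaluations to produce $[Y, [X, Z]] \tensor (X \tensor Y) \to Z$ and then transposing back. A direct computation using the universal property of $[-, -]$ shows these are mutually inverse, yielding the desired isomorphism of hom objects.

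Next, I would check $\V$-naturality in all three variables $X$, $Y$, $Z$. The crucial observation is that currying and uncurrying are built solely from the symmetric monoidal closed structure of $\V$ (evaluation, the monoidal isomorphisms, and the closed structure's unit/counit), all of which are $\V$-natural by the preceding lemma that $\V$ is a symmetric monoidal $\V$-category. Consequently the required enriched naturality squares reduce to coherence diagrams in a symmetric monoidal closed category, which commute automatically. The triangle identities for the purported $\V$-adjunction likewise reduce to the ordinary triangle identities for $\otimes\dashv[-,-]$ on $\underlying{\V}$ interpreted internally.

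The main obstacle is bookkeeping: verifying $\V$-naturality at the enriched level requires carefully translating each external naturality square into an internal diagram involving evaluation morphisms and the associator/symmetry. Rather than grinding these out by hand, I would appeal to the standard treatment in Kelly (reference \cite{Kelly05}), where precisely this argument is carried out to show that any symmetric monoidal closed $\V$ is autoenriched to a symmetric monoidal closed $\V$-category; cf.\ also the recent presentation in \cite{Lucyshyn16}, which was already cited for the preceding lemma and can be invoked here for the closedness clause without loss.
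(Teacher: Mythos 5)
Your proposal is correct and matches the paper's intent: the paper states this lemma without proof, implicitly deferring (as with the preceding lemma) to the standard autoenrichment results in \cite{Kelly05} and \cite{Lucyshyn16}, which is precisely where your argument lands. Your sketch of enriching the tensor--hom adjunction via the internal currying isomorphism $[X \tensor Y, Z] \iso [Y,[X,Z]]$ and checking $\V$-naturality is the standard route carried out in those references.
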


\begin{lemma}
    For any monoidal $\V$-category $\A$, the covariant representable $\V$-functor $\A(I, -) : \A \to \V$ is monoidal.
    Moreover, if $\A$ is symmetric, $\A(I, -)$ is also symmetric.
\end{lemma}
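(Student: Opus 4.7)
The plan is to exhibit the representable $\V$-functor $\A(I,-) : \A \to \V$ as a monoidal $\V$-functor by constructing its coherence data $\tau, \iota$ directly from the pseudomonoid structure of $\A$ and then verifying the axioms. First, I would define the unit coherence $\iota : I \to \A(I,I)$ as the $\V$-category unit of $\A$ at the object $I$, i.e., the name of $\id_I$. For the multiplication coherence, observe that the $\V$-functor $\mu : \A \tensor \A \to \A$ induces, on hom-objects, a morphism
\[
    \A(I,X) \tensor \A(I,Y) = (\A \tensor \A)((I,I),(X,Y)) \to \A(I \tensor I, X \tensor Y)
\]
in $\V$. The pseudomonoid 2-cell $\lambda$ (equivalently $\rho$) evaluated at $I$ gives an invertible arrow $\lambda_I : I \tensor I \iso I$ in $\underlying{\A}$; I would then define the component $\tau_{X,Y}$ as the above morphism postcomposed with $\A(\lambda_I^{-1}, X \tensor Y)$. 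The $\V$-naturality of $\tau$ in $(X, Y)$ is immediate from $\V$-functoriality of $\mu$ and of $\A(-,-)$.

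Next, I would check the coherence axioms of a monoidal $\V$-functor. For the associativity pentagon, both composites from $\A(I,X) \tensor \A(I,Y) \tensor \A(I,Z)$ to $\A(I, X \tensor (Y \tensor Z))$ can be rewritten as the triple application of $\mu$ on hom-objects followed by reindexing against a chosen map $I \to I \tensor I \tensor I$ built from $\lambda_I^{-1}$; the two reindexing maps agree by the coherence 2-cell $\alpha$ of $\A$ together with the triangle identity relating $\alpha, \lambda, \rho$ at $I$. The left and right unit triangles collapse using the relation $\lambda_I = \rho_I$ at the unit object and the fact that $\iota = j_I$ is compatible with $\mu$ via the $\V$-category identity axioms. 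For the symmetric case, one must further verify $\tau_{X,Y} \after \sigma_{\A(I,X),\A(I,Y)} = \A(I, \sigma_{X,Y}) \after \tau_{Y,X}$; this reduces to the symmetry 2-cell of the pseudomonoid $\A$ combined with the fact that $\lambda_I^{-1}$ is stable under the braiding on $I \tensor I$ by coherence.

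The hard part is not conceptual but purely bookkeeping: correctly threading the pseudomonoid 2-cells $\alpha, \lambda, \rho, \sigma$ through each axiom, and keeping straight the distinction between 1-cells of $\VCat$ (themselves $\V$-functors such as $\mu$) and the morphisms they induce on hom-objects. A more slick route would observe that $\A(I,-)$ arises as the $\V$-functor represented by the unit 1-cell $\eta : I \to \A$ of the pseudomonoid, so its monoidality is inherited from the tautological monoidality of $\eta$; this perspective makes the symmetric case automatic as well.
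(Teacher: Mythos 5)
Your construction is correct: taking $\iota = j_I$ (the enriched identity element of $I$) and $\tau$ to be the hom-object action of the $\V$-functor $\tensor : \A \tensor \A \to \A$ followed by precomposition along $\lambda_I^{-1} : I \to I \tensor I$ (using $\lambda_I = \rho_I$) is exactly the standard lax monoidal structure on the covariant representable at the unit, and your reduction of the associativity, unit, and symmetry axioms to the pseudomonoid coherences of $\A$ instantiated at $I$ is the right bookkeeping. The paper states this lemma without proof, treating it as background from the enriched-category literature (Eilenberg--Kelly, Kelly, Lucyshyn-Wright), so there is no in-paper argument to compare against; your sketch, including the alternative observation that the structure is induced by the canonical comonoid structure on the unit object, matches the reference proofs.
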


\begin{example}
    The covariant representable functor $\V(I, -) : \V \to \Set$ is symmetric monoidal.
\end{example}

\begin{definition}
    $\V(I, -) : \V \to \Set$ induces via the change-of-base construction the forgetful 2-functor $\underlying{(-)} = \V(I, -)_{\ast} : \VCat \to \Cat$.
\end{definition}

\begin{lemma}
    There is a canonical isomorphism $\underlying{\V} \iso \V$.
\end{lemma}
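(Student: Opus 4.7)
The plan is to unpack the definitions on both sides and exhibit the isomorphism as an instance of the tensor/hom adjunction $-\tensor X \dashv [X,-]$, together with a left-unitor.

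First I would write down the hom-sets of $\underlying{\V}$. By the definition of the forgetful 2-functor $\underlying{(-)} = \V(I,-)_\ast$ as change of base along $\V(I,-) : \V \to \Set$, we have $\ob\underlying{\V} = \ob\V$ and
\[
\underlying{\V}(X,Y) \;=\; \V\bigl(I,\,\V(X,Y)\bigr) \;=\; \V(I,[X,Y]),
\]
where the second equality uses that $\V$, viewed as a $\V$-category, is built from the internal hom $[X,Y]$. Next I would invoke the closed structure: the adjunction $-\tensor X \dashv [X,-]$ furnishes a natural bijection $\V(I\tensor X, Y) \iso \V(I,[X,Y])$, and composing with the left unitor $\lambda_X : I\tensor X \iso X$ yields a natural bijection
\[
\Phi_{X,Y} \colon \V(X,Y) \;\iso\; \V(I,[X,Y]) \;=\; \underlying{\V}(X,Y).
\]

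Having the bijection on hom-sets, the remaining work is to verify that the family $\Phi_{X,Y}$ assembles into a functor, i.e.\ that it preserves identities and composition. For identities, I would check that the transpose of $\lambda_X$ under the adjunction coincides with the unit $j_X : I \to [X,X]$ of the $\V$-enriched category $\V$; this is one of the defining coherence conditions for $j_X$. For composition, the composition morphism $M_{X,Y,Z} : [Y,Z] \tensor [X,Y] \to [X,Z]$ of the $\V$-enriched $\V$ is, by definition, the transpose of
\[
\bigl([Y,Z]\tensor[X,Y]\bigr)\tensor X \xrightarrow{\;\cong\;} [Y,Z]\tensor\bigl([X,Y]\tensor X\bigr) \xrightarrow{1\tensor\ev}[Y,Z]\tensor Y \xrightarrow{\;\ev\;} Z,
\]
and one needs to show that under $\Phi$ this recovers ordinary composition in $\V$. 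This reduces to a diagram chase using naturality of $\lambda$, the triangle identities for the adjunction, and the compatibility of $\lambda$ with $\ev$.

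The whole argument is essentially a soft statement about the self-enrichment of a symmetric monoidal closed category: each ingredient (unit $j$, enriched composition $M$, hom objects $[X,Y]$) is defined precisely so that $\V(I,-)$ recovers the underlying set-enriched structure. The only step that requires genuine care is the composition check, which is the main obstacle, but it is a standard coherence verification and is carried out in Kelly's basic concepts reference already cited in the paper, so I would content myself with sketching the two key diagrams and referring the reader to \cite{Kelly05} for the full routine calculation. Naturality of $\Phi$ in both variables then gives a full and faithful identity-on-objects functor $\V \to \underlying{\V}$, which is thereby an isomorphism of ordinary categories.
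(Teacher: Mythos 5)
Your argument is correct and is precisely the standard one the paper leaves implicit for this lemma (it is stated without proof, as a known fact from \cite{Kelly05}): identifying $\underlying{\V}(X,Y)=\V(I,[X,Y])$ with $\V(X,Y)$ via the closed-structure adjunction and the left unitor, then checking compatibility with the unit $j$ and composition $M$ of the self-enrichment. Nothing to add.
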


\section{Formal Theory of Change-of-base}\label{sec:formal}

We review the results in the formal theory of change-of-base, mainly explored in \cite{Lucyshyn16} and \cite{Eilenberg66,Cruttwell08}.
For simplicity, we ignore the size issue of categories.

\begin{proposition}
    Given monoidal categories $\V$ and $\W$ and a monoidal functor $L : \V \to \W$, $L$ induces a 2-functor $L_{\ast} : \VCat \to \WCat$.
    More generally, change-of-base construction gives rise to a 2-functor ${(-)}_{\ast} : \MonCat \to \twoCat$.
\end{proposition}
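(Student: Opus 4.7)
The plan is to construct the 2-functor $L_\ast : \VCat \to \WCat$ explicitly at each dimension (objects, 1-cells, 2-cells), verify the required $\W$-enrichment axioms using the monoidal structure of $L$, and then upgrade the construction to a 2-functor ${(-)}_\ast : \MonCat \to \twoCat$ by defining the action on monoidal natural transformations. On objects, $L_\ast$ is given by the preceding proposition. For a $\V$-functor $F : \A \to \B$, I would let $L_\ast F$ agree with $F$ on objects and act on hom objects as $L(F_{X,Y}) : L\A(X,Y) \to L\B(FX, FY)$; the $\W$-functoriality diagrams follow by pasting the coherence cells $\tau : L(-) \tensor_\W L(-) \to L(- \tensor_\V -)$ and $\iota : I_\W \to L(I_\V)$ of $L$ with the $\V$-functoriality of $F$, using naturality of $\tau$. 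For a $\V$-natural transformation $\alpha : F \Rightarrow G$, I would define the components of $L_\ast \alpha$ by $I_\W \xrightarrow{\iota} L(I_\V) \xrightarrow{L\alpha_X} L\B(FX, GX)$; the $\W$-naturality square is obtained by precomposing the $\V$-naturality square of $\alpha$ with $\iota$ and invoking the monoidality axioms for $\tau$ and $\iota$.

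Preservation of identity $\V$-functors is immediate, and preservation of composition is strict: $(L_\ast(F' \after F))_{X,Y} = L(F'_{FX,FY} \after F_{X,Y}) = L(F'_{FX,FY}) \after L(F_{X,Y}) = (L_\ast F' \after L_\ast F)_{X,Y}$ by functoriality of $L$. Vertical and horizontal composition of $\V$-natural transformations unwind similarly once the definition of the 2-cell $L_\ast \alpha$ is substituted, giving strict 2-functoriality of $L_\ast$.

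For the second part, given a monoidal natural transformation $\phi : L \Rightarrow L'$, I would define $\phi_\ast : L_\ast \Rightarrow L'_\ast$ by declaring ${(\phi_\ast)}_\A : L_\ast \A \to L'_\ast \A$ to be the identity on objects and to act as $\phi_{\A(X,Y)} : L\A(X,Y) \to L'\A(X,Y)$ on hom objects. Monoidality of $\phi$ guarantees that ${(\phi_\ast)}_\A$ respects the induced composition and identity, so it is a $\W$-functor. 2-naturality of $\phi_\ast$ in $\A$, and compatibility with $\V$-natural transformations between $\V$-functors $\A \to \B$, reduces to naturality of $\phi$ applied to the morphisms $F_{X,Y}$. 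The 2-functoriality conditions for ${(-)}_\ast$ then amount to the strict equalities $(\id_\V)_\ast = \id_{\VCat}$, $(L' \after L)_\ast = L'_\ast \after L_\ast$, and the analogous identities for vertical and horizontal composition of monoidal natural transformations; each holds by construction because change-of-base acts pointwise on hom objects by applying the enriching functors.

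The main obstacle is not conceptual but combinatorial bookkeeping: a sizable collection of hexagons and squares must be verified. Each diagram, however, decomposes into an instance of naturality of some coherence constraint (of $L$, $\tau$, $\iota$, or $\phi$) pasted with an axiom of either a $\V$-functor, a $\V$-natural transformation, or a monoidal (natural) transformation, so no new ingredients beyond the definitions themselves are required. I would organize the verification by treating the hom-object side of every structure first, then recovering the underlying object-level identities for free.
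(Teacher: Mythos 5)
Your construction is correct and is the standard one: the paper itself states this proposition without proof, as part of a review appendix deferring to the cited literature (Eilenberg--Kelly, Cruttwell, Lucyshyn-Wright), and what you write out --- $L_\ast F$ acting as $L(F_{X,Y})$ on homs, $L_\ast\alpha$ as $\iota$ followed by $L\alpha_X$, and $\phi_\ast$ acting by $\phi_{\A(X,Y)}$ on homs --- is exactly the construction those references give. The only nitpick is that the 2-cell compatibility of $\phi_\ast$ needs the unit axiom $\phi_{I_\V}\after\iota = \iota'$ of the monoidal natural transformation in addition to plain naturality of $\phi$, but you invoke monoidality of $\phi$ elsewhere, so this is within the bookkeeping you already flag.
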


\begin{theorem}
    $(-)_{\ast} : \MonCat \to \twoCat$ lifts to $(-)_{\ast} : \SMCat \to \SMtwoCat$ from the 2-category $\SMCat$ of all symmetric monoidal categories to the 2-category $\SMtwoCat$ of all symmetric monoidal 2-categories.
\end{theorem}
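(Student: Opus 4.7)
The plan is to lift the construction level by level, using the data on $\V$ to build the symmetric monoidal structure on $\VCat$ and to show that all operations preserve this extra structure. Since the monoidal version is already in hand, the only genuinely new data to supply is the symmetry and the verification that morphisms and 2-cells respect it.

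First, for each symmetric monoidal category $\V$, I would equip the monoidal 2-category $\VCat$ with a symmetry 2-cell. On objects, the swap $\sigma_{\A,\B} : \A \tensor \B \to \B \tensor \A$ sends $\tuple{X,Y} \mapsto \tuple{Y,X}$, and on hom objects it acts by the symmetry of $\V$, namely $\A(X,X') \tensor \B(Y,Y') \xrightarrow{\sigma_\V} \B(Y,Y') \tensor \A(X,X')$. Functoriality of $\sigma_{\A,\B}$ as a $\V$-functor reduces directly to naturality and hexagon coherence of $\sigma_\V$. The 2-naturality of the family $\sigma_{\A,\B}$ in $\A, \B$ and the symmetry hexagons in $\VCat$ both follow by pointwise reduction to the corresponding equations in $\V$.

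Second, for each symmetric monoidal functor $L : \V \to \W$, I would show that the induced monoidal 2-functor $L_\ast : \VCat \to \WCat$ is in fact symmetric monoidal. The only new coherence to check is that $L_\ast$ commutes with the swap up to the structural 2-cell. Using the explicit formula $L_\ast\A(X,Y) = L\A(X,Y)$, the required equation $L_\ast\sigma_{\A,\B} = \sigma_{L_\ast\A, L_\ast\B}$ (modulo the monoidal constraints of $L$) reduces to the single identity expressing that $L$ preserves $\sigma_\V$, i.e. the defining axiom of a symmetric monoidal functor. Likewise, for any monoidal natural transformation $\phi : L \to M$, the whiskered 2-natural transformation $\phi_\ast : L_\ast \to M_\ast$ is automatically symmetric monoidal, because in $\Cat$ every monoidal natural transformation between symmetric monoidal functors is symmetric; applied pointwise at each pair of objects this gives the needed coherence.

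Finally, it remains to observe that the assignment $\V \mapsto \VCat$, $L \mapsto L_\ast$, $\phi \mapsto \phi_\ast$ is 2-functorial in a way compatible with composition of symmetric monoidal functors and the already-proven monoidal lift, so that it defines a 2-functor ${(-)}_\ast : \SMCat \to \SMtwoCat$ whose composite with the forgetful $\SMtwoCat \to \twoCat$ recovers the monoidal version. The main obstacle is bookkeeping rather than substance: all pieces of structure reduce, under the formulas $(\A \tensor \B)(\tuple{X,Y},\tuple{X',Y'}) = \A(X,X') \tensor \B(Y,Y')$ and $L_\ast\A(X,Y) = L\A(X,Y)$, to their counterparts in $\V$ and $\W$, so each coherence axiom for a symmetric monoidal 2-category, a symmetric monoidal 2-functor, and a symmetric monoidal 2-natural transformation can be verified by unfolding definitions and invoking the corresponding axiom for $\V$, $\W$, $L$, or $\phi$. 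The delicate point is only to make sure the strict equalities such as $L_\ast(\A \tensor \B) = L_\ast\A \tensor L_\ast\B$ hold on the nose, which they do by construction, so no extra constraint cells need to be introduced beyond those already produced in the monoidal case.
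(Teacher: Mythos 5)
The paper offers no proof of this theorem; it is stated in the appendix as a review of results from \cite{Lucyshyn16,Eilenberg66,Cruttwell08}. Your direct verification is therefore not competing with an argument in the text but with the standard one in those references, and in outline it is that standard argument: put the pointwise symmetry $\sigma_{\A,\B}$ on $\A\tensor\B$ using $\sigma_\V$, check that $L_\ast$ respects it via the symmetry axiom of $L$, and note that monoidal natural transformations between symmetric monoidal functors are automatically symmetric (a fact the paper itself records in Appendix~\ref{sec:mon-enr}). One framing caveat: since the monoidal version you start from lands in $\twoCat$ rather than $\MontwoCat$, the lift must supply the entire pseudomonoid structure on $\VCat$ and the monoidal-2-functor structure on $L_\ast$, not just the symmetry; you do sketch this via the explicit hom-object formulas, so this is a slip of emphasis rather than a gap.

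The one substantive error is your closing claim that $L_\ast(\A\tensor\B)=L_\ast\A\tensor L_\ast\B$ holds on the nose, ``so no extra constraint cells need to be introduced.'' For a non-strict $L$ this is false: on hom objects the left-hand side is $L(\A(X,X')\tensor\B(Y,Y'))$ while the right-hand side is $L\A(X,X')\tensor L\B(Y,Y')$, and these agree only up to the monoidal constraint $\tau$ of $L$; likewise the unit $\V$-category is sent to one with hom object $LI$ rather than $I$. So $L_\ast$ is a strong (not strict) symmetric monoidal 2-functor whose constraint 1-cells are the identity-on-objects $\W$-functors induced by $\tau$ and $\iota$, and their coherence axioms still have to be verified (they reduce to the axioms for $L$, by the same pointwise unfolding you use elsewhere). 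This does not threaten the theorem, since $\SMtwoCat$ does not demand strictness, but the sentence as written is wrong, and those constraint cells are precisely the data that the cited treatments of change-of-base spend their effort organizing.
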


\begin{proposition}
    The operation sending a monoidal 2-category $\V$ to the 2-category of all monoidal objects in $\V$ forms a 2-functor $\Mon : \MontwoCat \to \twoCat$.
    Similarly, there is also a 2-functor $\SM : \SMtwoCat \to \twoCat$.
\end{proposition}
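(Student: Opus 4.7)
The plan is to unfold the definition of \emph{monoidal object in a monoidal 2-category} (already given in the excerpt) functorially in its ambient 2-category, and then to verify that the assignment extends to 1-cells and 2-cells of $\MontwoCat$ and satisfies the 2-functor axioms. First I would make the 2-category $\Mon(\V)$ explicit: 0-cells are pseudomonoids $(A,\mu_A,\eta_A,\alpha_A,\lambda_A,\rho_A)$ in $\V$; 1-cells $A\to B$ are triples $(f,\tau_f,\iota_f)$ where $f : A\to B$ is a 1-cell of $\V$, $\tau_f : \mu_B\after(f\tensor f) \to f\after \mu_A$ and $\iota_f : \eta_B \to f\after\eta_A$ are 2-cells of $\V$ satisfying the standard associativity/unit diagrams; and 2-cells are 2-cells of $\V$ compatible with $\tau$ and $\iota$. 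Vertical and horizontal composition are inherited pointwise from $\V$, and the coherence pastings required for composites of 1-cells are produced from $\tau$, $\iota$ and the monoidal coherence of $\V$; once this is spelled out, the interchange law for $\Mon(\V)$ follows from the interchange law of $\V$ after a routine diagram chase.

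Next I would define the action on 1-cells. Given a monoidal 2-functor $\Phi:\V\to\W$ with coherences $\chi_\Phi : \Phi(X)\tensor_\W \Phi(Y)\to \Phi(X\tensor_\V Y)$ and $\iota_\Phi : I_\W\to \Phi I_\V$, set
\[
    \Mon(\Phi)(A,\mu,\eta) = \bigl(\Phi A,\ \Phi\mu\after\chi_\Phi,\ \Phi\eta\after\iota_\Phi\bigr),
\]
whose associator, left and right unitors are obtained by whiskering $\Phi\alpha_A$, $\Phi\lambda_A$, $\Phi\rho_A$ with the coherence of $\Phi$. Because $\Phi$ respects those axioms strictly on 2-cells and only laxly on 1-cells, the resulting pseudomonoid axioms in $\W$ reduce, after pasting, to the axioms for $\Phi$ itself. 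On a monoidal 1-cell $(f,\tau_f,\iota_f)$, define $\Mon(\Phi)(f)$ to have underlying 1-cell $\Phi f$ and structure 2-cells obtained by whiskering $\Phi\tau_f$ and $\Phi\iota_f$ with $\chi_\Phi$ and $\iota_\Phi$; on monoidal 2-cells, $\Mon(\Phi)$ just applies $\Phi$. The action on 2-cells of $\MontwoCat$ (i.e.\ on monoidal 2-natural transformations $\beta:\Phi\Rightarrow\Psi$) is to assign to each pseudomonoid $A$ the component $\beta_A:\Phi A\to\Psi A$, equipped with the structure 2-cells coming from the monoidal naturality squares of $\beta$; this is itself a monoidal 1-cell, hence a 1-cell in $\Mon(\W)$.

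I would then verify the 2-functor axioms. Preservation of identity 1-cells and 2-cells is immediate from the strictness conventions on $\mathrm{id}_\V$ in $\MontwoCat$. Preservation of composition is the point where the main work sits: for monoidal 2-functors $\Phi,\Psi$, the coherences of $\Psi\after\Phi$ are defined as pastings of $\chi_\Psi$, $\chi_\Phi$, $\iota_\Psi$, $\iota_\Phi$, and one must show that applying these to a pseudomonoid $A$ yields exactly $\Mon(\Psi)(\Mon(\Phi)(A))$, and similarly for 1-cells between pseudomonoids. This is a diagram chase, and the cleanest way to handle it is to invoke the coherence theorem for monoidal bicategories of Gordon--Power--Street \cite{Gordon95}, which lets one assume $\V$ and $\W$ are Gray monoids and thereby eliminate the bracketing ambiguities; the remaining checks are strictly algebraic. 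The symmetric case $\SM:\SMtwoCat\to\twoCat$ proceeds identically, additionally carrying the symmetry 2-cell $\sigma:\mu\to\mu\after\sigma_\V$ and noting that symmetric monoidal 2-functors preserve the symmetry by definition.

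The main obstacle I anticipate is bookkeeping: the coherence diagrams for composites of monoidal 1-cells between pseudomonoids involve the coherence of $\Phi$ \emph{and} of $\V,\W$ simultaneously, and the axioms must be checked on pastings that become large. This is precisely where invoking coherence for tricategories short-circuits the combinatorics, reducing the whole proposition to the observation that the construction is given by pasting of natural operations and hence is functorial.
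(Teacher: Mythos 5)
The paper itself gives no proof of this proposition: it appears in the appendix as part of a review of the formal theory of change-of-base and is deferred to the literature (Day--Street for pseudomonoids, Lucyshyn-Wright, Eilenberg--Kelly). Your direct verification is therefore not competing with an argument in the paper but filling one in, and in outline it is the correct and standard one: describe $\Mon(\V)$ explicitly, push a pseudomonoid forward along a (lax) monoidal 2-functor $\Phi$ via $\Phi\mu\after\chi_\Phi$ and $\Phi\eta\after\iota_\Phi$, act on monoidal 2-natural transformations componentwise, and check strict functoriality from the fact that the coherences of $\Psi\after\Phi$ are by definition the pastings of those of $\Psi$ and $\Phi$. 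Two remarks on calibration. First, the appeal to Gordon--Power--Street coherence is heavier machinery than the paper's conventions require: the appendix explicitly restricts to \emph{strict} monoidal 2-categories whose coherences $\alpha,\lambda,\rho$ are 2-natural isomorphisms and whose monoidal 2-functors have strictly 2-natural $\chi_\Phi,\iota_\Phi$ satisfying equational (not 2-cellular) axioms; in that setting the pentagon/triangle for $\Mon(\Phi)(A)$ follow from whiskering $\Phi\alpha_A,\Phi\lambda_A,\Phi\rho_A$ and the \emph{equations} for $\chi_\Phi$, so the associator of $\Phi A$ is automatically invertible and no tricategorical coherence is needed. Second, for the codomain to be $\twoCat$ you should note explicitly that $\Mon(\Phi):\Mon(\V)\to\Mon(\W)$ is a \emph{strict} 2-functor, i.e.\ preserves composition of monoidal 1-cells on the nose; this holds precisely because $\chi_\Phi$ is strictly 2-natural and $\Phi$ is a strict 2-functor, and it is the one point where a genuinely lax or pseudo $\Phi$ would break the statement as phrased. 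With those qualifications your argument goes through, and the symmetric case is, as you say, identical with the extra 2-cell $\sigma$ carried along.
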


\begin{corollary}
    The composite $\SMCat \xrightarrow{(-)_{\ast}} \SMtwoCat \xrightarrow{\SM} \twoCat$ gives a 2-functor which sends a symmetric monoidal category $\V$ to the 2-category $\SMVCat$ of symmetric monoidal $\V$-categories.
    We also write this composite 2-functor $(-)_\ast$ by abuse of notation.
\end{corollary}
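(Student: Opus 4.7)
The plan is to observe that this corollary is purely an instance of composition of 2-functors, and the substantive content lies entirely in identifying what the composite does on 0-cells. First I would invoke the preceding theorem, which gives the 2-functor $(-)_\ast : \SMCat \to \SMtwoCat$ sending a symmetric monoidal category $\V$ to the 2-category $\VCat$ equipped with its canonical symmetric monoidal 2-category structure (tensor of $\V$-categories, as in the example from the appendix). Then I would invoke the preceding proposition, which supplies the 2-functor $\SM : \SMtwoCat \to \twoCat$ sending a symmetric monoidal 2-category to the 2-category of its symmetric monoidal objects, monoidal morphisms, and monoidal 2-cells.

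Composing the two 2-functors is itself a 2-functor (composition is defined pointwise at each dimension). To conclude, the only thing to verify is that the value of this composite at $\V$ is literally $\SMVCat$. Unwinding the appendix definitions, a 0-cell of $\SM(\V_\ast)$ is a symmetric monoidal object in $\VCat$, which is a symmetric monoidal $\V$-category by definition; a 1-cell is a symmetric monoidal $\V$-functor; and a 2-cell is a symmetric monoidal $\V$-natural transformation. These three classes assemble exactly into $\SMVCat$, so $\SM \after (-)_\ast$ agrees with $\V \mapsto \SMVCat$ on 0-cells, and its actions on 1-cells and 2-cells of $\SMCat$ are then forced by 2-functoriality.

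There is no real obstacle here, since both constituent 2-functors are already granted. The only mild subtlety is notational: the composite is christened $(-)_\ast$ by abuse of notation, so I would close by remarking that on objects of $\SMCat$ the underlying $\V$-category of $L_\ast \A$ produced by the composite agrees with the underlying $\V$-category of $L_\ast \A$ produced by the prior change-of-base 2-functor on $\VCat$, confirming that the reuse of notation is internally consistent. Thus the composite deserves to be called $(-)_\ast$ and carries the desired universal type $\SMCat \to \twoCat$ with the advertised action $\V \mapsto \SMVCat$.
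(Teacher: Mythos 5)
Your proposal is correct and matches the argument the paper intends: the corollary follows immediately by composing the 2-functor $(-)_\ast : \SMCat \to \SMtwoCat$ from the preceding theorem with $\SM : \SMtwoCat \to \twoCat$ from the preceding proposition, and then unwinding the definitions to see that the symmetric monoidal objects, morphisms, and 2-cells of $\VCat$ are precisely the symmetric monoidal $\V$-categories, $\V$-functors, and $\V$-natural transformations constituting $\SMVCat$. The paper offers no further proof beyond this composition, so there is nothing to add.
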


\begin{corollary}
    For any symmetric monoidal closed category $\V$ and symmetric monoidal $\V$-category $\A$, $\underlying{\A}$ is a symmetric monoidal category, and $\underlying{\A(I, -)} : \underlying{\A} \to \underlying{\V}$ is a symmetric monoidal functor.
\end{corollary}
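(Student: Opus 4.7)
The plan is to read off this corollary as a direct application of the 2-functoriality results established in the preceding propositions, specialized to the symmetric monoidal functor $\V(I,-) : \V \to \Set$. In other words, no new calculation is required: the entire content is packaged into the composite 2-functor $\SM \after (-)_\ast : \SMCat \to \twoCat$, and one simply evaluates it on the right arrow.

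First I would note that $\V(I,-) : \V \to \Set$ is symmetric monoidal (this is the earlier lemma about covariant representables of a symmetric monoidal closed $\V$), so it is a 1-cell in $\SMCat$. Applying the composite 2-functor $(-)_\ast : \SMCat \to \twoCat$ from the corollary above to this 1-cell yields a 2-functor
\[
    \V(I,-)_\ast : \SMVCat \longrightarrow \SM\Set\text{-}\catname{Cat} \;=\; \SMCat,
\]
which by definition is the functor $\underlying{(-)}$. Since 2-functors send objects to objects and 1-cells to 1-cells, evaluating this at the symmetric monoidal $\V$-category $\A$ gives $\underlying{\A} \in \SMCat$, which is exactly the first half of the statement.

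For the second half, I would apply the earlier lemma that for any (symmetric) monoidal $\V$-category $\A$ the covariant representable $\A(I,-) : \A \to \V$ is a (symmetric) monoidal $\V$-functor, i.e.\ a 1-cell $\A \to \V$ in $\SMVCat$. Applying the same 2-functor $\V(I,-)_\ast = \underlying{(-)}$ to this 1-cell yields a symmetric monoidal functor $\underlying{\A(I,-)} : \underlying{\A} \to \underlying{\V}$, and the codomain is identified with $\V$ via the canonical isomorphism $\underlying{\V} \iso \V$ of the preceding lemma. This is precisely what is claimed.

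The only point that requires any care, and the one I would flag as the real ``hard part'' underlying the whole chain, is the lifting $(-)_\ast : \SMCat \to \SMtwoCat$ of the change-of-base 2-functor to the symmetric monoidal setting, together with the 2-functoriality of $\SM$; both are stated earlier in the excerpt and appear in \cite{Lucyshyn16,Cruttwell08}, so here I would simply cite them. Once these are granted, the corollary is pure bookkeeping: it is the image under a composite 2-functor of the symmetric monoidal $\V$-functor $\A(I,-)$.
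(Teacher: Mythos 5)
Your proposal is correct and is essentially the argument the paper intends: the corollary is placed immediately after the results establishing the composite 2-functor $\SM \after (-)_\ast : \SMCat \to \twoCat$, and it follows by evaluating that 2-functor on the 1-cell $\V(I,-) : \V \to \Set$ (giving $\underlying{(-)} : \SMVCat \to \SMCat$, hence the first claim on objects) and then on the 1-cell $\A(I,-) : \A \to \V$ of $\SMVCat$ (hence the second claim), exactly as you describe. Your remarks correctly identify where closedness of $\V$ is used (to make $\A(I,-)$ a symmetric monoidal $\V$-functor into $\V$) and where the real work lives (the lifting of $(-)_\ast$ to $\SMCat \to \SMtwoCat$ and the 2-functoriality of $\SM$, both cited from the earlier statements).
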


\begin{lemma}
    $(-)_\ast : \SMCat \to \twoCat$ lifts to $(-)_\ast : \SMCCat \to \twoCat$, sending a symmetric monoidal closed category $\V$ to the full sub-2-category $\SMCVCat$ of $\SMVCat$ with all symmetric monoidal closed $\V$-categories.
\end{lemma}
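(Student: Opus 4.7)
The plan is to reduce the claim to the general fact that any 2-functor preserves adjunctions, once the monoidal action on $L_\ast\A$ has been identified with the 2-functorial image of the monoidal action on $\A$. Concretely, given a morphism $L : \V \to \W$ in $\SMCCat$ and a symmetric monoidal closed $\V$-category $\A$, the preceding corollary already gives that $L_\ast\A$ is a symmetric monoidal $\W$-category; the residual obligation is closedness, i.e.\ the existence, for every object $X$ of $L_\ast\A$, of a $\W$-adjunction $X\tensor(-)\dashv[X,-]$.

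First, I would note that the tensor $\W$-functor $\tensor_{L_\ast\A}:L_\ast\A\tensor L_\ast\A\to L_\ast\A$ is by construction the image under the 2-functor $L_\ast$ of the tensor $\V$-functor $\tensor_\A:\A\tensor\A\to\A$ (together with the oplax structure of $L_\ast$ on products of $\V$-categories, which is the point of Lemma~\ref{lem:cob-symmon-nat} applied to $\tau$). Consequently, for a fixed object $X$, the endo-$\W$-functor $X\tensor(-):L_\ast\A\to L_\ast\A$ coincides with $L_\ast(X\tensor_\A(-))$, where $X\tensor_\A(-)$ is the corresponding $\V$-endofunctor on $\A$.

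Next, because $\A$ is a symmetric monoidal closed $\V$-category, we have a $\V$-adjunction $X\tensor_\A(-)\dashv[X,-]_\A$ in $\VCat$ for every $X\in\A$. Now invoke the fundamental property that every 2-functor preserves adjunctions: applying $L_\ast:\VCat\to\WCat$ produces a $\W$-adjunction
\[
 L_\ast(X\tensor_\A(-))\;\dashv\;L_\ast([X,-]_\A)
\]
in $\WCat$, with unit and counit obtained by applying $L_\ast$ to those of the original adjunction. Using the identification above, the left adjoint is precisely $X\tensor(-):L_\ast\A\to L_\ast\A$, so $[X,-]_{L_\ast\A}:=L_\ast([X,-]_\A)$ furnishes the required right $\W$-adjoint. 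This exhibits $L_\ast\A$ as a symmetric monoidal closed $\W$-category. Functoriality in $L$ on 1-cells and 2-cells is inherited unchanged from the symmetric monoidal case, since closedness is a property (not extra structure) of the codomain, so there is nothing further to check at the level of morphisms; we have simply restricted along the full inclusion $\SMCCat\hookrightarrow\SMCat$ on objects and corestricted along $\SMCVCat\hookrightarrow\SMVCat$.

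The only delicate point, which I expect to be the main obstacle, is matching the monoidal data on $L_\ast\A$ with $L_\ast$ of the monoidal data on $\A$ strictly enough that the identification $X\tensor_{L_\ast\A}(-)=L_\ast(X\tensor_\A(-))$ holds on the nose rather than only up to a canonical $\W$-natural isomorphism built from $\tau$ and $\iota$. This is routine but requires unravelling the definitions of the induced $\W$-natural transformations $\tau_\ast,\iota_\ast$ supplied by Lemma~\ref{lem:cob-symmon-nat} applied to the comparison morphism of $L$; once done, the adjunction transport above yields the closed structure without further computation.
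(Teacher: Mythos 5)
Your argument is correct. Note first that the paper itself states this lemma without proof\,---\,it appears in the appendix as part of a review of results from Lucyshyn-Wright and Eilenberg--Kelly/Cruttwell\,---\,so there is no in-paper argument to compare against; what you have written is essentially the standard proof from that literature. The two load-bearing observations are the right ones: the preceding corollary already makes $L_\ast\A$ a symmetric monoidal $\W$-category, and since $L_\ast : \VCat \to \WCat$ is a 2-functor it carries the $\V$-adjunction $X \tensor_\A (-) \dashv [X,-]_\A$ to a $\W$-adjunction; and the reduction of 1-cell and 2-cell functoriality to the fact that closedness is a property, so that one merely restricts and corestricts along full sub-2-categories, is also correct. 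One simplification is available: the point you flag as the main obstacle, namely matching $X \tensor_{L_\ast\A}(-)$ with $L_\ast(X \tensor_\A(-))$ on the nose, is not actually needed, because having a right $\W$-adjoint is invariant under $\W$-natural isomorphism; it suffices that the two $\W$-functors agree up to the canonical isomorphism assembled from $\tau$ and $\iota$ (and on objects they agree strictly anyway, the comparison 1-cell $L_\ast\A \tensor L_\ast\A \to L_\ast(\A \tensor \A)$ being the identity on objects). A minor point of attribution: the coherence you invoke lives in the lifting of $(-)_\ast$ to $\SMCat \to \SMtwoCat$, i.e.\ the monoidal structure of the 2-functor $L_\ast$, rather than in the lemma on change of base along monoidal natural transformations; this does not affect the substance of your argument.
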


\begin{corollary}
    For any symmetric monoidal closed category $\V$ and symmetric monoidal closed $\V$-category $\A$, $\underlying{\A}$ is a symmetric monoidal closed category.
\end{corollary}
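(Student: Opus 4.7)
The plan is to descend the symmetric monoidal closed structure of $\A$ to its underlying category $\underlying{\A}$ by applying the change-of-base 2-functor $\underlying{(-)} = \V(I,-)_{\ast} : \VCat \to \Cat$ and exploiting the fact that 2-functors preserve adjunctions. First, the preceding corollary (on the symmetric monoidal case) already supplies that $\underlying{\A}$ is a symmetric monoidal category, with tensor $\tensor_0$ and unit $I_0$ obtained by applying $\underlying{(-)}$ to the $\V$-functors $\tensor : \A \tensor \A \to \A$ and $I : 1 \to \A$. So the task reduces to producing, for each object $X$ of $\underlying{\A}$, a right adjoint to $X \tensor_0 (-) : \underlying{\A} \to \underlying{\A}$ in $\Cat$.

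Since $\A$ is closed as a symmetric monoidal $\V$-category, we are given $\V$-adjunctions $X \tensor (-) \dashv [X, -]$ inside the 2-category $\VCat$. Adjunctions are a purely 2-categorical notion (unit, counit, and triangle identities), so any 2-functor preserves them; in particular the 2-functor $\underlying{(-)}$ does. Applying $\underlying{(-)}$ to the $\V$-adjunction $X \tensor (-) \dashv [X,-]$ yields an ordinary adjunction $\underlying{(X \tensor (-))} \dashv \underlying{[X,-]}$ in $\Cat$. To conclude that this is exactly an adjunction $X \tensor_0 (-) \dashv \underlying{[X,-]}$, we invoke monoidality of the forgetful 2-functor: the previously developed lift $(-)_\ast : \SMCCat \to \twoCat$ shows that $\underlying{(-)}$ is a symmetric monoidal 2-functor, hence it commutes (up to canonical isomorphism) with taking the left-tensor functor, so $\underlying{(X \tensor (-))} \iso X \tensor_0 \underlying{(-)}$.

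Combining these, $\underlying{\A}$ is a symmetric monoidal category whose left-tensor functors all admit right adjoints, which is exactly the definition of symmetric monoidal closedness. The main obstacle is simply bookkeeping: verifying that the symmetric monoidal structure on $\underlying{\A}$ arising from the previous corollary is the same one making the descended adjunctions into the closed structure. This reduces to the coherence of $\underlying{(-)}$ as a symmetric monoidal 2-functor, i.e., to the observation that change of base along $\V(I,-) : \V \to \Set$ preserves the relevant structural natural isomorphisms. Once that coherence is in hand, the closed structure of $\underlying{\A}$ is transported from $\A$ along $\underlying{(-)}$ with no further work.
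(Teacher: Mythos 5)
Your proof is correct and matches the paper's (implicit) argument: the paper obtains this corollary by applying the preceding lemma --- that $(-)_\ast$ lifts to $\SMCCat \to \twoCat$ --- to the symmetric monoidal functor $\V(I,-) : \V \to \Set$, and the content of that lemma is exactly what you spell out, namely that the 2-functor $\underlying{(-)} = \V(I,-)_{\ast}$ preserves the $\V$-adjunctions $X \tensor (-) \dashv [X,-]$ and, by monoidality, carries $X \tensor (-)$ to the left-tensor functor of the symmetric monoidal structure already obtained on $\underlying{\A}$. The only difference is packaging: the paper states the general change-of-base lemma and specializes it, whereas you transport the closed structure directly along $\underlying{(-)}$.
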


\begin{proposition}
    Given a symmetric monoidal closed $\V$-category $\A$, there is a strict symmetric monoidal $\V$-isomorphism ${\underlying{\A(I, -)}}_{\ast}\underlying{\A} \iso \A$.
    Furthermore, its underlying isomorphism $\underlying{({\underlying{\A(I, -)}}_{\ast}\underlying{\A})} \iso \underlying{\A}$ is the canonical isomorphism.
\end{proposition}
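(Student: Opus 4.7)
The plan is to establish the isomorphism objectwise and hom-objectwise, then verify enriched, monoidal, and underlying compatibility.

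\emph{Objects and hom objects.} Both $\V$-categories share the same class of objects, namely $\ob \A$, since change of base preserves objects and so does the underlying functor. To compute hom objects, unfold
\[
    ({\underlying{\A(I, -)}}_{\ast}\underlying{\A})(X, Y) = \underlying{\A(I, -)}\bigl([X, Y]_{\underlying{\A}}\bigr),
\]
where $[X, Y]_{\underlying{\A}}$ is the internal hom of $\underlying{\A}$ viewed as self-enriched. By the preceding corollary, $\underlying{\A}$ inherits its symmetric monoidal closed structure from $\A$, so $[X, Y]_{\underlying{\A}}$ is the underlying object of $[X, Y]_{\A}$. Hence the hom object is $\A(I, [X, Y]_{\A})$. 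The defining $\V$-adjunction $(-)\tensor X \dashv [X, -]$ in $\A$, evaluated at $I$, produces a canonical $\V$-isomorphism $\A(I, [X, Y]_{\A}) \iso \A(X, Y)$. This gives the desired bijection of objects and isomorphism of hom objects.

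\emph{Enriched and monoidal structure.} Composition and identities in ${\underlying{\A(I,-)}}_{\ast}\underlying{\A}$ are, by construction, the images of those of the self-enriched $\underlying{\A}$ under the monoidal functor $\underlying{\A(I,-)}$, with comparison cells $\tau, \iota$. A straightforward Yoneda-style diagram chase, using that $\A(I, -) : \A \to \V$ is a symmetric monoidal $\V$-functor, shows that these compositions and identities correspond, via the hom isomorphism above, precisely to the composition and identity of $\A$. Because the symmetric monoidal structure of both $\V$-categories traces back to the same tensor product and unit of $\A$, and the isomorphism we built is assembled solely from canonical adjunction isomorphisms, the resulting $\V$-isomorphism is \emph{strict} symmetric monoidal.

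\emph{Canonicity of the underlying iso.} By the earlier corollary, the monoidal functor $\underlying{\A(I,-)}$ induces a symmetric monoidal functor $\overline{\underlying{\A(I,-)}} : \underlying{\underlying{\A}} \to \underlying{({\underlying{\A(I,-)}}_{\ast}\underlying{\A})}$, and it is an isomorphism once we check that $\underlying{\A(I,-)}$ is normal. Normality is direct: its comparison $\underlying{\A}(I, X) \to \V(I, \A(I, X))$ unfolds to the identity of $\V(I, \A(I, X))$. Composing with the canonical $\underlying{\underlying{\A}} \iso \underlying{\A}$ gives a canonical iso which one verifies is precisely the underlying functor of the $\V$-isomorphism constructed above.

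The main obstacle I anticipate is the strictness clause: one must unpack the 2-functoriality of $(-)_\ast : \SMCCat \to \twoCat$ and of the self-enrichment of a closed category, and check that no coherence cells are inserted along the way. This is a patient but essentially mechanical verification, driven by the fact that the two adjunctions in play — the self-enrichment $\tensor \dashv [-, -]$ of $\A$ and the comparison data of $\A(I, -)$ — are constructed from the very same closed structure, so all relevant squares commute on the nose rather than only up to coherent iso.
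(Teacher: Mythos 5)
Your proposal is correct and follows the standard argument: the paper itself states this proposition without proof, deferring to Lucyshyn-Wright \cite{Lucyshyn16}, and the proof there proceeds exactly as you outline — identity on objects, hom objects computed as $\A(I,[X,Y])$ and transported to $\A(X,Y)$ via the closed-structure $\V$-adjunction evaluated at the unit, with normality of $\underlying{\A(I,-)}$ (the paper's Lemma~\ref{lem:down}) delivering the canonical underlying isomorphism. The only caveat is that the two steps you defer (compatibility of composition with the hom-object isomorphisms, and on-the-nose agreement of the tensor structure cells needed for strictness) are where the real diagram-chasing lives, but you have correctly identified both what must be checked and why it succeeds.
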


\begin{lemma}\label{lem:down}
    For any $\V$-category $\A$, $\underlying{\A(I, -)} : \underlying{\A} \to \underlying{\V}$ is normal.
\end{lemma}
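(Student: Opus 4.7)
The plan is to unfold the comparison morphism $\theta$ of $F := \underlying{\A(I,-)} : \underlying{\A} \to \underlying{\V}$ by hand and recognise it as the canonical isomorphism induced by $[I,-]\iso \mathrm{id}$ in the closed structure of $\V$. Since everything we need is already assembled in the appendix, the argument will be essentially a bookkeeping exercise once the right identifications are in place.

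First I would spell out both source and target of the component $\theta_X : \underlying{\A}(I,X) \to \underlying{\V}(I, FX)$. By the definition $\underlying{(-)} = \V(I,-)_{\ast}$ we have $\underlying{\A}(I,X) = \V(I,\A(I,X))$, and similarly $\underlying{\V}(I, FX) = \V(I,[I,\A(I,X)])$. In a symmetric monoidal closed category the left unit $\lambda : I \tensor Y \iso Y$ induces by adjunction a canonical isomorphism $\lambda^{-1} : Y \iso [I,Y]$, natural in $Y$. Hence after postcomposition with $\V(I,\lambda^{-1})$ the two hom-sets are canonically isomorphic. The content of the lemma is that $\theta_X$ is precisely this canonical iso.

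Second I would compute $\theta_X$ from its defining formula $\A(I,-)\xrightarrow{F}\B(FI,F-)\xrightarrow{\B(\iota,F-)}\B(I,F-)$, instantiated with $\A=\underlying{\A}$, $\B=\underlying{\V}$ and $F=\underlying{\A(I,-)}$. The monoidal unit $\iota$ for the representable $\A(I,-)$ is supplied by the identity element $j_I : I \to \A(I,I)$ of the $\V$-category $\A$; and the $\V$-functorial action of $\A(I,-)$ reduces, on morphisms $f : I \to X$ in $\underlying{\A} = \V(I,\A(I,-))$, to postcomposition with the enriched composition $\A(I,X)\tensor\A(I,I) \to \A(I,X)$. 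Using the $\V$-category axiom that composing with $j_I$ is the identity (modulo the unitor $\rho$ of $\V$), this whole composite collapses to $\V(I,\lambda^{-1}_{\A(I,X)})$, which is exactly the canonical iso identified in the previous step. I would draw the relevant square and chase one generic element $f : I \to \A(I,X)$ through it.

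Finally, since $\lambda$ is an isomorphism in any closed symmetric monoidal $\V$, so is $\V(I,\lambda^{-1})$, so $\theta_X$ is an iso for every $X$. This gives normality of $F$. The main obstacle I anticipate is the middle step: matching the abstract formula for $\theta$ against the explicit expression in terms of $j_I$ and $\lambda$. It requires carefully tracing how the monoidal structure $\iota$ of $\underlying{\A(I,-)}$ is induced from the $\V$-category data of $\A$, which is something Lucyshyn-Wright \cite{Lucyshyn16} records but whose details I would need to reproduce (or cite) in a cleanly verifiable form. Everything else is formal.
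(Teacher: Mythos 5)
The paper itself gives no proof of Lemma~\ref{lem:down}: it is stated as part of the review of the formal theory of change-of-base and implicitly deferred to \cite{Lucyshyn16,Eilenberg66,Cruttwell08}. Your direct computation is correct and is exactly the standard argument one would extract from those sources: the component $\theta_X$ unwinds to postcomposition with the composite $\A(I,X) \xrightarrow{\widetilde{M}} [\A(I,I),\A(I,X)] \xrightarrow{[j_I,1]} [I,\A(I,X)]$, which by the enriched unit axiom $M \after (1 \tensor j_I) = \rho$ is the transpose of the unitor, hence an isomorphism, so each $\theta_X$ is a bijection. Two small points to tighten. First, the lemma's phrasing ``any $\V$-category $\A$'' only parses if $\A$ is monoidal, since normality is defined for monoidal functors and your $\iota = j_I$ is precisely the unit constraint supplied by the earlier lemma that $\A(I,-)$ is monoidal; you should make that hypothesis explicit. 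Second, the canonical isomorphism $Y \iso [I,Y]$ arises as the transpose of the \emph{right} unitor $\rho : Y \tensor I \to Y$ under the adjunction $(-) \tensor I \dashv [I,-]$; obtaining it from $\lambda$ as you state requires a detour through the symmetry, and it is the $\rho$-form of the unit axiom, $M \after (1 \tensor j_I) = \rho$, that makes your middle square commute on the nose. With those adjustments the element chase closes, and your proof supplies a self-contained verification of a step the paper leaves to the literature.
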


\begin{proposition}
    Let $L : \V \to \W$ be a symmetric monoidal functor between monoidal categories $\V$ and $\W$.
    If $L$ is normal, there exists an isomorphism $\underlying{(L_{\ast}\V)} \iso \V$.
\end{proposition}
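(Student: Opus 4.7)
The plan is to build an isomorphism of ordinary categories $\Phi : \V \xrightarrow{\sim} \underlying{(L_\ast \V)}$ that is the identity on objects, using normality to translate morphisms. First I would unfold the two sides: regarding $\V$ as a $\V$-category via self-enrichment with internal hom $[-,-]$, the change-of-base construction yields $L_\ast \V(X,Y) = L[X,Y]$, so $\underlying{(L_\ast \V)}(X,Y) = \W(I, L[X,Y])$; on the source side, $\V(X,Y) \iso \V(I, [X,Y])$ by the closed-structure adjunction.

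Next I would invoke normality. The comparison morphism $\theta : \V(I, -) \Rightarrow \W(I, L-)$ has components that send $f$ to $Lf \after \iota$, where $\iota : I \to LI$ is the unit coherence of $L$. By definition, normality says every $\theta_X$ is a bijection; applying this at $X = [X,Y]$ and composing with the closed-structure iso gives the desired bijection on hom sets, which I take to be $\Phi_{X,Y}$.

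What remains is to show $\Phi$ is functorial. The identity case is a direct calculation: the unit of $X$ in the underlying category of $L_\ast \V$ is $L \widetilde{\id_X} \after \iota$, which is exactly $\theta_{[X,X]}$ applied to the name $\widetilde{\id_X}$. Compositionality is the main obstacle: it demands that $\Phi$ intertwine composition in $\V$ with the composition of $\underlying{(L_\ast \V)}$, which is the $\W$-enriched composition of $L_\ast \V$ transported through $\iota$, $L$, and the laxator of $L$. Rather than doing the diagram chase by hand, I would invoke Lemma \ref{lem:compar-symmon}: since $\theta$ is itself a symmetric monoidal natural transformation, it automatically intertwines the two enriched composition laws, forcing $\Phi$ to be strictly functorial. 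Equivalently, the whole statement can be read off as the specialization to $\A = \V$ of the corollary following Lemma \ref{lem:cob-symmon-nat}, which already asserts that $\overline{F} : \underlying{\A} \to \underlying{(F_\ast \A)}$ is an isomorphism whenever $F$ is normal; combined with the canonical identification $\underlying{\V} \iso \V$, this gives the result directly.
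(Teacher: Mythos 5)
Your argument is correct, and it is essentially the route the paper itself supports: the paper states this proposition without proof (it is part of the review of results from Lucyshyn-Wright), but your final observation is exactly right --- the statement is the specialization to $\A = \V$ (self-enriched, using closedness of $\V$) of the corollary following Lemma \ref{lem:cob-symmon-nat}, which the paper does prove and whose accompanying remark gives precisely your formula $f \mapsto Lf \after \iota$. The only place you wave your hands is compositionality of $\Phi$; monoidality of $\theta$ alone handles the tensoring of the two names $g \tensor f$, and you additionally need naturality of $\theta$ at the composition morphism of $\V$ together with the unit coherence $\tau_{I,I} \after (\iota \tensor \iota) = \iota \after \lambda$ of $L$ to match it against the composition of $L_\ast\V$, but these are exactly the ingredients packaged in the cited corollary, so nothing is missing in substance.
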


\begin{lemma}\label{lem:up}
    Let $F : \underlying{(L_{\ast}\V)} \iso \V$ be the isomorphism in the above proposition.
    There is a natural isomorphism $L \after F \iso \underlying{L_\ast \V(I,-)}$.
\end{lemma}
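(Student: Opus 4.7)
The plan is to produce an explicit natural isomorphism $\theta : L \after F \Rightarrow \underlying{L_\ast\V(I,-)}$ and then verify its naturality.

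First, I would compare the two functors on objects. Since $F : \underlying{(L_\ast\V)} \iso \V$ is the identity on objects, $L \after F$ sends $X$ to $LX$; on the other hand $\underlying{L_\ast\V(I,-)}$ sends $X$ to $L_\ast\V(I, X) = L[I, X]$. The canonical link between these is the left unitor $\lambda_X : [I, X] \iso X$ of the self-enrichment of $\V$, so I would take $\theta_X := L(\lambda_X^{-1}) : LX \to L[I, X]$ as the candidate component, an isomorphism since $L$ is a functor.

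Second, I would unfold both sides on a morphism $f \in \underlying{(L_\ast\V)}(X, Y) = \W(I_\W, L[X, Y])$. The isomorphism $F$ from the preceding proposition is the composite of (i) the inverse of the normality iso $\V(I_\V, [X, Y]) \iso \W(I_\W, L[X, Y])$ sending $g$ to $Lg \after j$, where $j : I_\W \iso LI_\V$ is the monoidal unit of $L$, and (ii) the closed transpose $\V(I_\V, [X, Y]) \iso \V(X, Y)$. Thus $F(f)$ is characterized by a unique $\tilde f : I_\V \to [X, Y]$ with $L\tilde f \after j = f$, and $F(f)$ is the closed transpose of $\tilde f$. On the other side, $\underlying{L_\ast\V(I,-)}(f)$ is the closed transpose of the composite $I_\W \xrightarrow{f} L[X, Y] \to [L[I, X], L[I, Y]]_\W$, in which the second arrow is the $\W$-action at $(X, Y)$ of the $\W$-enriched representable $L_\ast\V(I, -)$. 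By change of base, this second arrow is $L$ applied to the $\V$-action $[X, Y] \to [[I, X], [I, Y]]$ of $\V(I, -)$, followed by the closed-monoidal comparison $L[[I, X], [I, Y]] \to [L[I, X], L[I, Y]]_\W$ of $L$, which is itself the transpose of $L\ev \after \mu$.

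Third, the naturality square $\theta_Y \after L(Ff) = \underlying{L_\ast\V(I, -)}(f) \after \theta_X$ would then reduce to a coherence computation. Since the $\V$-action of the self-representable $\V(I, -)$ is the identity after identifying $[I, -]$ with $-$ via $\lambda$, the whole $\W$-action collapses (modulo the $L\lambda$'s absorbed into $\theta$) to the bare comparison $L[X, Y] \to [LX, LY]_\W$. Using naturality of the laxator $\mu$ of $L$ together with the unit-coherence law $L\lambda_X \after \mu_{I, X} \after (j \tensor 1_{LX}) = \lambda^\W_{LX}$, both sides of the square reduce, after precomposing with $\lambda^\W_{LX}$, to $L(F(f))$, which establishes the claim.

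The main obstacle I anticipate is purely bookkeeping: the underlying functor of a $\W$-functor, the change-of-base transport of a $\V$-functor structure, the normality iso of $L$, closed transposition in $\V$ and $\W$, and the monoidal laxator of $L$ must all be composed in the correct order before the standard coherence axioms apply. No genuinely new idea beyond these coherences is required, but arranging the diagram so that the cancellations go through cleanly is the delicate step.
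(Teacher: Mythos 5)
The paper states this lemma without proof (it is invoked only in the appendix's sketch of Theorem~\ref{thm:fund}), so there is no official argument to compare against; judged on its own, your plan is correct and is the natural one. The components $\theta_X = L(\lambda_X^{-1}) : LX \to L[I,X]$ are the right choice, and the naturality square does collapse, exactly as you say, to naturality of the laxator $\mu$ plus the unit coherence axiom $L\lambda_X \after \mu_{I,X} \after (j \tensor 1) = \lambda^{\W}_{LX}$, once one unfolds $F(f)$ as the transpose of the unique $\tilde f : I_\V \to [X,Y]$ with $L\tilde f \after j = f$ and unfolds the underlying action of the representable as $L[I,X] \iso I_\W \tensor L[I,X] \xrightarrow{f\tensor 1} L[X,Y]\tensor L[I,X] \xrightarrow{\mu} L([X,Y]\tensor[I,X]) \xrightarrow{L(\circ)} L[I,Y]$; substituting $f = L\tilde f \after j$ and pushing $L\tilde f$ through $\mu$ gives $L$ of the corresponding composite in $\V$, which is $L([I,F(f)]) = \theta_Y \after L(F(f)) \after \theta_X^{-1}$. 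Two small points to tighten. First, your claim that the $\W$-action of $(L_\ast\V)(I,-)$ factors as $L$ of the $\V$-action of $\V(I,-)$ followed by the closed comparison $L[[I,X],[I,Y]] \to [L[I,X],L[I,Y]]_\W$ is true but is itself a short transposition argument that should be recorded rather than attributed to ``change of base'' alone. Second, you call $j : I_\W \to L I_\V$ an isomorphism; normality in this paper means invertibility of the comparison $\V(I,-) \Rightarrow \W(I,L-)$, which does not immediately give invertibility of $j$ — fortunately your argument never uses it, so this is only a mislabel.
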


\begin{proof*}{Proof of Theorem \ref{thm:fund} (Sketch)}
    For the downward direction, take the underlying covariant hom functor $\underlying{\A(I,-)} : \underlying{\A} \to \V$.
    Perform change of base $L_\ast \A$ for the upward direction.
    Their equivalence follows from the above proposition and lemma.
\end{proof*}

\end{document}